\newtheorem{theorem}{Theorem}[section]
\newtheorem{lemma}[theorem]{Lemma}
\newtheorem*{theorem*}{Theorem}
\newtheorem*{proposition*}{Proposition}
\newtheorem*{lemma*}{Lemma}
\newtheorem*{corollary*}{Corollary}
\newtheorem{defi}{Definition}
\theoremstyle{definition}
\newtheorem*{definition}{Definition}
\newtheorem*{definitions}{Definitions}
\newtheorem*{data}{Data}
\newcommand{\tikzmath}[2][]
     {\vcenter{\hbox{\begin{tikzpicture}[#1]#2
                     \end{tikzpicture}}}
     }
\def\acts{\hspace{.1cm}{\setlength{\unitlength}{.30mm}\linethickness{.09mm}
                        \begin{picture}(8,8)(0,0)\qbezier(7,6)(4.5,8.3)(2,7)\qbezier(2,7)(-1.5,4)(2,1)\qbezier(2,1)(4.5,-.3)(7,2)
                                                 \qbezier(7,6)(6.1,7.5)(6.8,9)\qbezier(7,6)(5,6.1)(4.2,4.4)
                        \end{picture}\hspace{.1cm}}}
\def \A {\mathcal{A}}
\def \B {\mathcal{B}}
\def \C {\mathcal{C}}
\DeclareMathOperator* \anacont {anal. cont.}
\title{Three-tier CFTs from Frobenius algebras}
       \author{Andr{\'e} Henriques \\ \it Notes compiled by J. Lamers}
      \address{Mathematisch Instituut\\
               Universiteit Utrecht, Postbus 80.010\\
               3508 TA Utrecht, The Netherlands\\}
        \email{a.g.henriques@uu.nl}
\begin{document}

	\maketitle

	\begin{abstract}
	These are lecture notes of a course given at the Summer School on Topology and Field Theories held at the Centre for Mathematics of the University of Notre Dame, Indiana, from May 29 to June 2, 2012.
	
	The idea of extending quantum field theories to manifolds of lower dimension was first proposed by Dan Freed in the nineties. In the case of conformal field theory (\textsc{cft}), we are talking of an extension of the Atiyah-Segal axioms, where one replaces the bordism category of Riemann surfaces by a suitable bordism bicategory, whose objects are points, whose morphism are 1-manifolds, and whose 2-morphisms are pieces of Riemann surface.
	
	There is a beautiful classification of full (rational) \textsc{cft} due to Fuchs, Runkel and Schweigert, which roughly says the following. Fix a chiral algebra $A$ (= vertex algebra). Then the set of full \textsc{cft}s whose left and right chiral algebras agree with $A$ is classified by Frobenius algebras internal to $\text{Rep}\,(A)$. A famous example to which one can successfully apply this is the case where the chiral algebra $A$ is affine $\mathfrak{su}(2)$ at level $k$, for some $k\in\mathbb{N}$. In that case, the Frobenius algebras in $\text{Rep}\,(A)$ are classified by $A_n$, $D_n$, $E_6$, $E_7$, $E_8$, and so are the corresponding \textsc{cft}s.
	
	Recently, Kapustin and Saulina gave a conceptual interpretation of the FRS classification in terms of 3-dimensional Chern-Simons theory with defects. Those defects are also given by Frobenius algebra object in $\text{Rep}\,(A)$.
	Inspired by the proposal of Kapustin and Saulina, we will (partially) construct the three-tier \textsc{cft} associated to a Frobenius algebra object.
	\end{abstract}

\tableofcontents

\section{Introduction}

In these notes we define, and partially construct, extended conformal field theories starting from a so-called chiral conformal field theory, and a Frobenius algebra object.

The idea of \textsl{extended} field theory, which goes back to the work of Freed in the ninetees~\cite{Freed1993}, started in the context of topological field theory. There, it is an extension of Atiyah's definition of topological quantum field theory (\textsc{tqft})~\cite{Atiyah1989} where, instead of just assigning vector spaces to $(d-1)$-dimensional manifolds and linear maps to $d$-dimensional cobordisms, one also assigns data to manifolds of lower dimension, all the way down to points. Thus, the extended field theory consists of $d+1$~tiers.

Extended \textsl{conformal} field theories (\textsc{cft}s) were first proposed by Stolz and Teichner~\cite{ST2004}, in the context of their project of constructing elliptic cohomology, and then also mentioned in a review paper by Segal~\cite{Segal2007}. However, they did not provide any constructions of extended \textsc{cft}s. We will show that this can be done, at least to a great extent.

\subsection{Outline}

Let us briefly outline the content of these notes. In  Section~\ref{s:extended CFT} we introduce (full) \textsc{cft}\footnote{In this paper, ``\textsc{cft}'' will always refer to two-dimensional conformal field theory.} in the formalism of Graeme Segal, and define extended \textsc{cft}. The source and target bicategories of extended \textsc{cft} are discussed in some detail.

Section~\ref{s:chiral vs full} contains a discussion of chiral \textsc{cft}s. 
We introduce the two important ingredients of our construction: conformal nets, and Frobenius algebra objects.
We also recall some aspects of the construction of Fuchs, Runkel and Schweigert, which constructs a (non-extended) full \textsc{cft} from a chiral \textsc{cft} and a Frobenius algebra object in the associated category.

In Section~\ref{s:construction} we describe work in progress: the construction of an extended \textsc{cft}
from a conformal net, and a Frobenius algebra object in the representation category of the conformal net.
We finish by describing the main unsolved problem, namely the construction of the bimodule map that corresponds to a surface with four cusps (the `ninja star' in Figure~\ref{fig:ninja star}). 
If this could be done, this would complete the construction of the full \textsc{cft}.

\begin{figure}[h]
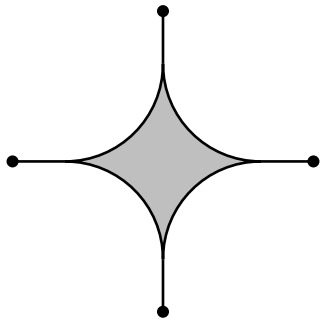

	\begin{center}
\tikz{\draw[line width=1] (1.3,0) -- (2,0)(0,1.3) -- (0,2)(-1.3,0) -- (-2,0)(0,-1.3) -- (0,-2);
\fill (2,0) circle (.08)(-2,0) circle (.08)(0,2) circle (.08)(0,-2) circle (.08);
\filldraw[line width=1, fill=gray!50]
(1.3,0) to[in=-90, out=180] (0,1.3) to[out=-90, in=0] (-1.3,0) to[in=90, out=0] (0,-1.3)
to[out=90, in=180] (1.3,0)
;}
	\end{center}\caption{A `ninja star' is a 2-surface with four cusps. Our main open problem is to construct the corresponding map of bimodules.}\label{fig:ninja star}
\end{figure}

\section{Extended conformal field theory}\label{s:extended CFT}

The definition of extended \textsc{cft} is an extension of Segal's definition of \textsc{cft}.
We start at the beginning, and introduce \textsc{cft} in Segal's formalism.
We will also discuss the notion of conformal welding, which is a necessary ingredient of the definition.

\subsection{Segal's definition of conformal field theory}\label{s:Segal CFT}

There are several (non-equivalent) ways to define conformal field theory. Although Segal's definition~\cite{Segal1988,Segal2004} is not the most mainstream one, it is the one that has become popular amongst mathematicians.

\begin{definition}[Segal]
A full\footnote{
There also exists another notion, called chiral \textsc{cft}.
This will be discussed in Section~\ref{s:chiral vs full}.
Until then, all \textsc{cft}s will be full \textsc{cft}s.} \textit{conformal field theory} is a symmetric monoidal functor from the category of conformal cobordisms, which consists of
	\[ \begin{cases} \ \text{objects: one-dimensional, compact, oriented, smooth manifolds;} \\ \ \text{morphisms: cobordisms equipped with a complex structure;} \\ \ \text{monoidal structure: taking disjoint unions;} \end{cases} \]
to the category of Hilbert spaces, with
	\[ \begin{cases} \ \text{objects: Hilbert spaces;} \\ \ \text{morphisms: bounded linear maps;} \\ \ \text{monoidal structure: usual tensor product of Hilbert spaces.} \end{cases} \]
\end{definition}

Let us take a closer look at the category of conformal cobordisms.
Its objects consist of possibly empty disjoint unions of oriented circles (no parametrizations), always with a smooth structure.
A \textsc{cft} maps a circle to a Hilbert space, referred to as the `state space' by physicists, and a diffeomorphisms between circles to a unitary isomorphism.
It then maps the disjoint unions of $k$ circles to the tensor product of the Hilbert spaces associated to the individual circles.
Finally, the empty manifold, which is the unit object for the monoidal structure, is sent to the trivial Hilbert space~$\mathbb{C}$,
which is the unit for the tensor product in the category of Hilbert spaces. 

The morphisms are Riemann surfaces with boundary.
Both the smooth structure and the complex structure extend all the way to the boundary of the cobordisms. 
Alternatively, one could take the complex structure to only be defined on the interior, 
and require that the cobordism be locally isomorphic to the upper half plane.
The orientations of the one-manifolds have to be compatible with those of the cobordisms connecting them:
if $\Sigma$ is a cobordism from $S$ to $S'$, then by definition there exists an orientation preserving diffeomorphism from the boundary $\partial\Sigma$ of $\Sigma$ to the disjoint union $S\amalg \overline{S'}$ of the `ingoing' manifold~$S$ and the `outgoing' $S'$ with orientation reversed. 

A \textsc{cft} sends cobordisms to linear maps between Hilbert spaces, the `propagator' or `correlator'. In particular, a closed cobordism $\Sigma$ between two empty manifolds is mapped to a linear map $\mathbb{C} \longrightarrow \mathbb{C}$. The latter is completely determined by a single complex number $Z(\Sigma)$, the `partition function' at the Riemann surface.
A priori, the category of conformal cobordisms does not come with identity morphisms: those need to be added by hand, and one can then think of them as infinitesimally thin cobordisms.
One also needs to include diffeomorphisms between 1-manifolds as degenerate cases of conformal cobordisms.
More interesting is to go one step further and allow for cobordisms that are partially thin, and partially thick, such as the one in Figure \ref{fig:part thin}.
\begin{figure}[h]
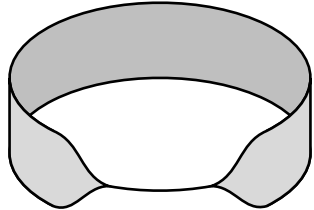

	\begin{center}
\tikz{
\coordinate (a) at ($(0,0)+(-45:2cm and 1cm)$);
\coordinate (b) at ($(0,0)+(225:2cm and 1cm)$);
\coordinate (c) at ($(0,1)+(-45:2cm and 1cm)$);
\coordinate (d) at ($(0,1)+(225:2cm and 1cm)$);
\coordinate (e) at ($(0,.5)+(-70:2cm and 1cm)$);
\coordinate (f) at ($(0,.5)+(-110:2cm and 1cm)$);
\fill[gray!50] (-2,0) arc (180:0:2cm and 1cm) -- +(0,1) arc (0:180:2cm and 1cm) -- cycle;
\draw[line width=1](0,0)+(-45:2cm and 1cm) arc (-45:180+45:2cm and 1cm)
(0,1)+(-45:2cm and 1cm) arc (-45:180+45:2cm and 1cm)
(0,.5)+(-70:2cm and 1cm) arc (-70:-110:2cm and 1cm);
\filldraw[line width=1, fill=gray!30]
(2,0) arc (0:-45:2cm and 1cm) to[out=195, in=0] (e) to[out=20, in=203] (c) arc (-45:0:2cm and 1cm) -- cycle
(-2,0) arc (180:225:2cm and 1cm) to[out=-20, in=180] (f) to[out=160, in=-23] (d) arc (225:180:2cm and 1cm) -- cycle;
}
\end{center}\caption{A partially thin annulus.}\label{fig:part thin}
\end{figure}

Finally, the operator a \textsc{cft} associates to a morphism in the bordism category should depend continuously on the morphism.
What this means is somewhat involved to explain, as it involves constructing a topology 
on the moduli space of all morphisms, encompassing honest bordisms and diffeomorphisms.
More precisely, the dependence should be real-analytic in the interior of the moduli space (honest bordisms), and continuous on the boundary (diffeomorphisms).

\subsubsection{Conformal welding}

\begin{figure}[h]
	\begin{center}
	\includegraphics[scale=.14]{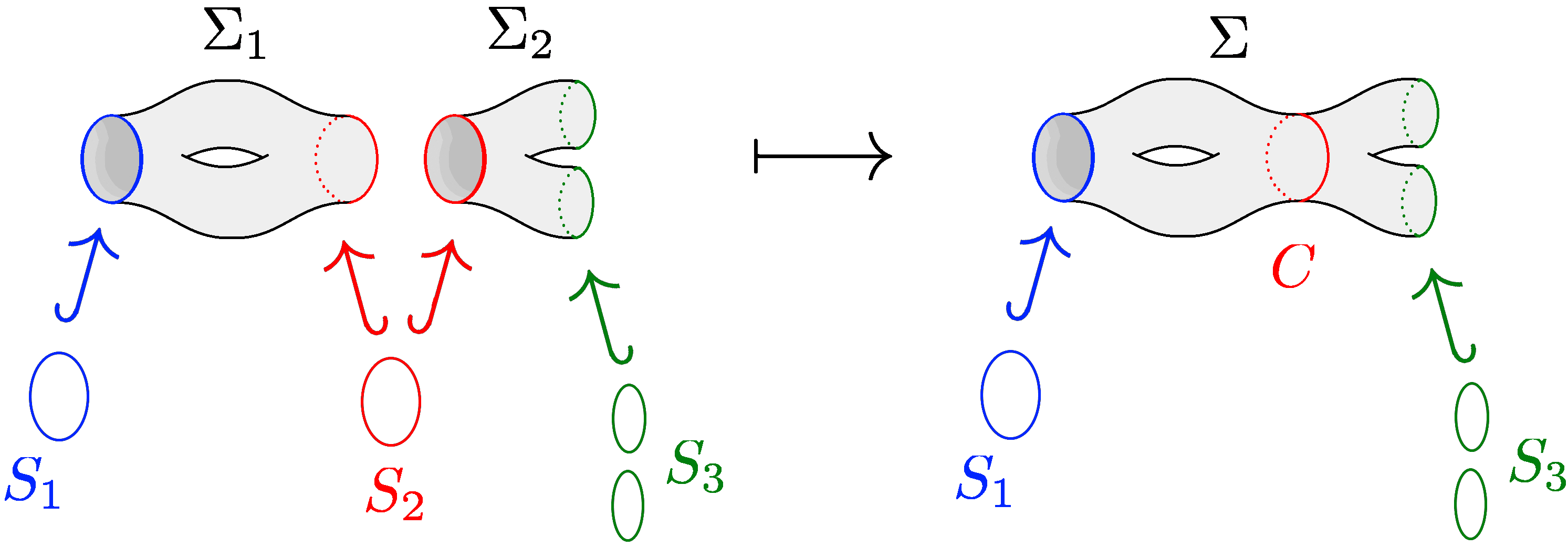}
	\end{center}\caption{Conformal welding of two composable cobordisms.}\label{fig:conf welding}
\end{figure}

The composition of conformal cobordisms is tricky and deserves special attention. 
We outline the procedure, which is called \textit{conformal welding}. 
Consider two composable cobordisms $\Sigma_1$ and $\Sigma_2$ as depicted in Figure~\ref{fig:conf welding}.
As topological spaces, $\Sigma_1$ and $\Sigma_2$ can be glued in the obvious way.
However, a priori, the composition $\Sigma$ is only equipped with a smooth structure \textsl{away} from the curve~$C$ along which $\Sigma_1$ and $\Sigma_2$ have been glued, and similarly for the complex structure.
These issues are resolved by the following theorem~\cite{Segal,RS2006}.

\begin{theorem}\label{thm:conf welding}
In the above situation, there exists a unique complex structure on the interior of the topological manifold~$\Sigma$ which is compatible with the given complex structures on $\Sigma_1$ and $\Sigma_2$. Moreover, the embedding of $C$ into $\Sigma$ is smooth.
\end{theorem}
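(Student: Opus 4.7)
The plan is to split the proof into a uniqueness argument and an existence argument, with existence reduced to the classical conformal welding of a smooth circle diffeomorphism.

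\textbf{Uniqueness.} Suppose $J$ and $J'$ are two complex structures on the interior $\Sigma^{\circ}$ satisfying the conditions of the theorem. The identity map $(\Sigma^{\circ},J)\to(\Sigma^{\circ},J')$ is biholomorphic away from $C$ and a homeomorphism across it, and $C$ is a smooth submanifold with respect to both structures. Applying Morera's theorem on small discs transverse to $C$ (using smoothness of $C$ on both sides to justify the boundary terms) upgrades the identity to a holomorphic map across $C$ as well, so $J=J'$.

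\textbf{Local reduction.} Existence is a local question on $C$. Pick $p\in C$ and invoke the existence of isothermal coordinates up to a smooth boundary — via, say, the Riemann mapping theorem for smooth Jordan domains, or by doubling and uniformisation — to choose biholomorphisms of collar neighbourhoods of $p$ in $\Sigma_{1}$ and $\Sigma_{2}$ onto the upper and lower half-discs $H_{\pm}\subset\mathbb{D}$, sending $p$ to $0$ and arcs of $C$ to a common real interval. The gluing datum becomes a smooth, orientation-preserving diffeomorphism $\phi$ of that interval, which we extend to a smooth diffeomorphism $\tilde{\phi}$ of $S^{1}$. The task reduces to realising the welded topological disc $H_{+}\cup_{\phi}H_{-}$ as an open subset of $\mathbb{C}$ in a way that restricts to the standard complex structure on each half and exhibits the identified interval as a smooth curve.

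\textbf{Welding and regularity.} The classical conformal welding theorem (Ahlfors--Beurling, see also \cite{RS2006}) now produces conformal maps $f_{+}\colon\mathbb{D}\to\mathbb{C}$ and $f_{-}\colon\widehat{\mathbb{C}}\setminus\overline{\mathbb{D}}\to\widehat{\mathbb{C}}$ extending continuously to $S^{1}$ with $f_{+}|_{S^{1}}=f_{-}\circ\tilde{\phi}|_{S^{1}}$, and with common image $\Gamma:=f_{+}(S^{1})=f_{-}(S^{1})$ a Jordan curve. Pulling back the standard complex structure on $\mathbb{C}$ through $f_{\pm}$ equips a neighbourhood of $p$ in $\Sigma^{\circ}$ with a compatible complex structure; patching over a finite cover of $C$ (consistency being ensured by the uniqueness already proved) yields the global complex structure on $\Sigma^{\circ}$. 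The main obstacle, and the technical heart of Theorem~\ref{thm:conf welding}, is showing that $\Gamma$ is smooth so that $C$ is smoothly embedded. This is obtained by constructing $f_{\pm}$ via the measurable Riemann mapping theorem applied to a smooth quasiconformal extension of $\tilde{\phi}$ — whose Beltrami coefficient is smooth up to $S^{1}$ on each side — and then invoking boundary regularity for the Beltrami equation to upgrade the continuous boundary extensions of $f_{\pm}$ to smooth ones, passing the regularity of $\tilde{\phi}$ to $\Gamma$.
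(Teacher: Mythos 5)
Your proposal takes essentially the same route as the paper: the paper likewise reduces, by locality and by the smooth-boundary Riemann mapping statement of Lemma~\ref{lem:conf welding}, to the welding of two standard discs along a smooth diffeomorphism of their boundaries, and then appeals to the classical welding theorem of \cite{Segal,RS2006} for that special case. The only difference is that you additionally sketch the proof of that classical step (quasiconformal extension of the circle diffeomorphism, the measurable Riemann mapping theorem, and boundary regularity for the Beltrami equation to get smoothness of the welding curve) and supply a Morera-type uniqueness argument, both of which the paper delegates entirely to its references; your additions are consistent with the cited proofs.
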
 
Note that the embedding $C \hookrightarrow \Sigma$ will typically not be analytic;
this already signals that the proof of the theorem will have to be rather involved.
A closely related result, which is needed in the proof that conformal welding is well defined, is~\cite{Bell1990}:

\begin{lemma}\label{lem:conf welding}
Let $D \subset \mathbb{C}$ be a connected, simply connected open subset of the complex plane,
and let us assume that the boundary of $D$ is smooth.
Let $D_0 \subset \mathbb{C}$ be the standard disc centered at the origin.
Then the map $D \longrightarrow D_0$ provided by the Riemann mapping theorem is smooth all the way to the boundary.
\end{lemma}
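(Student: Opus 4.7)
The plan is to reduce the statement to a local question near each boundary point, straighten the boundary using a smooth (but not holomorphic) chart, and then appeal to boundary regularity for the Beltrami equation that the Riemann map satisfies in these non-holomorphic coordinates. Write $f : D \to D_0$ for the Riemann map.

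First, I would upgrade $f$ to a map of closures. Because $\partial D$ is an embedded smooth curve it is in particular a Jordan curve, so Carath\'eodory's theorem supplies a homeomorphism $\bar f : \overline{D} \to \overline{D_0}$ extending $f$. Proving that $\bar f$ is $C^\infty$ is then a purely local problem, so fix a point $p \in \partial D$. By smoothness of $\partial D$ there exists a smooth diffeomorphism $\psi$ from a neighbourhood of $p$ in $\mathbb{C}$ onto a neighbourhood of $0$ in $\mathbb{C}$ that carries $\partial D$ to a segment of the real axis and $D$ to the upper half-plane.

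The composite $g := \bar f \circ \psi^{-1}$ is continuous on a closed half-disc around $0$ and smooth in its interior, but is no longer holomorphic because $\psi$ is not. Applying the chain rule to $f = g \circ \psi$ and using the Cauchy--Riemann equations for $f$ produces a Beltrami-type equation
\[ \partial_{\bar z} g \;=\; \mu(z)\, \partial_z g \]
on the upper half-disc, where $\mu$ is a smooth function built from the Wirtinger derivatives of $\psi^{-1}$ and satisfies $\|\mu\|_{\infty} < 1$, so the system is uniformly elliptic.

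The lemma then follows from boundary regularity for this first-order elliptic equation on a domain with smooth straight boundary. My preferred packaging of the argument rewrites the equation as a fixed-point problem using the Cauchy transform, which inverts $\partial_{\bar z}$ modulo holomorphic functions, and bootstraps $C^{k,\alpha}$ regularity of $g$ for every $k$ by iterated application of Calder\'on--Zygmund-type estimates; this is essentially Kellogg's theorem and its Warschawski extensions. The main obstacle is precisely this last step: interior regularity is automatic, but pushing it all the way to the real axis requires careful control of singular integrals near the boundary and is the heart of the theorem as proved in~\cite{Bell1990}. Once $g$ is known to be $C^\infty$ up to the real axis, transporting back through the smooth diffeomorphism $\psi$ shows that $\bar f$ is $C^\infty$ up to $\partial D$ near $p$, which is all that was needed.
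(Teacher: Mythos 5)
The paper itself does not prove this lemma: it is quoted as a known result (Painlev\'e--Kellogg--Warschawski smoothness of the Riemann map) with a pointer to \cite{Bell1990}, where the argument goes through the Bergman projection and the $\bar\partial$-problem (``Condition R''), not through the Beltrami equation. So your proposal is not competing with an in-paper proof; it is an attempt to supply one, and it takes a genuinely different route from the cited source.

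The set-up of your reduction is fine (Carath\'eodory extension, localisation at a boundary point, straightening $\partial D$ by a smooth chart, and the resulting uniformly elliptic equation $\partial_{\bar z}g=\mu\,\partial_z g$ with smooth $\mu$, $\|\mu\|_\infty<1$ near $p$). The gap is in the sentence ``the lemma then follows from boundary regularity for this first-order elliptic equation on a domain with smooth straight boundary.'' There is no such thing as boundary regularity for an elliptic equation \emph{without a boundary condition}: already in the model case $\mu=0$, a holomorphic function on the upper half-disc that extends continuously to the real segment need not be even H\"older there, so no amount of Calder\'on--Zygmund bootstrapping can start. The information that must be used --- and that your sketch never invokes --- is that $g$ maps the real segment into $\partial D_0$ (equivalently, after also straightening the target, that $\operatorname{Im}h=0$ on the real axis), together with injectivity of $g$ near the boundary. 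Exploiting this free-boundary/Riemann--Hilbert condition (via reflection, or via Warschawski's analysis of $\log f'$ and the conjugate-function operator on H\"older spaces) is precisely the content of Kellogg's theorem, so as written your final step restates the result to be proved rather than proving it. To repair the argument you would need either to carry out that boundary-condition analysis, or to switch to Bell's Bergman-projection proof, which avoids the issue by expressing $f'$ in terms of the Bergman kernel and proving that the Bergman projection preserves $C^\infty(\overline D)$ for smoothly bounded planar domains.
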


\noindent Since the problem in Theorem \ref{thm:conf welding} is local, one can reduce the general problem of conformal welding to the simpler situation of glueing two discs along a smooth identification of their boundaries.
Moreover, using Lemma~\ref{lem:conf welding},
one can further reduce the problem to that of glueing two standard discs along a smooth identification of their boundaries.
Theorem \ref{thm:conf welding} is therefore equivalent to the following special case of the theorem:
given two standard discs $D_0$ and $D_0'$ in $\mathbb{C}$, and a diffeomorphism~$\varphi$ between their boundaries, the resulting glued surface is a copy of the Riemann sphere $\mathbb{CP}^1$, along with a smoothly embedded curve in it, as shown in Figure~\ref{fig:glueing discs}.

\begin{figure}[h]
	\begin{center}
	\includegraphics[scale=.12]{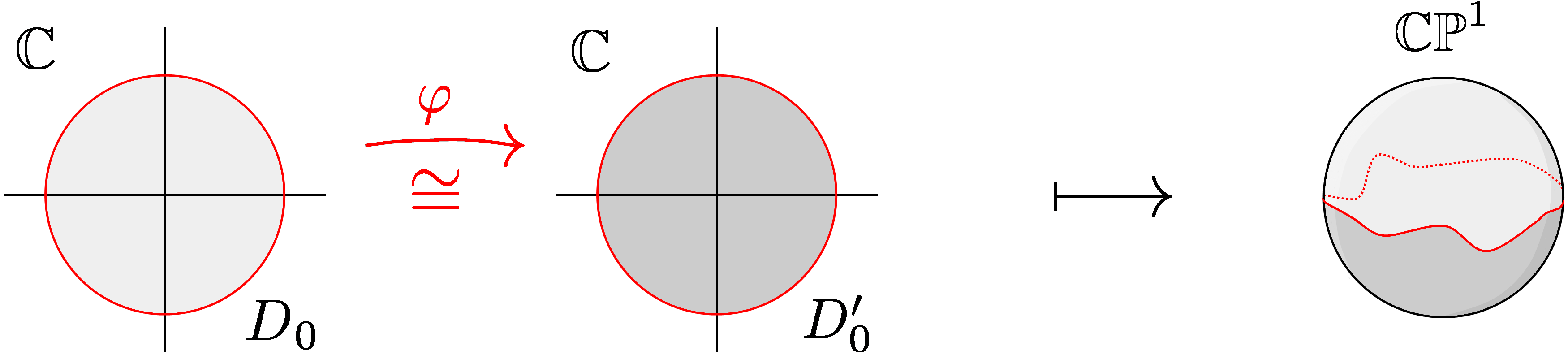}
	\end{center}\caption{Glueing two standard discs along their boundaries results in the Riemann sphere with a smoothly embedded curve.}\label{fig:glueing discs}
\end{figure}

In order to get an extended \textsc{cft}, both the source and target categories in Segal's definition of a \textsc{cft} are replaced by appropriate bicategories. An extended \textsc{cft} is then simply a symmetric monoidal functor between these bicategories. We first discuss the source bicategory.

\subsection{The source bicategory: conformal surfaces with cusps}\label{s:source cat}

Geometrically, an \textit{extended cobordism} is a cobordism, say $d$-dimensional, whose boundary comes in two pieces where each piece is viewed as $(d-1)$-dimensional cobordisms, and so on. In the case of \textsc{cft}s one is interested in~$d=2$, resulting in three tiers: zero-manifolds, one-manifolds, and two-manifolds.

The source category is therefore not a category but rather a bicategory, see \cite{Ben67} for background and definitions.
Before describing this bicategory in more detail, let us give the geometrical picture. Starting in dimension zero, we first have zero-dimensional, oriented manifolds: these are disjoint unions of points, each of which is labelled $+$ or~$-$ indicating the orientation. Moving up one dimension, we have cobordisms between the zero-dimensional manifolds.

\begin{figure}[h]
	\begin{center}
	\begin{subfigure}[b]{0.45\textwidth}
		\centering\includegraphics[scale=.10]{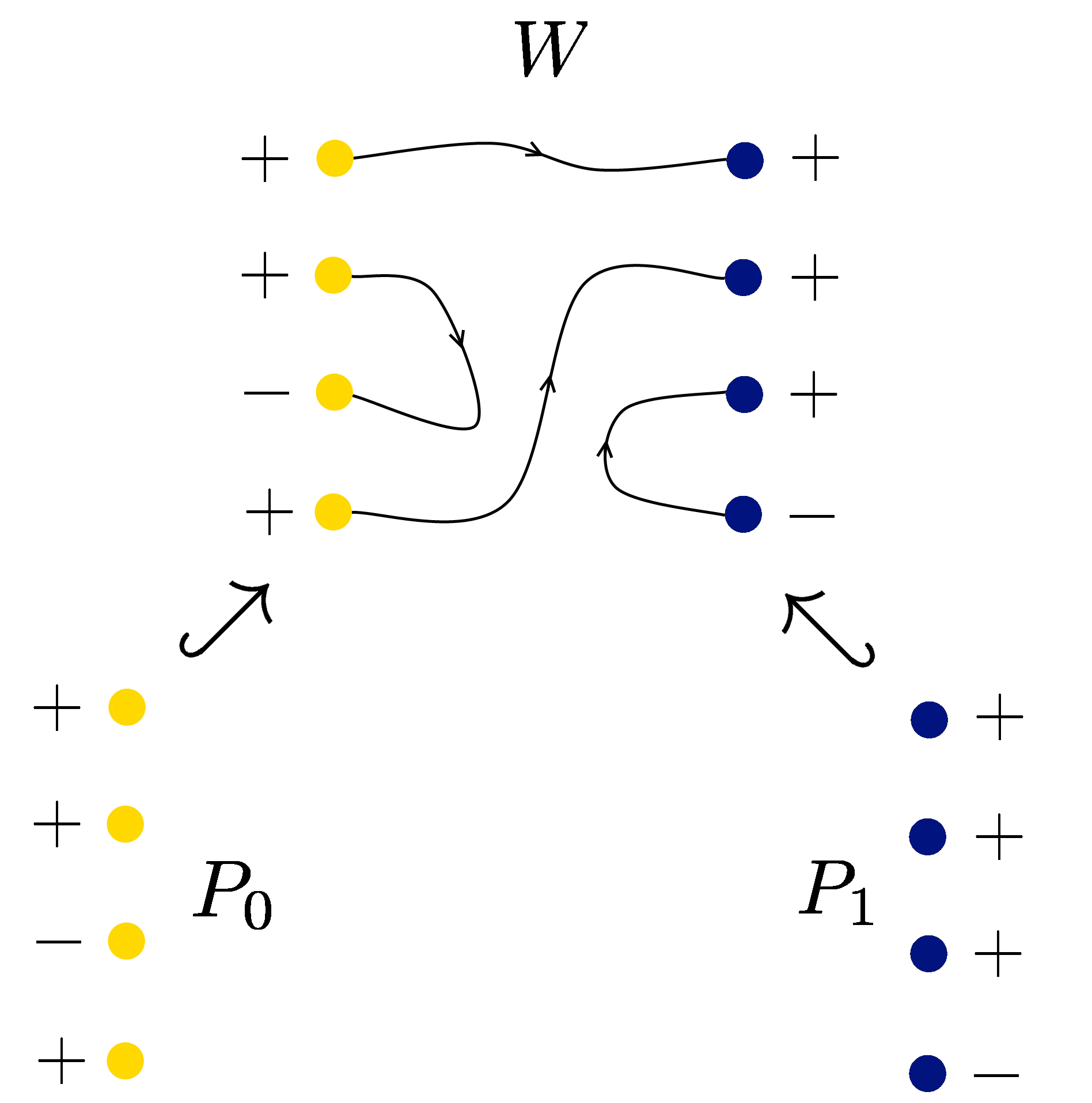}
		\caption{An example of a one-dimensional cobordism $W\colon P_0\longrightarrow P_1$.}\label{fig:1d cobordims position}
	\end{subfigure} \quad 
	\begin{subfigure}[b]{0.42\textwidth}
		\centering\includegraphics[scale=.10]{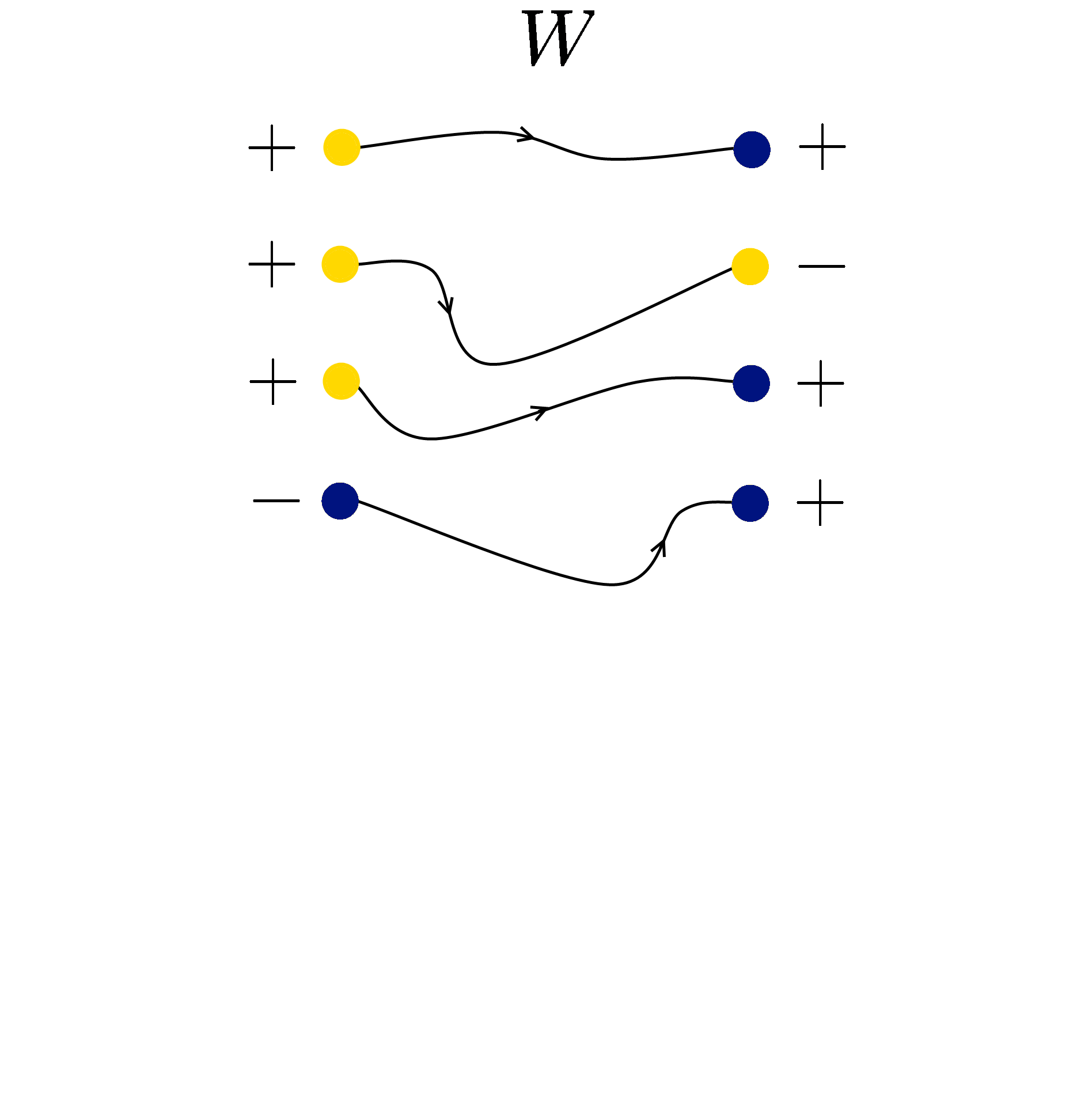}
		\caption{The same cobordism, depicted in a more convenient way.}\label{fig:1d cobordims color}
	\end{subfigure}
	\end{center}
	\caption{The figure on the left shows a one-dimensional cobordism. Notice that the orientation of the points is related to the orientation of the cobordisms connecting them, as required by the definition of a cobordism. The figure on the right shows the same cobordism displayed in way that will be more suitable for illustrations of two-dimensional cobordisms:
the incoming and outgoing zero-manifold are distinguished by their color, and the inclusion of the boundary manifolds is understood.}
\end{figure}

An example of such a one-dimensional cobordism shown in Figure~\ref{fig:1d cobordims position}, all `incoming' zero-manifolds are on the left, and all `outgoing' on the right. To facilitate drawing the more complicated examples below it is convenient to employ a different convention, and use colors to represent whether a zero-manifold is incoming or outgoing, respectively. With this convention, the example from Figure~\ref{fig:1d cobordims position} can also be represented as in Figure~\ref{fig:1d cobordims color}.

Circles, which form the objects of the source category of a non-extended \textsc{cft}, fit in the formalism of extended \textsc{cft} as closed cobordisms between empty zero-manifolds. They can be obtained from intervals by glueing. Going up one more dimension, cobordisms between such closed cobordisms are the conformal cobordisms that we encountered before: these are Riemann surfaces with boundary. However, now we also have two-dimensional cobordisms with cusps such as the examples in Figure~\ref{fig:2d cobordims a}.

\begin{figure}[h]
	\begin{center}
	\includegraphics[scale=.12]{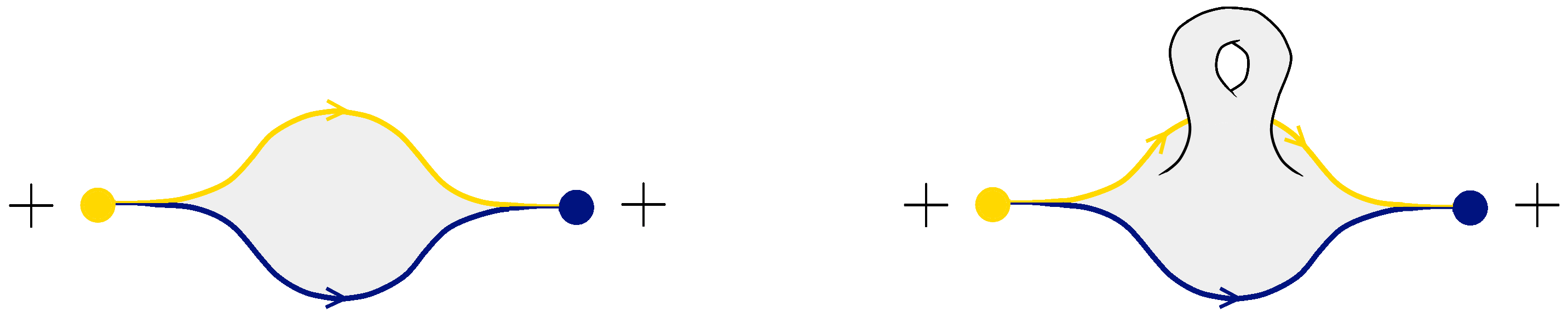}
	\end{center}\caption{Two examples of two-dimensional cobordisms with cusps. The cobordism on the right has a nontrivial topology. Like in Figure~\ref{fig:1d cobordims color}, the colors indicate which one-dimensional boundary is incoming and which is outgoing.}\label{fig:2d cobordims a}
\end{figure}

The source category is the bicategory of conformal surfaces with cusps, which is defined as follows. It has
\begin{itemize}
	\item objects: zero-dimensional, oriented manifolds $P$.
	\item 1-morphisms: one-dimensional cobordisms $P_0 \hookrightarrow W \hookleftarrow P_1$ with smooth structure, and collars $P_0 \times [\,0,\varepsilon) \longrightarrow W$ and $P_1 \times (-\varepsilon,0\,] \longrightarrow W$ parametrizing the ends.
	\item 2-morphisms: two-dimensional cobordisms $W_0 \hookrightarrow \Sigma \hookleftarrow W_1$ of cobordisms, with conformal structure in the interior $\Sigma \setminus (W_0 \cup W_1)$, and such that the diagrams
\[
\tikzmath{
\node(a) at (0,1.5) {$P_0 \times [\,0,\varepsilon)$};
\node(b) at (2,1.5) {$W_0$};
\node(c) at (0,0) {$W_1$};
\node(d) at (2,0) {$\Sigma$};
\draw[->] (a) -- (b);\draw[->] (a) -- (c);\draw[->] (b) -- (d);\draw[->] (c) -- (d);
} 
\qquad \text{and} \qquad 
\tikzmath{
\node(a) at (0,1.5) {$P_1 \times (-\varepsilon,0\,]$};
\node(b) at (2,1.5) {$W_0$};
\node(c) at (0,0) {$W_1$};
\node(d) at (2,0) {$\Sigma$};
\draw[->] (a) -- (b);\draw[->] (a) -- (c);\draw[->] (b) -- (d);\draw[->] (c) -- (d);
} 
\]
commute, after maybe shrinking $\varepsilon$.
Furthermore, $\Sigma$ should be locally isomorphic to one of the local models specified in Section~\ref{s:local models} below.
\end{itemize}
The two diagrams above say that the parametrizations of the one-dimensional cobordisms bounding the surface have to agree on neighbourhoods of their ends.
In particular, this forces the two-dimensional cobordism $\Sigma$ to be in fact one-dimensional near the zero-manifolds $P_0$ and $P_1$.
Figure~\ref{fig:2d cobordims b} shows an example of a 2-morphism in the category of conformal surfaces with cusps. Taking disjoint unions endows the category of conformal surfaces with cusps with a symmetric monoidal structure.

\begin{figure}[h]
	\begin{center}
	\includegraphics[scale=.14]{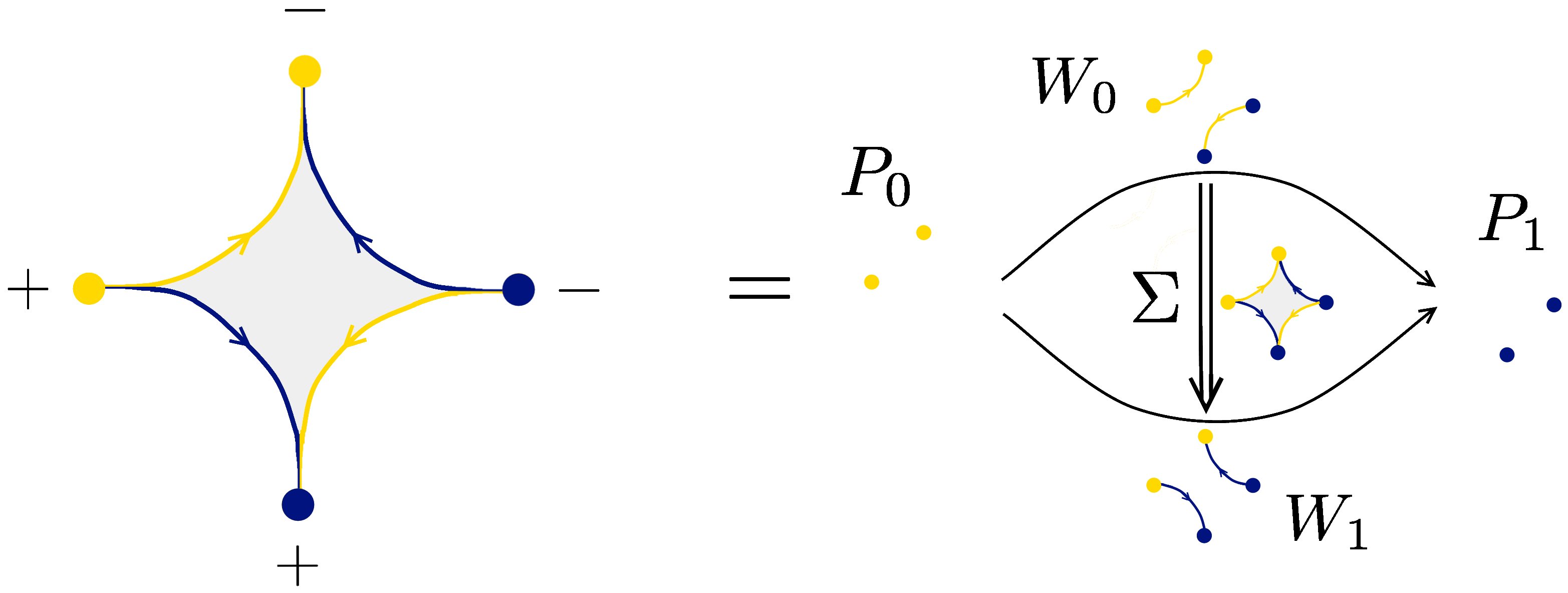}
	\end{center}\caption{An example of a conformal surface with cusps, and the corresponding 2-morphism $\Sigma\colon W_0 \longrightarrow W_1$.}\label{fig:2d cobordims b}
\end{figure}

\subsubsection{Local models}\label{s:local models}

The various manifolds comprising the bicategory of conformal surfaces admit local models.
Being a local model means that any point of such a manifold has a neighbourhood that looks the same as some open subset of the corresponding local model. For example, the local model of an object~$P$ is simply a point with a choice of orientation, and for a 1-morphism~$S$ it is the unit interval~$[0,1]$. Unlike for the case of 0-morphisms (objects) and 1-morphisms, where one can show that, locally, they must look like one of the local models, the case of 2-morphisms is different. For 2-morphisms, giving the list of allowed local models is part of the definition of what things we allow as 2-morphisms.

We can describe the local models for our 2-morphisms $\Sigma$ as follows. Let $f,g \in C^\infty\big([0,1],\mathbb{R}\big)$ be smooth functions on the unit interval, such that  $f\leq g$, and such that $f$ and $g$ are equal on neighbourhoods of $0$ and~$1$. Then a local model of $\Sigma$ is
\begin{equation}\label{eqn:local model for Sigma}
	\Sigma = \{ \, x + i \,y \mid f(x) \leq y \leq g(x) \, \} \ .
\end{equation}
In particular, since we require $f$ and $g$ to agree near the ends, the tips of $\Sigma$ are really one-dimensional cusps as depicted in Figure~\ref{fig:Sigma}. There are many different local models for the 2-morphisms, with different choices for $f$ and $g$ yielding varying degrees of `sharpness' for the cusps.

\begin{figure}[h]
	\begin{center}
	\includegraphics[scale=.12]{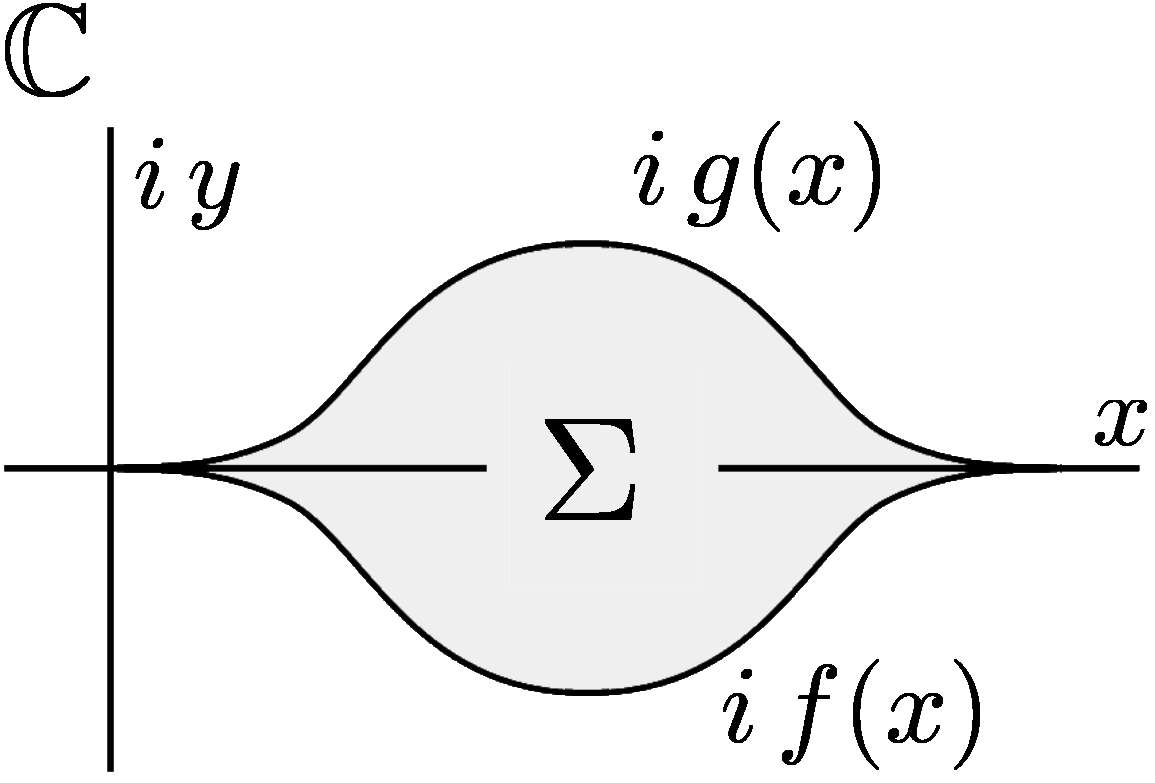}
	\end{center}\caption{A local model as described by equation~\eqref{eqn:local model for Sigma}. Different choices for $f,g\in C^\infty\big([0,1],\mathbb{R}\big)$ agreeing near the endpoints give rise to different degrees of sharpness for the cusps.}\label{fig:Sigma}
\end{figure}

Since the 1-morphisms correspond to collared one-manifolds, they can be composed by glueing. To see that the glued surfaces with cusps are again of the prescribed form, notice that our problem is local. Thus we may assume without loss of generality that the surfaces we want to glue are given by the local models. It is clear from Figure~\ref{fig:horizontal composition} that the horizontal composition is again of the form~\eqref{eqn:local model for Sigma}.

\begin{figure}[h]
	\begin{center}
	\includegraphics[scale=.10]{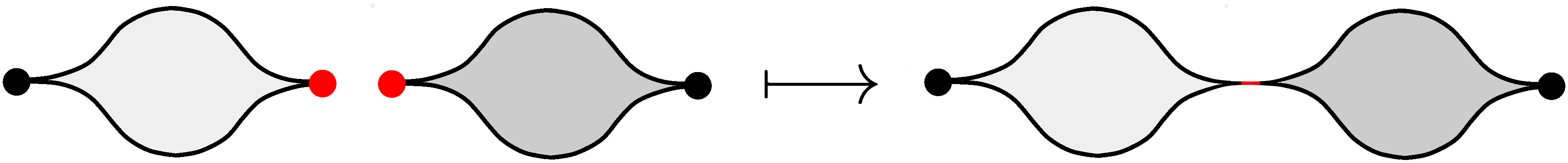}
	\end{center}\caption{Horizontal composition of two-dimensional local models.}\label{fig:horizontal composition}
\end{figure}

\begin{figure}[h]
	\begin{center}
	\includegraphics[scale=.10]{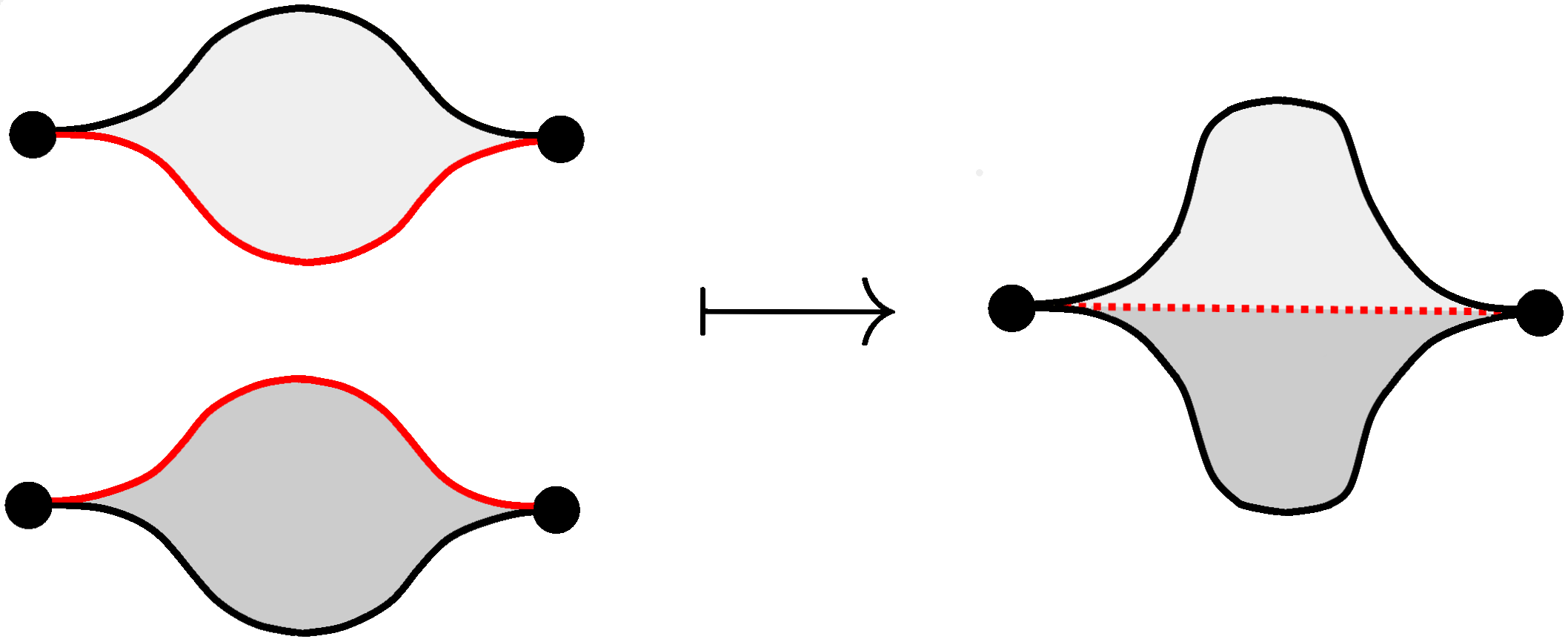}
	\end{center}\caption{Vertical composition of two local models for conformal surfaces with cusps. It requires some work to show that the result is a local model too.}\label{fig:vertical composition}
\end{figure}

The vertical composition of two 2-morphisms looks as shown in Figure~\ref{fig:vertical composition}.
We can use Theorem~\ref{thm:conf welding}, underlying conformal welding, to get a complex structure on the interior of the glued surface.
But a priori, it is not clear that the result is again one of our local models.
To show that, we will use Lemma~\ref{lem:conf welding}. First we get rid of the corners by embedding the two surfaces that we want to glue into discs with a smooth boundary, as shown in Figure~\ref{fig:vertical composition step 1}. 
Next, we extend the diffeomorphism of the boundaries of the surfaces that we want to identify to a diffeomorphism between the boundary circles, and glue. The result is depicted in Figure~\ref{fig:vertical composition step 2}. By Lemma~\ref{lem:conf welding}, everything is smoothly embedded in~$\mathbb{CP}^1$. This shows that the glued surface is again one of our allowed local models, and so it is again a 2-morphism.

\begin{figure}[h]
	\begin{center}
	\includegraphics[scale=.10]{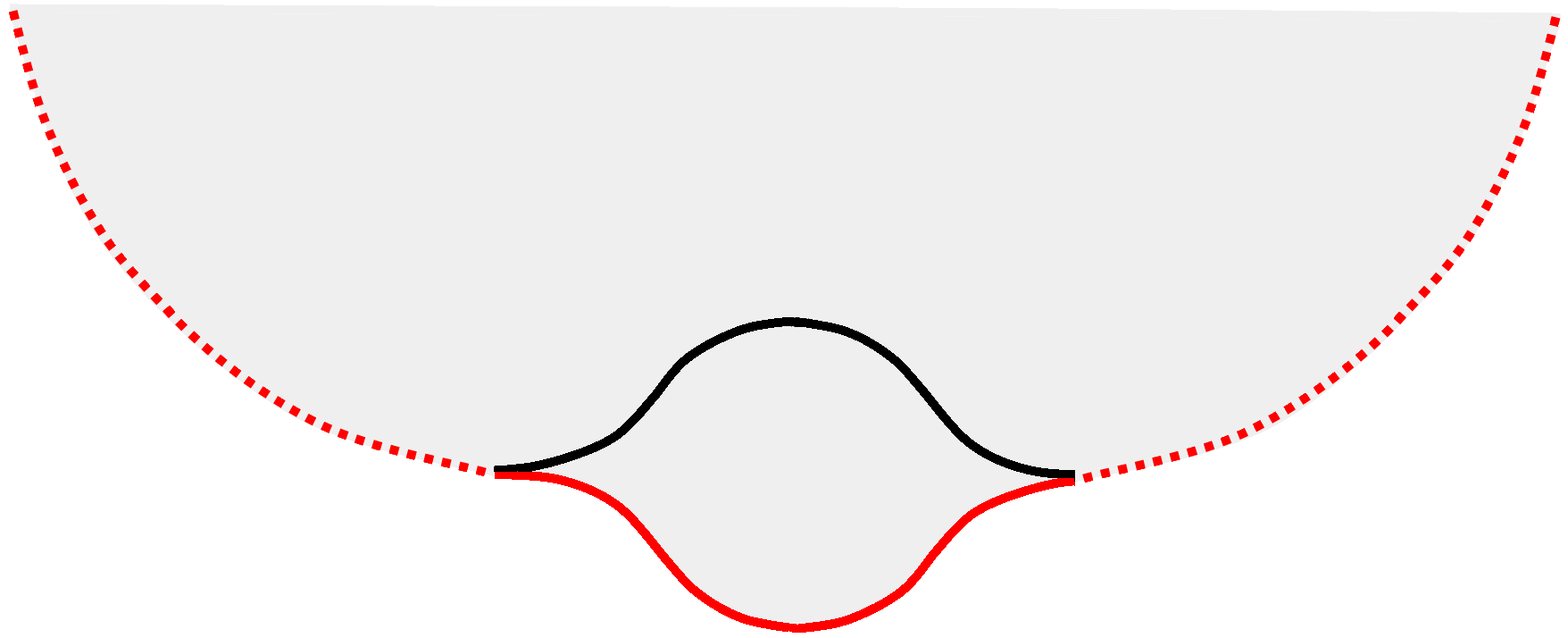}
	\end{center}\caption{A local model $\Sigma$ can be embedded into some disc whith smooth boundary in the complex plane. }\label{fig:vertical composition step 1}
\end{figure}

\begin{figure}[h]
	\begin{center}
	\includegraphics[scale=.10]{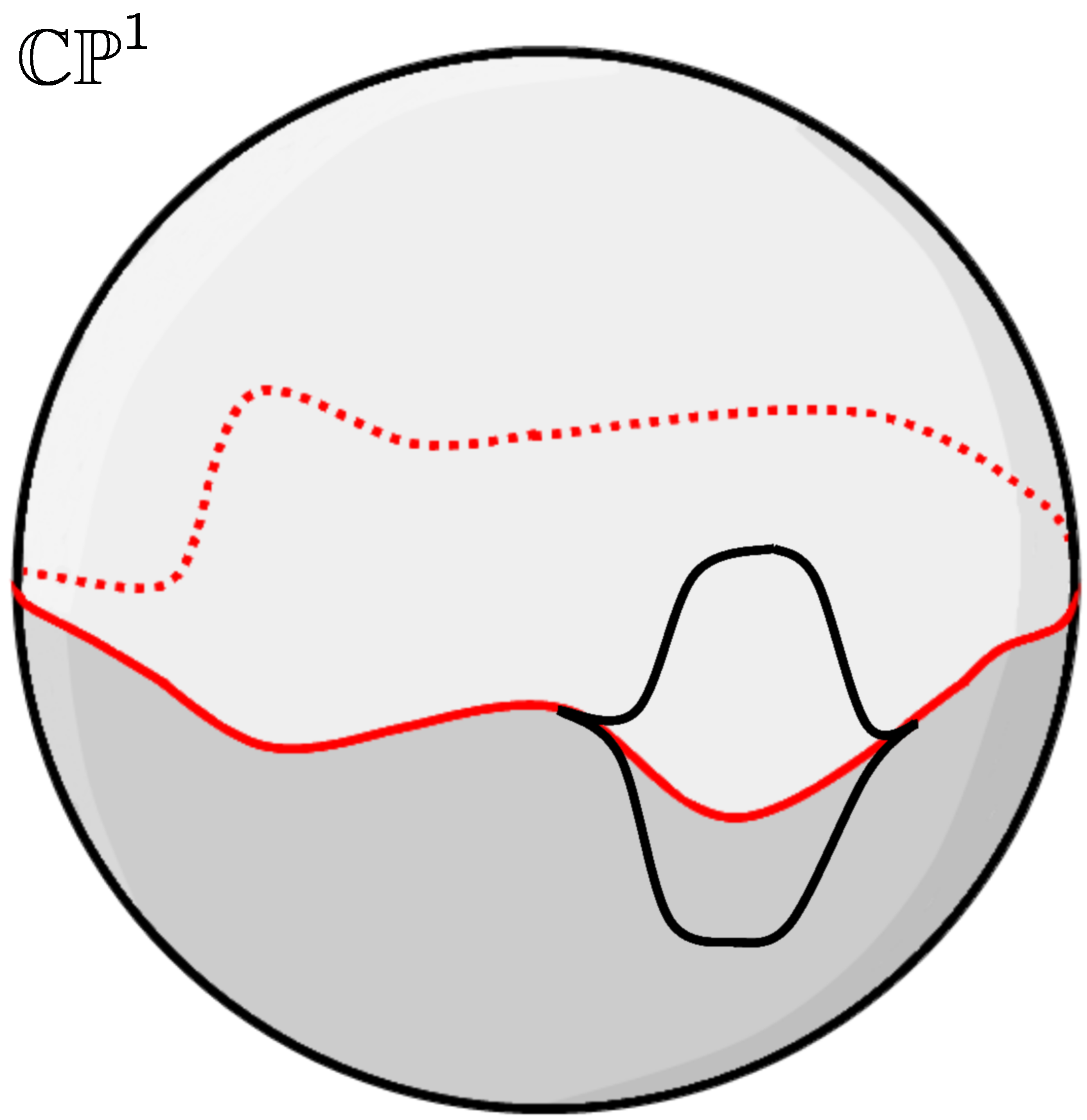}
	\end{center}\caption{Glueing two local models $\Sigma_1$ and $\Sigma_2$ that are embedded into discs  like in Figure~\ref{fig:vertical composition step 1} yields a copy of the vertical composition of $\Sigma_1$ and $\Sigma_2$ inside the Riemann sphere.}\label{fig:vertical composition step 2}
\end{figure}

\subsection{The target bicategory: von Neumann algebras}\label{s:target cat}

Since an extended \textsc{cft} should encompass the notion of \textsc{cft}, it should certainly map (a union of) circles to some Hilbert space, and a cobordism connecting such circles to a linear map between Hilbert spaces, as before.
We have to decide what we want to assign to a point: these should be some kind of algebras.
If we want to stay in a Hilbert space setting, then there are not many options for the kind of algebras to consider.
It turns out that the appropriate choice is given by von Neumann algebras.
A one-dimensional cobordism is then mapped to a bimodule between von Neumann algebras, and a surface such as~\eqref{eqn:local model for Sigma} corresponds to a linear map between bimodules. In short, the target bicategory is defined as follows:
\begin{itemize}
	\item objects: von Neumann algebras;
	\item 1-morphisms: bimodules (that is, Hilbert spaces with a left action of the first von Neumann algebra, and a commuting right action of the second von Neumann algebra);
	\item 2-morphisms: bounded linear maps that are compatible with the bimodule structure.
\end{itemize}

Before we give a definition of these notions, we recollect some facts from the theory of operator algebras.
Given a Hilbert space $H$, denote the algebra of bounded operators on~$H$ by $B(H)$. Recall that an operator $a \in B(H)$ is \textit{trace class} if it is compact and the trace-norm $\|a\|_1 \coloneqq \sum_k \sqrt{\mu_k}$ is finite; here the $\mu_k$ are the eigenvalues of the positive operator~$a^* a$. This ensures that the trace of $a$ is well defined. Write $B_1(H)$ for the trace-class operators in~$B(H)$. The pairing
\begin{align*}
	B(H) \times B_1(H) \longrightarrow \mathbb{C} \ , \quad (a,b) \longmapsto \text{tr}\, (a\,b)
\end{align*}
induces a topology on $B(H)$ that is called the \textit{ultraweak} topology. Thus, a (generalized) sequence $\{a_i\}$ in $B(H)$ converges ultraweakly to $a \in B(H)$ if and only if for all $b \in B_1(H)$ we have that $\text{tr}(a_ib) \longrightarrow \text{tr}(ab)$ in $\mathbb{C}$.

\begin{definition}
A \textit{von Neumann algebra} is a topological $*$-algebra\footnote{Notice that the multiplication
map $B(H) \times B(H) \longrightarrow B(H)$ is not continuous, so the term ``topological $*$-algebra'' should be taken with a grain of salt.} $A$ over $\mathbb{C}$ that can be embedded in some $B(H)$ as a ultraweakly closed $*$-subalgebra.
\end{definition}

By the von Neumann bicommutant theorem, $A$ is ultraweakly closed if and only if it is its own bicommutant.

\begin{definitions}
A \textit{module} over a von Neumann algebra~$A$ is a Hilbert space~$H$ together with a continuous $*$-homomorphism $A \longrightarrow B(H)$.

Similarly, if $A$ and $B$ are von Neumann algebras, an \textit{$A$-$B$-bimodule} is a Hilbert space~$H$ equipped with two continuous $*$-homomorphisms $A \longrightarrow B(H)$ and $B^\text{op} \longrightarrow B(H)$ whose images commute. We write ${}_A M_B$ to indicate that $M$ is an $A$-$B$-bimodule.
\end{definitions}

\noindent Here we have written $B^\text{op}$ for the von Neumann algebra obtained from $B$ by reversing the order in the multiplication: if $m \colon B \otimes B \longrightarrow B$ is the original multiplication on $B$ then the opposite multiplication is given by $m^\text{op}(a,b)=b\,a$.

It is more work to define the composition of bimodules in the bicategory of von Neumann algebras than it is to do so in the bicategory of rings.
Recall the way in which rings and bimodules form a bicategory.
Given two rings $R$ and $S$, let $\text{Hom}(R,S)$ be the category of $R$-$S$-bimodules.
The morphisms in $\text{Hom}(R,S)$ are then the 2-morphisms of our bicategory.
If $R$, $S$, $T$ are rings, the
horizontal composition of two bimodules ${}_R M_S$ and ${}_S N_T$ is given by the tensor product:
	\[  {}_A M_B \circ {}_B N_C \coloneqq {}_A M \underset{B}{\otimes} N_C \ . \]
Here the tensor product is taken over $B$, so that $(m \cdot b) \otimes n = m \otimes (b \cdot n)$. The $A$-$A$-bimodule $A$ is then the unit object for 
this kind of composition.

If we want to do something similar with von Neumann algebras, the first obstacle is the definition of the unit object: a von Neumann algebra is not a Hilbert space, so it cannot serve as a bimodule over itself. However, there is a canonical way to turn a von Neumann algebra $A$ into a Hilbert space, called $L^2 A$.

\subsubsection{The $L^2$-space of a von Neumann algebra}

The definition of $L^2 A$ requires more prerequisites from the theory of operator algebras. We outline its construction.

For $A$ a von Neumann algebra, let
\begin{align*}
	L^1 A & \coloneqq \{ \ \varphi \colon A\longrightarrow \mathbb{C} \mid \text{continuous} \ \} \ , \\
	L^1_+ A & \coloneqq \{ \ \varphi \in L^1 A \mid \varphi(a^* a) \geq 0 \ \text{for all} \ a \in A \ \} \ .
\end{align*}
Elements of $L^1_+ A$ are called \textit{states}\footnote{Often, one also puts the condition that $\varphi(1)=1$.} on $A$. The Gelfand-Naimark-Segal construction says that for each state $\varphi \in L^1_+ A$ there exists a cyclic representation $\pi_\varphi$ of $A$ on some Hilbert space~$H_\varphi$ with cyclic vector $\Omega_\varphi$. Thus, the image $\pi_\varphi(A)\,\Omega_\varphi$ of the action of $A$ on~$\Omega_\varphi$ is dense in~$H_\varphi$.

If the state is faithful (i.e. if $\varphi(a^* a)>0$ for $a\not=0$), then the antilinear operator $\pi_\varphi(a)\, \Omega_\varphi \longmapsto \pi_\varphi(a)^* \,\Omega_\varphi$ defined on $\pi_\varphi(A)\,\Omega_\varphi$ can be extended to an operator $S_\varphi$ on  the closure $H_\varphi$ of $\pi_\varphi(A)\,\Omega_\varphi$. From this operator we can further construct the positive operator $\Delta_\varphi \coloneqq |S_\varphi|^2 = S_\varphi^* S_\varphi$. 
Since $\Delta_\varphi$ is a positive operator, $\Delta_\varphi^{it}= \exp(i\,t\log\Delta_\varphi)$ is well defined for all $t\in \mathbb{R}$.

By a theorem that is due to Tomita and Takesaki, for each $a\in A$, the assignment $t \longmapsto \Delta_\varphi^{-it} \, a \, \Delta_\varphi^{it}$ defines a one-parameter family of elements in~$A$. This is called the \textit{modular group} of $A$ associated with~$\varphi$.

Next, consider the algebra $\text{Mat}_2(A)$ of $2\times 2$ matrices with coefficients in $A$ and let $\varphi\oplus\psi \in L^1_+\big(\text{Mat}_2(A)\big)$. Via the above construction, $\varphi\oplus\psi$ yields a modular group in $\text{Mat}_2(A)$. Applying this modular group to the element
	\[ \left(\begin{array}{cc} 0 & 1 \\ 0 & 0 \end{array}\right) \in \text{Mat}_2(A) \]
we get elements that are of the form
	\[  \Delta_{\varphi\oplus\psi}^{-it} \left(\begin{array}{cc} 0 & 1 \\ 0 & 0 \end{array}\right) \Delta_{\varphi\oplus\psi}^{it} = \left(\begin{array}{cc} 0 & \cdots \\ 0 & 0 \end{array}\right) \ \in\mathrm{Mat}_2(A)\,. \]
The \textit{non-commutative Radon-Nikodym derivative} $[D\varphi:D\psi]_t$ is then defined via
	\[ \left(\begin{array}{cc} 0 & [D\varphi : D\psi]_t \\ 0 & 0 \end{array}\right) \coloneqq \Delta_{\varphi\oplus\psi}^{-it} \left(\begin{array}{cc} 0 & 1 \\ 0 & 0 \end{array}\right) \Delta_{\varphi\oplus\psi}^{it} \ . \]

Now consider the free vector space on symbols $\sqrt{\varphi}$ with $\varphi \in L^1_+ A$. The above construction allows us to define a (semi-definite) inner product on this vector space via the formula
	\[ \langle \, \sqrt{\varphi} , \sqrt{\psi} \, \rangle \coloneqq \anacont_{t\longrightarrow i/2} \, \varphi\big([D\varphi : D\psi]_t \big) \ . \]

After all these preliminaries, we are finally in a position to define $L^2 A$: it is the Hilbert space obtained as the completion of the above free vector space with respect to this inner product.
For each von Neumann algebra $A$, the Hilbert space $L^2 A$ is an $A$-$A$-bimodule, ${}_A L^2 A_A$, and this is the unit morphism in the bicategory of von Neumann algebras.

The Hilbert space $L^2 A$ is also equipped with a \emph{positive cone} $L_+^2A\subset L^2A$, given by
\[
L_+^2A:=\{\sqrt{\varphi}\,|\,\varphi\in L_+^1A\}\,,
\]
and an antilinear involution $J:L^2 A\to L^2 A$, called the \emph{modular conjugation}.
The modular conjugation is given by
$J(\sum_i c_i\sqrt{\varphi_i}):=\sum_i \bar c_i\sqrt{\varphi_i}$.

\subsubsection{Connes fusion}

The second difficulty towards defining the bicategory of von Neumann algebras is that the ordinary tensor product does not work:
it would have ${}_A A_A$ as its unit, not ${}_AL^2(A)_A$. The appropriate tensor product of von Neumann bimodules, known as \textit{Connes fusion} and denoted by $\boxtimes$, is tailor-made so that ${}_A L^2 A_A$ is a unit for that operation. We have
\begin{equation}\label{eqn:Connes fusion}
	M \underset{A}{\boxtimes} N \coloneqq \text{completion of } M \underset{A}{\otimes} \text{Hom}_A(L^2 A, N)\,.
\end{equation}
This is actually forced on us if we want $L^2 A$ to be the unit.
If we accept for a moment that $L^2 A$ is a unit, then
given an $A$-linear map $\varphi \colon L^2 A \longrightarrow N$ and an element $m \in M$, there is an easy way of producing an element of $M \boxtimes_A N$:
take the image of $m \in M \cong M \boxtimes_A L^2 A$ under the map $1\boxtimes \varphi:M \boxtimes_A L^2 A\to M \boxtimes_A L^2 N$.

The completion is taken with respect to an inner product on the right-hand side of~\eqref{eqn:Connes fusion}.
Let us work backwards to figure out the correct formula for the inner product.
The inner product of two elements $n\otimes \phi$ and $m\otimes \psi$ of $M \boxtimes_A N$ can be described as the composition
\[
\tikzmath{
\node(a) at (.2,0) {$\mathbb C $};
\node(b) at (2.5,0) {$M\cong M\boxtimes_AL^2A$};
\node(c) at (6,0) {$M\boxtimes_A N$};
\node(d) at (9.5,0) {$M\boxtimes_AL^2A\cong M$};
\node(e) at (11.8,0) {$\mathbb C$};
\draw[->] (a) --node[above]{$\scriptstyle n$} (b);\draw[->] (b) --node[above]{$1\boxtimes\phi$} (c);\draw[->] (c) --node[above]{$1\boxtimes\psi^*$} (d);\draw[->] (d) --node[above]{$\scriptstyle m^*$} (e);
\draw[dashed, ->] ($(b.north)+(.3,0)$) to[out=65, in=115, looseness=.8]node[above]{$\scriptstyle 1\,\boxtimes\,(\psi^*\circ\,\phi)$} ($(d.north)+(-.3,0)$);
} 
\]
where $\phi,\psi\in\text{Hom}_A(L^2 A,N)$, $\psi^*$ is the adjoint of $\psi$, and we view $m,n\in M$ as maps $\mathbb{C}\longrightarrow M$. Notice that the map $\psi^*\circ\phi\colon {}_A L^2 A\longrightarrow {}_A L^2 A$ commutes with the left action of $A$ on $L^2 A$. Now, one of the properties of $L^2 A$ is that endomorphisms of $L^2 A$ which are equivariant for the left $A$-action are given by right multiplication $\rho_a$ by some $a \in A$. Therefore we have that $\psi^*\circ\phi = \rho_a$ for some $a=a_{\psi^*\circ\phi} \in A$. The inner product $\langle m \otimes \phi, n\otimes\psi\rangle$ on $M \otimes \text{Hom}_A(L^2 A,M)$ is now given by the composition
\[
\tikzmath{
\node(a) at (0,0) {$\mathbb C $};
\node(b) at (2,0) {$M$};
\node(d) at (5.5,0) {$M$};
\node(e) at (7.5,0) {$\mathbb C$};
\draw[->] (a) --node[above]{$\scriptstyle n$} (b);\draw[->] (b) --node[above]{$\rho_a$} (d);
\draw[->] (d) --node[above]{$\scriptstyle m^*$} (e);
} \,.
\]
A more symmetric way to write the Connes fusion product is
	\[ M \underset{A}{\boxtimes} N \cong \text{Hom}_A(L^2 A, M) \underset{A}{\otimes} L^2 A \underset{A}{\otimes} \text{Hom}_A(L^2 A, N)\,. \]
The evaluation map $\text{Hom}_A(L^2 A, M) \otimes_A L^2 A \longrightarrow M$ relates this description to the previous asymmetric definition: after completion those two descriptions become isomorphic to each other.

\section{Conformal nets and Frobenius algebra objects}\label{s:chiral vs full}

Before we push on, let us pause a moment to sketch the big picture. Actually, there are a couple of different things called `\textsc{cft}'; in particular, physics distinguishes between \textsl{chiral} and \textsl{full} \textsc{cft}. It is quite common to use `\textsc{cft}' to refer to one of these things, but it may not always be clear from the context to which one. What we have been calling \textsc{cft} above are really \textit{full} \textsc{cft}s. We abbreviate `chiral \textsc{cft}' to `$\chi$\textsc{cft}' so that we can continue to use `\textsc{cft}' without further specification exclusively for `full \textsc{cft}'.

Chiral \textsc{cft}s can be seen as an intermediate step towards full \textsc{cft}. The distinction between chiral and full \textsc{cft} has its origin in physics. Now comes the mathematics to make things more complicated: there exist different mathematical formalisms to talk about $\chi$\textsc{cft}, and to talk about full \textsc{cft}. We have already discussed Segal's formalism for full \textsc{cft}. Chiral \textsc{cft} can be described in the formalism of Segal as well, and there are also approaches using \textsl{vertex operator algebras} or \textsl{conformal nets}. Shortly we will present $\chi$\textsc{cft} in Segal's formalism, and in Sections~\ref{s:loop group nets} and~\ref{s:conformal net} we will also look at the approach via conformal nets.

Recall that the \textit{loop group} of a compact Lie group~$G$ is defined as the group of maps from the unit circle into $G$:
\begin{equation}\label{eqn:loop group}
	L\,G \coloneqq \text{Map}_{C^\infty}(S^1,G) \ .
\end{equation}
Loop groups are relevant for us because they lead to vertex operator algebras or conformal nets, and so they provide examples of $\chi$\textsc{cft}s. In order to construct a full \textsc{cft} out of a $\chi$\textsc{cft}, one needs additional data: a \textsl{Frobenius algebra object} in the monoidal category associated to the $\chi$\textsc{cft}. In Section~\ref{s:Frobenius} we will define Frobenius algebra objects, and~in Section \ref{s:FRS} we will illustrate how such an object helps to construct a full \textsc{cft} out of a $\chi$\textsc{cft}. 

To summarize, the situation can be represented as follows:
\[
\tikzmath{
\node(a) at (0,0) {loop groups};
\node(b) at (5,0) {chiral \textsc{cft}};
\node(c) at (10,0) {full \textsc{cft}};
\draw[->] (a) --node[above, scale=.85]{\parbox{2cm}{provide examples of}} (b);
\draw[->] (b) --node[above, scale=.85]{\parbox{2.5cm}{+ Frobenius\\ algebra object}} (c);
} 
\]
We will show that with the same input, a chiral \textsc{cft} and a Frobenius algebra object, one can actually do better and construct an \textsl{extended} \textsc{cft}:
\begin{equation}\label{eqn:diagram with extended CFT}
\tikzmath{
\node(a) at (6,2.5) {extended \textsc{cft}};
\node(b) at (0,0) {chiral \textsc{cft}};
\node(c) at (6,0) {full \textsc{cft}};
\draw[->] (a) --node[right, scale=.85]{forget} (c);
\draw[->] (b) --node[above, scale=.85]{\parbox{2.5cm}{+ Frobenius\\ algebra object}} (c);
\draw[->, dotted] (b) to[bend left=20] (a);
} 
\end{equation}

In Section~\ref{s:construction} we will (partially) construct examples of extended \textsc{cft}s. We should put this task into perspective: already for (non-extended) Segal \textsc{cft}s, the interesting examples --- most notably those coming from loop groups --- have \textsl{not} completely been constructed. In the spirit of the cobordism hypothesis \cite{Lurie(On-the-classification-of-topological-field-theories)}, one could even hope that it is easier to construct extended \textsc{cft}s than full \textsc{cft}s.

\subsection{Chiral conformal field theory}\label{s:chiral CFT}

In this section, we use Segal's formalism to describe (non-extended) $\chi$\textsc{cft}. Recall from Section~\ref{s:Segal CFT} that a non-extended \textsc{cft} assigns Hilbert spaces to closed one-dimensional manifolds, and maps between Hilbert spaces to conformal cobordisms. Chiral \textsc{cft}s have the same source category as full \textsc{cft}s, but there is an intermediate layer on the side of the target. 

A $\chi$\textsc{cft} first assigns to every closed one-dimensional manifold a $\mathbb C$-linear category~$\C$.
To each object $\lambda \in \C$, it further assigns a Hilbert space $H_\lambda$. Likewise, a cobordism (always with complex structure) is mapped to a functor $f \colon \C_\text{in} \longrightarrow \C_\text{out}$, and for each $\lambda \in \C_\text{in}$ we further get a map $H_\lambda \longrightarrow H_{f(\lambda)}$. 
This is only part of the data of a $\chi$\textsc{cft}, but we can already try to list certain axioms.
Most importantly, the map $H_\lambda \longrightarrow H_{f(\lambda)}$ must depend on the complex structure of the cobordism in a holomorphic fashion.
Here is what this means, roughly.
If we fix two one-manifolds $W_\text{in}$ and $W_\text{out}$, then the (infinite-dimensional) moduli space of Riemann surfaces with these boundaries has its own complex structure: the functions from this moduli space, mapping points to operators, are required to be holomorphic.\footnote{More precisely, these functions should be holomorphic in the interior of the moduli space, but only continuous on its boundary.}

Note however that the Hilbert space $H_{f(\lambda)}$ depends on the choice of cobordism.
So, as a prerequisite for the above condition to make sense, $f \colon \C_\text{in} \longrightarrow \C_\text{out}$ should depend holomorphically on the choice of cobordism.
That is, the Hilbert spaces $H_{f(\lambda)}$ should form a holomorphic bundle over the moduli space of complex cobordisms.
On top of that, there is also a unitary projectively flat connection on that same bundle:
a path between two cobordisms (in the moduli space of cobordisms between $W_\text{in}$ and $W_\text{out}$) then induces a natural isomorphism between the corresponding functors,
and if two paths are homotopic, then the two natural isomorphisms are equal up to a phase.
So, overall, a $\chi$\textsc{cft} is a rather involved kind of structure.

To get a feeling about what these categories associated to one-manifolds are, we look at the examples of $\chi$\textsc{cft}s which are provided by loop groups. Let $G$ be a Lie group. To a one-manifold $W$ in the source category we assign the category of representations of $\text{Map}_{C^\infty}(W,G)$ (compare with~\eqref{eqn:loop group}). 
Moreover, each object $\lambda$ in that category has an underlying Hilbert space: those are the $H_\lambda$.

We should point out that $\chi$\textsc{cft} in the above formalism are difficult to construct, and despite a lot of hard work (\cite{TUY89, Zhu1996, Huang1997, Posthuma-PhD}, and of course \cite{Segal1988}) the $\chi$\textsc{cft}s corresponding to loop groups have been constructed to a great extent, but not completely.

We finish our short discussion of chiral conformal field theories by emphasizing the most important structure that a such a theory encodes: a braided monoidal category.
Let $\C$ be the category that the $\chi$\textsc{cft} assigns to the standard circle.
Then the pair of pants equips $\C$ with a monoidal structure\footnote{This is not completely obvious since, a priori, the pair of pants needs a conformal structure before 
we know which functor $\C\times\C\to \C$ it induces. However, because the pair of pants has genus zero, there is nevertheless a way of getting a canonical functor $\C\times\C\to \C$.}, and the diffeomorphism that switches the two pant legs (followed by a path inside the space of complex structures on the pair of pants) further equips it with the structure of a braided monoidal category.

\subsection{Conformal nets}

Our construction, as indicated in~\eqref{eqn:diagram with extended CFT}, is not based on the above formalism. Rather, it uses the formalism of conformal nets. To get acquainted with conformal nets we will start by giving the data of a conformal net and look at an example. In Section~\ref{s:conformal net} we will give the complete abstract definition of conformal nets, including the axioms for the above data.

\begin{data}
A \textit{conformal net} $\A$ is a monoidal functor from the category\footnote{Notice that the source category is not quite monoidal: we cannot take disjoint unions of embeddings that are orientation preserving and embeddings that are orientation reversing.}
	\[ \begin{cases} \ \text{objects: compact, oriented, one-dimensional manifolds with boundary;} \\ \ \text{morphisms: embeddings that either preserve the orientation on all}
	\\ \ \qquad\qquad \text{connected components or reverse the orientation everywhere} \end{cases} \]
to the category
	\[ \begin{cases} \ \text{objects: von Neumann algebras;} \\ \ \text{morphisms: injective homomorphisms and antihomomorphisms.} \end{cases} \]
We require that an embedding $W_1 \hookrightarrow W_2$ is sent to an injective homomorphism 
$\A(W_1) \rightarrow \A(W_2)$ if it preserves orientation, and to an injective homomorphism $\A(W_1) \rightarrow \A(W_2)^\text{op}$ if it reverses orientation,
where $\A(W_2)^\text{op}$ is the opposite of the von Neumann algebra $\A(W_2)$.
(Note: an antihomomorphism $A\to B$ is a homomorphism $A\to B^\text{op}$).
\end{data}

\subsubsection{Conformal nets associated to loop groups}\label{s:loop group nets}

An important class of examples of conformal nets is given by \textit{loop group nets}. Let $G$ be a simply connected compact Lie group equipped with a `level'. If the group is simple, then a level is just a positive integer $k \in \mathbb{Z}_{\geq 1}$; in general, a \textit{level} is a biinvariant metric on~$G$ such that the square lengths of closed geodesics are in~$2\mathbb{Z}$. To a one-manifold~$W$ we want to assign an algebra. As an intermediate step towards this algebra, we define the group
\begin{equation}\label{eqn:L_W G}
	L_W G \coloneqq \text{Map}_*(W,G) \subset \text{Map}_{C^\infty}(W,G)
\end{equation}
of all smooth maps $W \longrightarrow G$ that send the boundary $\partial W$ to the unit $e \in G$
and all of whose derivatives are zero at the boundary. Thus, if $W=S$ is a circle, then $L_W G$ is a version of the free loops on~$G$, while if $W=I$ is an interval, it is a version of the based loops on~$G$. The group structure is given by pointwise multiplication in~$G$.

Like the loop group, this group has a central extension by $S^1$.
That central extension is easiest to describe at the level of the Lie algebra $\mathfrak{g}$ of~$G$, where it becomes a central extension by $\mathbb R$.
The Lie algebra $L_W\mathfrak{g}$ of the loop group $L_W G$ 
consists of smooth maps $W\to \mathfrak g$ all of whose derivatives are zero at the boundary. It
has a central extension defined by the cocycle\footnote{Recall that the \textit{central extension} $\hat{\mathfrak{g}}$ of a Lie algebra~$\mathfrak{g}$ is given by the vector space $\mathfrak{g} \oplus \mathbb{C}K$ with bracket \[ [X + \lambda K, Y + \mu K] = [X,Y] + c(X,Y)\, K \ , \qquad X,Y \in \mathfrak{g} \ , \quad \lambda , \mu \in \mathbb{C} . \] Here the map $c \colon \mathfrak{g} \otimes \mathfrak{g} \longrightarrow \mathbb{C}$ is a \textit{Lie algebra 2-cocycle}: it is antisymmetric and satisfies the cocycle condition $c(X,[Y,Z]) + c(Y,[Z,X]) + c(Z,[X,Y]) = 0$. This ensures that the new bracket is antisymmetric and satisfies the Jacobi identity.}
\begin{equation}\label{eqn:cocycle}
	c(f,g) = \int_W \langle \, f, \text{d}\,g \, \rangle_k \ , \qquad f, g \in L_W\mathfrak{g} \ .
\end{equation}
Here, the pairing is given by the metric and depends on the choice of level for~$G$. The corresponding central extension of $L_W G$ is the one that we are after.

The value $\A(W)=\A_{LG,k}(W)$ of the conformal net $\A_{LG,k}$ on the 1-manifold $W$ is then defined as the completion of the group algebra of $L_W G$, with multiplication twisted by the cocycle~\eqref{eqn:cocycle}. This is similar to the group algebra of the central extension,
but the central $S^1$ is identified with the $S^1$ in the scalars.
More precisely, we start by forming the free vector space $\mathbb{C}[L_W G]$; since $L_W G$ is a group, this free vector space has the structure of an algebra. The group cocycle $c \colon L_W G \times L_W G \longrightarrow \mathbb{C}^*$ corresponding to~\eqref{eqn:cocycle} allows us to modify the multiplication to $g \cdot_c h \coloneqq c(g,h) \, g \, h$. The associativity is maintained due to the cocycle condition $c(gh,k) \, c(g,h) = c(g,hk) \, c(h,k)$. Finally, the resulting twisted group algebra is not complete, and so we take some completion to make it into a von Neumann algebra.

Loop group nets are made so that they remember all the relevant information about the corresponding loop group. In particular, there is a notion of representation of a conformal net, and the representations of the loop group net agree with the `positive energy' representation of $L\,G$.\footnote{Unfortunately, the fact that representations of $\A_{LG,k}$ are the same as positive energy representation of $L\,G$ is not known in general, even though this is widely expected to be the case. It is known for $G=SU(n)$ due to results of Wassermann \cite{Wassermann1998} and partially known for $G=\mathit{Spin}(2n)$ due to Toledano-Laredo \cite{ToledanoLaredo1997}.}

\begin{definition}
A \textit{representation} of a conformal net $\A$ is a Hilbert space~$H$ equipped with compatible actions of $\A(I)$ for every proper subinterval $I \subsetneq S^1$ of the unit circle.
The category of representations of a conformal net $\A$ is denoted $\text{Rep}\,(\A)$.
\end{definition}

Note that although $S^1$ itself has a von Neumann algebra $\A(S^1)$ associated to it, there are examples of conformal nets where $\A(S^1)$ does not act on a representation~$H$. For this reason, one requires actions of the algebras associated to all manifolds $I \subsetneq S^1$ that are \textsl{strictly} contained in $S$.
Those must be compatible in the sense that the inclusions $I_1 \hookrightarrow I_2\subset S^1$ determine the restrictions of the actions. Often, a representation is equivalent to having a single action of the algebra $\A(S^1)$. We should expect this to hold for loop group nets in particular, although we do not know how this can be proven, except for $G=SU(n)$.

The category $\text{Rep}\,(\A)$ is monoidal with respect to a product $\boxtimes$ called ``fusion product'', which we now describe.
Consider two representations $H$ and $K$ of $\A$, as shown in Figure~\ref{fig:conformal net rep}. The two half-circles $I$ and $J$, with orientations induced by their inclusion in $S^1$, act as $\A(I) \acts H$ and $\A(J) \acts K$. Let $\varphi \colon I \longrightarrow J$ be the diffeomorphism that sends the `north pole' to the `north pole' and the `south pole' to the `south pole' (see again Figure~\ref{fig:conformal net rep}). Since $\varphi$ reverses the orientation, it provides an isomorphism $\A(I) \cong \A(J)^\text{op}$ and
therefore a right action of $\A(J)$ on $H$. The fusion product of $H$ and $K$ is then defined to be the Connes fusion $H \boxtimes_{\A(J)} K$.
The residual actions of $S^1 \setminus I$ and $S^1 \setminus J$ can then be used to make this product into a new representation of $\A$.
Moreover, one can show that, up to natural isomorphism, the functor $H,K\mapsto H \boxtimes_{\A(J)} K$ is independent of the choice of half-circles~$I$ and~$J$.
This also shows that the category $\text{Rep}\,(\A)$ is braided monoidal.
The construction of these natural isomorphisms is spelled out in Section \ref{s:conformal net}.

\begin{figure}[h]
	\begin{center}
	\includegraphics[scale=.10]{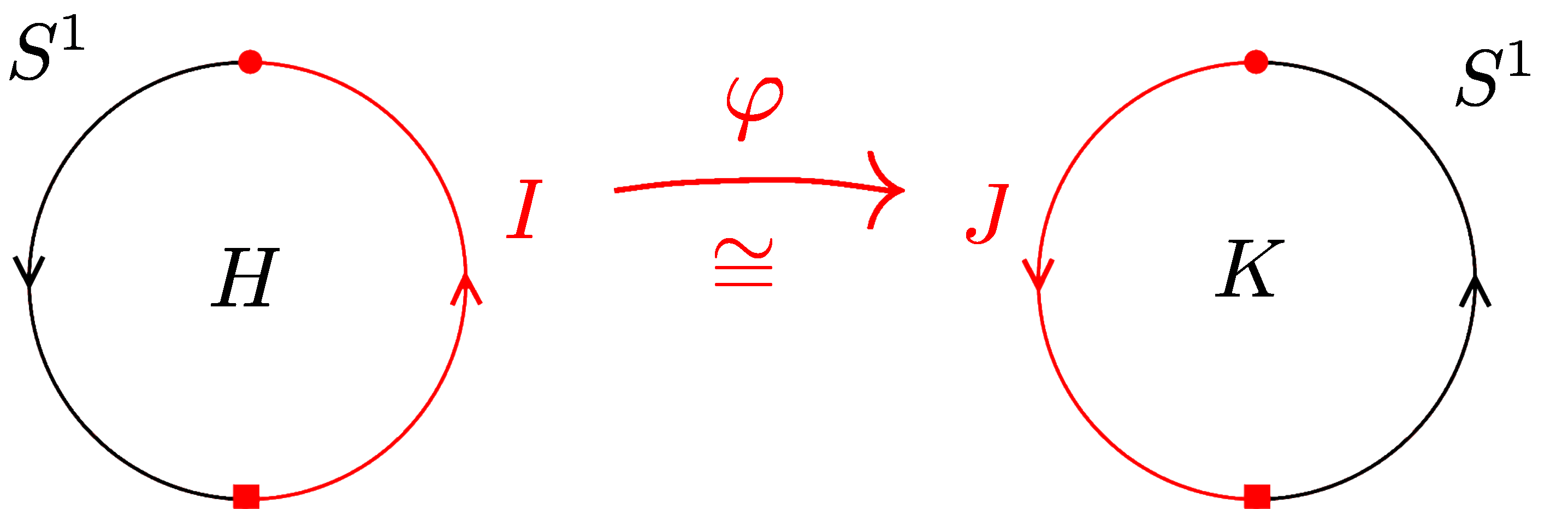}
	\end{center}\caption{The diffeomorphism used to define the fusion product $H \boxtimes_{\A(J)} K$ between two representations $H$ and $K$ of a conformal net. One can think of $H$ as having actions of all the algebras corresponding to submanifolds of the circle enclosing it, and of $K$ as having actions of algebras living on the other circle.}\label{fig:conformal net rep}
\end{figure}

In Section~\ref{s:representations} we will describe a coordinate-independent approach to the representation theory of conformal nets,
and to the fusion product of representations.

\subsubsection{The loop group of $SU(2)$}\label{s:loop group example}

To get a feeling of what representations of conformal nets are for the case of loop groups, we consider the simplest non-trivial case: the loop group $L \, SU(2)$ of $SU(2)$. Recall that $SU(2)$ has one irreducible representation $V_n$ of dimension~$n+1$ for each $n \in \mathbb{N}$. $V_0$ is the trivial representation, $V_1$ is the fundamental representation, and so on. For $m\leq n$ the tensor product of two irreducible representations is given by the Clebsch-Gordan decomposition
\begin{equation}\label{eqn:Clebsch-Gordan}
	V_m \otimes V_n \cong V_{n-m} \oplus V_{n-m+2} \oplus \cdots \oplus V_{n+m} \ , \qquad m\leq n \ .
\end{equation}
This formula is already determined by the simpler relation 
	\[ V_1 \otimes V_n \cong V_{n-1} \oplus V_{n+1} \ ,  \qquad n\geq 1 \ . \]

Likewise, the representation theory of the loop group $L\, SU(2)_k$ of $SU(2)$ at level~$k$ has irreducible representation $V_0, \cdots\mspace{-1mu},V_k$, where now each of the $V_n$ is an infinite-dimensional Hilbert space.
As before, $V_0$ is the monoidal unit and the fusion rules are entirely determined by the action of $V_1\boxtimes-$:
\begin{equation}\label{eqn:Clebsch-Gordan++}
V_1 \boxtimes V_n \cong \begin{cases} V_1 & \text{if } n = 0 \ ; \\ V_{n-1} \oplus V_{n+1} & \text{if } 1 \leq n \leq k - 1 \ ; \\ V_{k-1} & \text{if } n = k \ . \end{cases}
\end{equation}
It is a nice exercise to use the above formulas to find the analogue of~\eqref{eqn:Clebsch-Gordan} for $L\,SU(2)_k$.

We repeat that the current state of knowledge about general loop group nets is somewhat incomplete, even though it is quite clear what \textsl{should} be the case. The cases $G=SU(n)$ is the only one
where everything is known.

\subsection{Conformal nets revisited}\label{s:conformal net}

It is convenient to change the source category of conformal nets a bit: henceforth we restrict ourselves to \textsl{contractible} compact one-manifolds, i.e., to intervals. The circles can be recovered by gluing two intervals together.

\begin{defi}\label{def:conf nets}\rm
A \textit{conformal net}\footnote{Note that this definition differs from the definitions in the literature; see e.g.~\cite{GF1993, KL2004, Longo2008}.
Our definition is somewhat more general, it allows for more examples.} $\A$ is a continuous functor from the category
	\[ \begin{cases} \ \text{objects: contractible compact oriented one-manifolds;} \\ \ \text{morphisms: embeddings} \end{cases} \]
to the category
	\[ \begin{cases} \ \text{objects: von Neumann algebras;} \\ \ \text{morphisms: injective homomorphisms and antihomomorphisms.} \end{cases} \]
An embedding $I \hookrightarrow J$ is sent to a homomorphism $\A(I) \hookrightarrow \A(J)$ if it preserves the orientation, and to a homomorphism $\A(I) \hookrightarrow \A(J)^\text{op}$ if it reverses the orientation.
The hom-sets of the source category carry the $C^\infty$ topology, and there is also a topology on the hom-sets of the target category. It is with respect to these topologies that $\A$, mapping $I \hookrightarrow J$ to an (anti)homomorphism, has to be continuous. 
Moreover, a conformal net $\A$ is subject to the following axioms:
\begin{enumerate}
	\item[i)] The algebras $\A\big([0,1]\big)$ and $\A\big([1,2]\big)$ commute in, and generate a dense subalgebra of $\A\big([0,2]\big)$;
	\item[ii)] Denoting the algebraic tensor product by $\otimes_\text{alg}$
and the so-called \textsl{spatial tensor product} of von Neumann algebras by $\bar \otimes$, there exists an extension
that makes the diagram commute:
\[
\tikzmath{
\node(a) at (0,1.5) {$\A\big([0,1]\big) \otimes_\text{alg} \A\big([2,3]\big)$};
\node(b) at (5,1.5) {$\A\big([0,3]\big)$};
\node(c) at (0,0) {$\A\big([0,1]\big) \, \bar\otimes \, \A\big([2,3]\big)$};
\draw[->] (a) -- (b);
\draw[->] (a) -- (c);
\draw[->, dotted] (c) -- (b);
} 
\]
	\item[iii)] The image of the map
\[
\big\{ \ \varphi \in \text{Diff}\big([0,3]\big) \ \ \colon \ \varphi|_{[0,1]}=\text{id} \, , \ \varphi|_{[2,3]}=\text{id} \ \big\} \longrightarrow \text{Aut}\big(\A([0,3])\big)
\]
		is contained in the set of inner automorphisms of $\A([0,3])$;
	\item[iv)] There exists a dotted map such that the diagram
\[
\tikzmath{
\node(a) at (-1.5,1.5) {$\A\big([0,1]\big)^\text{op} \otimes_\text{alg} \A\big([0,1]\big)$};
\node(b) at (5,1.5) {$\A\big([0,2]\big)^\text{op} \otimes_\text{alg} \A\big([0,2]\big)$};
\node(c) at (-1.5,0) {$\A\big([-1,0]\big) \otimes_\text{alg} \A\big([0,1]\big)$};
\node(d) at (2,0) {$\A\big([-1,1]\big)$};
\node(e) at (5,0) {$B\big(L^2 \A\big([0,2]\big)\big)$};
\draw[->] (a) -- (b);
\draw[->] (b) -- (e);
\draw[->] (a) --node[left]{$\scriptstyle (x\mapsto -x)\otimes (y\mapsto y)$} (c);
\draw[->] (c) -- (d);
\draw[->, dotted] (d) -- (e);
} 
\]
		commutes.
\end{enumerate}
\end{defi}

We pause to explain why $\text{Rep}\,(\A)$ is braided.
Let $I$ and $J$ be two halves of the standard circle, as in Figure~\ref{fig:conformal net rep}.
Given $H, K\in \text{Rep}\,(\A)$, we need to construct the braiding isomorphism $H\boxtimes_{\A(J)} K \to K\boxtimes_{\A(J)} H$ in two steps.
It is the composite of two `quarter-braiding' isomorphisms
\[
H\boxtimes_{\A(J)} K \,\longrightarrow\, H\boxtimes_{\A(J_1)} K \,\longrightarrow\, H\boxtimes_{\A(I)} K \cong K\boxtimes_{\A(J)} H
\]
where $I_1$ and $J_1$ form another decomposition of $S^1$ into half-circles (for example top and bottom halves).
We focus on the first isomorphism, between $H\boxtimes_{\A(J)} K$ and $H\boxtimes_{\A(J_1)} K$.
Let $\varphi$ be a diffeomorphism of the circle that sends $I$ to $I_1$ and whose support does not cover the whole of $S^1$.
Similarly, let $\psi$ be a diffeomorphism of the circle that sends $J$ to $J_1$ and whose support does not cover the whole of $S^1$.
By axiom (iii) of the definition of conformal nets, there exist unitaries $u$ and $v$ such that $\text{Ad}(u)=\A(\varphi)$ and $\text{Ad}(v)=\A(\psi)$.
Multiplication by $u$ on $H$ and by $v$ on $K$ induce a map $H\boxtimes_{\A(J)} K\to H\boxtimes_{\A(J_1)} K$.
Unfortunately this map does not have the right equivariance properties to be a morphism in $\text{Rep}\,(\A)$.
To fix that, we consider the diffeomorphism $\varphi|_{J}\cup\psi|_I$.
Once again, its support is not the whole circle, and so by axiom (iii) we can find a unitary operator $w$ that corresponds to it, in one of the algebras that act on $H\boxtimes_{\A(J_1)} K$.
The quarter-braiding isomorphism is the composite
\[
H\boxtimes_{\A(J)} K\,\xrightarrow{\,u\boxtimes v\,}\, H\boxtimes_{\A(J_1)} K \,\stackrel{w^*}\longrightarrow\, H\boxtimes_{\A(J_1)} K.
\]

There is a subtle point that we should mention: conformal nets have two roles in life.
Although the relation with Segal's definition of $\chi$\textsc{cft} may not be clear from the above definition, conformal nets, or rather, a subset of them, serve as a model for $\chi$\textsc{cft} (we should emphasize that, as far as the math is concerned, the relationship between conformal nets and other models of $\chi$\textsc{cft} is completely conjectural).
On the other hand, conformal nets serve as a model for three-dimensional \textsc{tqft}, such as Chern-Simons theory.
The conformal nets described in Definition~\ref{def:conf nets} correspond to 3d \textsc{tqft}s.\footnote{Actually, a conformal net $\A$ only corresponds to 
a genuine 3d \textsc{tqft} (i.e, defined on all 3-bordisms) if a certain numerical invariant, the $\mu$-index of $\A$, is finite.}
In order to have an associated $\chi$\textsc{cft}s, a conformal net needs to satisfy a further `positive energy' condition.
The latter says that, under the map in axiom~(iii), the flow of a positive vector field in $\text{Diff}([0,3])$ correspond to a one-parameter group of unitaries in
$\A([0,3])$ with \emph{positive} generator (this generator is only well defined up to an additive constant).
The loop group conformal nets satisfy both conditions\footnote{To be precise, the finite $\mu$-index condition has only been proven for $SU(n)$ \cite{Xu00}. It is expected to hold for all loop group nets.} and so they correspond to both a three-dimensional \textsc{tqft} and a two-dimensional $\chi$\textsc{cft}.

We should point out that, at least conjecturally, the $\chi$\textsc{cft} associated to a conformal net $\A$ (satisfying the positive energy condition)
maps the circle to the representation category $\text{Rep}\,(\A)$, so that the Frobenius algebra objects
that occur in Section \ref{s:FRS} and in Section \ref{s:construction} live in the same world.
Also, we will not need the positive energy condition for the construction of the zero- and one-dimensional parts of the extended \textsc{cft}.
That condition only gets used when constructing the operator associated to a bigon (such as the one in Figure \ref{fig:Sigma}).

\subsection{Frobenius algebra objects}\label{s:Frobenius}

In Section~\ref{s:chiral CFT} we have seen that a $\chi$\textsc{cft} assigns to the standard circle a category $\C$, and that
the pair of pants equips that category with a monoidal structure.
In our example of interest, this is the category
	\[ \C \coloneqq \text{Rep}\,(\A_{LG,k}) \,\stackrel{\scriptscriptstyle (?)}\cong\, \text{Rep}\,(L\,G_k) \]
of representations of the conformal net $\A_{LG,k}$ associated to the loop group of $G$ at level~$k$. 
This category is expected to be equivalent to the category of positive energy representations of $L\,G$ at level $k$.
We are interested in objects of $\C$ with a particular kind of extra structure,
which can be defined in any monoidal dagger category (a dagger category is a category $\mathcal C$ equipped with involutive antilinear maps $\dagger\!:\text{Hom}(X,Y)\to\text{Hom}(Y,X)$, $X,Y\in\mathcal C$ that assemble to a functor $\dagger:\mathcal C\to \mathcal C^\mathrm{op}$). Indeed, our category $\C$ consists of Hilbert spaces, so there is a notion of adjoints that turns it into a monoidal dagger category. Here, as before, the monoidal structure is given by the fusion product.

\begin{definition}
A special symmetric \textit{Frobenius algebra object} (we will simply call them Frobenius algebra objects) is an object~$Q \in \C$ together with maps
\begin{itemize}
	\item multiplication $m \colon Q \boxtimes Q \longrightarrow Q$\,,
	\item unit $m: 1 \longrightarrow Q$\,,\hfill (here $1$ stands for the unit object of $\C$)
	\item comultiplication $\Delta \colon Q \longrightarrow Q \boxtimes Q$\,, and
	\item counit $\varepsilon \colon Q \longrightarrow 1$\,,
\end{itemize}
subject to the axioms shown in Figure~\ref{fig:Frobenius alg obj axioms}.
\end{definition}

\begin{figure}[h]
	\begin{center}
	\includegraphics[scale=.10]{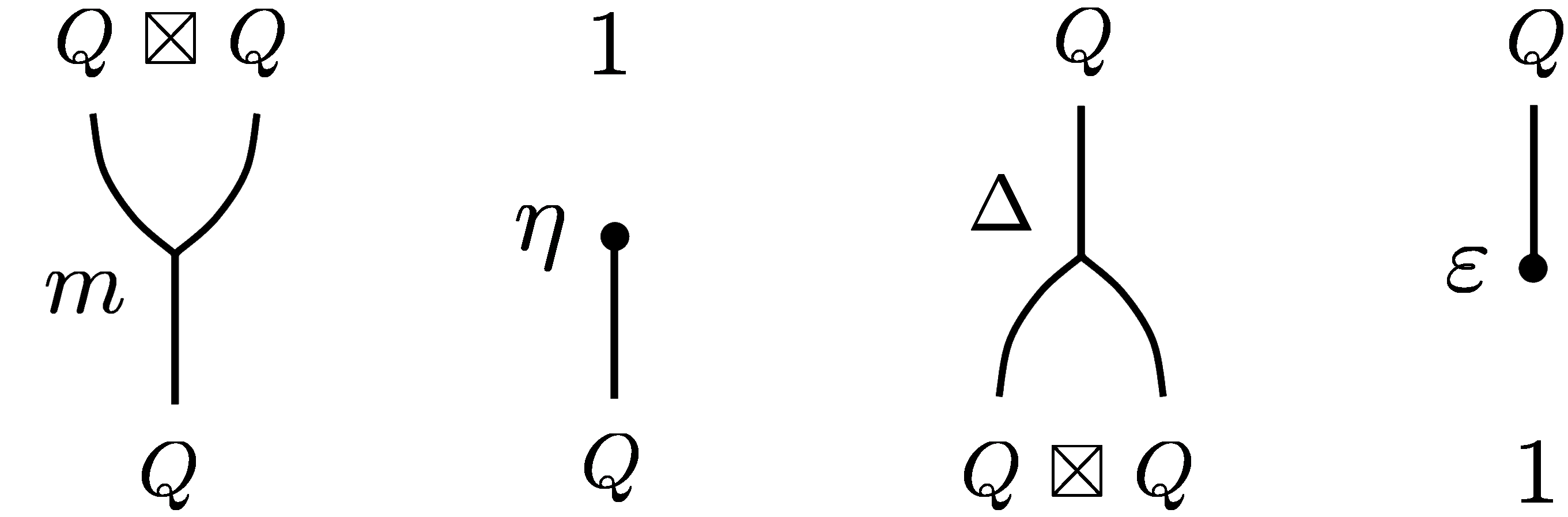}
	\end{center}\caption{The building blocks of the string diagrams used in the definition (see Figure ~\ref{fig:Frobenius alg obj axioms}) of a Frobenius algebra object: (co)multiplication and (co)unit. The diagrams are read from top to bottom. The precise shape of the strings is not important. Because of the distinctive shapes the labels are usually omitted.}\label{fig:Frobenius alg obj}
\end{figure}

\begin{figure}[h]
\centerline{\raisebox{.87cm}{i)}\,\; \includegraphics[scale=.05]{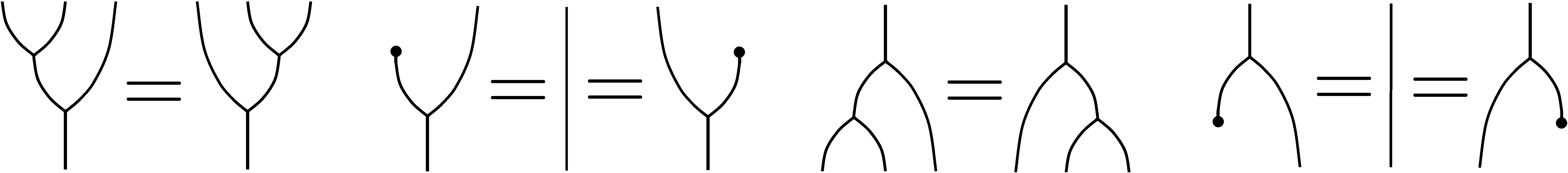}}\bigskip
\centerline{\raisebox{.87cm}{ii)}\,\, \includegraphics[scale=.05]{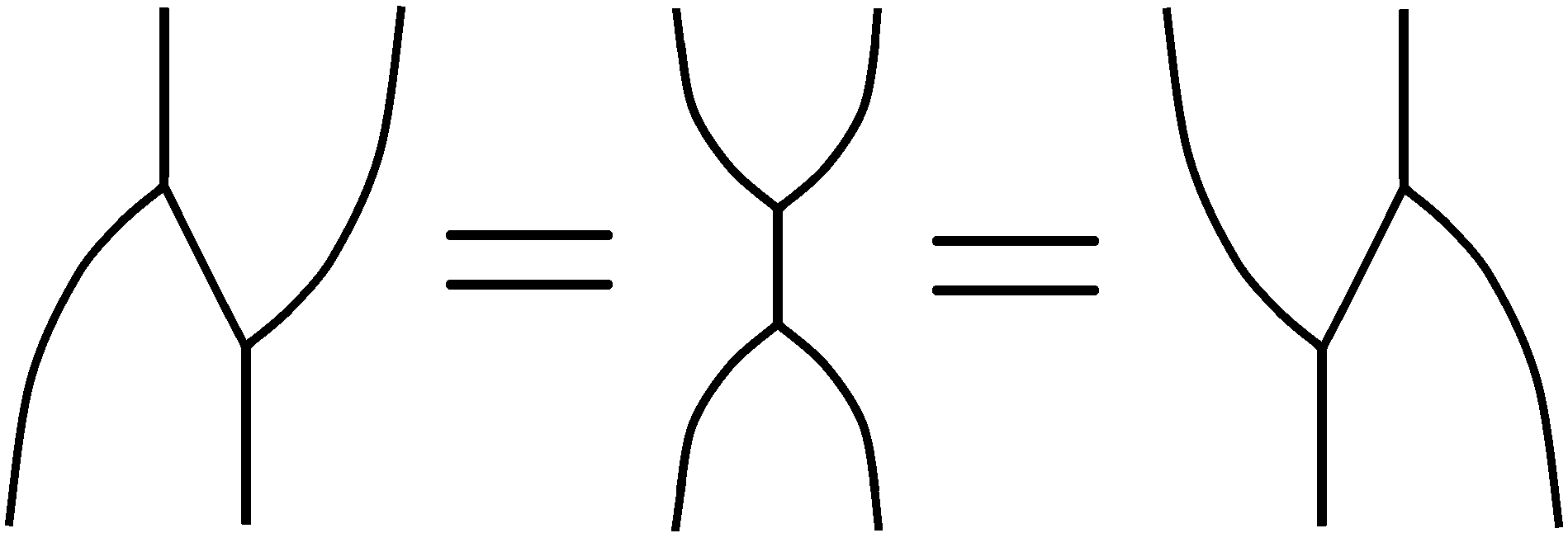} \quad \ \  \raisebox{.87cm}{iii)}\,\, \includegraphics[scale=.06]{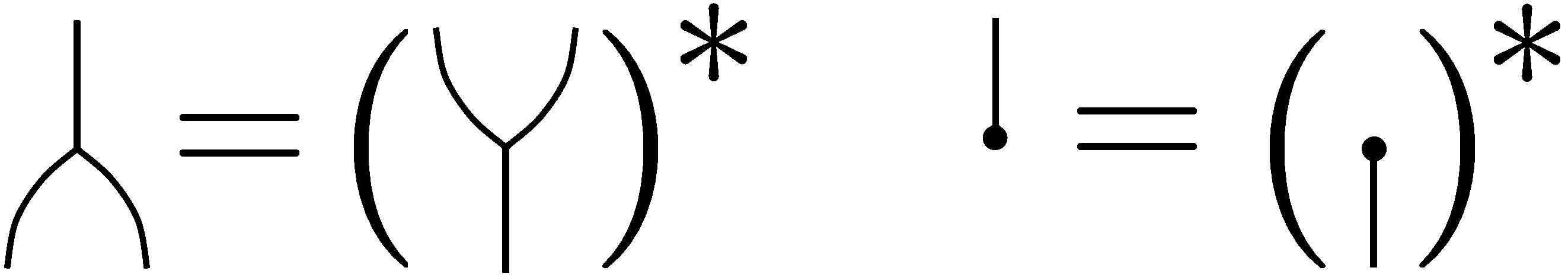}}\bigskip
\centerline{\raisebox{1.5cm}{iv)}\,\,\, \includegraphics[scale=.06]{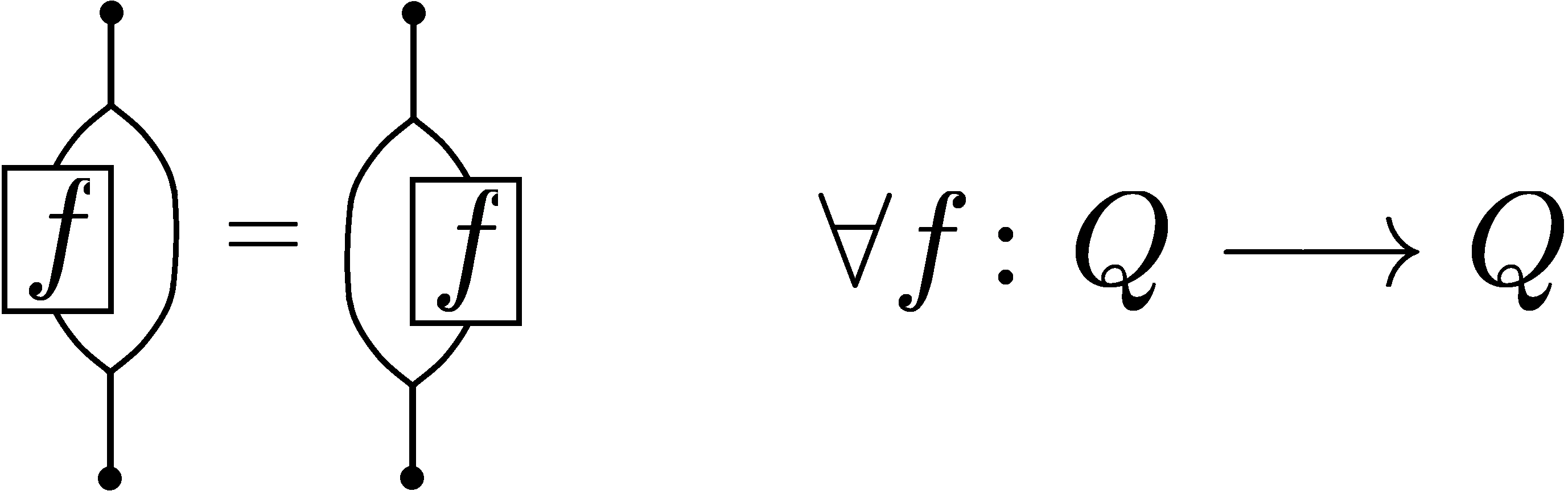} \quad\,\,\,\, \ \  \raisebox{1.5cm}{v)}\,\,\, \includegraphics[scale=.07]{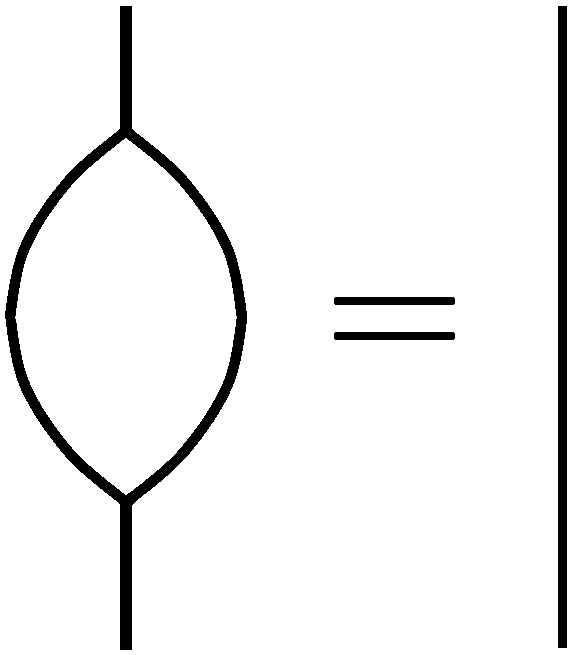}}
\caption{The axioms for a Frobenius algebra object.}\label{fig:Frobenius alg obj axioms}
\end{figure}

Axiom~(i) simply states that multiplication and comultiplication are associative and unital. Axiom~(ii) is called the \textit{Frobenius condition}.
The third axiom implies that the coalgebra structure on $Q$ is determined by its algebra structure, by taking adjoints.

Axiom~(iv) requires $Q$ to be \textit{symmetric}, 
and is equivalent to the condition
$\tikzmath[scale=.16]{\draw[line width=.8] (0,0) arc (-180:0:1) -- +(0,2.2) (0,0) arc (0:180:1) -- +(0,-2) (-1,1) -- +(0,1); \fill (-1,2) circle (.2); \draw [->, line width=.8] (.8,-1) -- +(-.01,0);} = 
\tikzmath[scale=.16]{\pgftransformxscale{-1} \draw[line width=.8] (0,0) arc (-180:0:1) -- +(0,2.2) (0,0) arc (0:180:1) -- +(0,-2) (-1,1) -- +(0,1); \fill (-1,2) circle (.2); \draw[->, line width=.8] (.8,-1) -- +(-.01,0);}$
from \cite{FRS2002}.

Finally, axiom (v) is called the \textit{special} property. The special property means that a Frobenius algebra object is very different from e.g.~cohomology rings of manifolds. In particular, it implies that the algebra $Q$ is \textit{semisimple}: any module over $Q$ is semisimple.

The definition of a Frobenius algebra object may look complicated, but a Frobenius algebra is just an algebra satisfying certain properties: everything is determined by the multiplication and unit maps. 


\subsubsection{Examples}

To get a feeling for what Frobenius algebras can look like, let us have a look at some examples.
Since every semisimple algebra is a direct sum of simple algebras, we restrict our attention to simple algebras.

A trivial example of a Frobenius algebra object is the unit object of the monoidal category.

For another example, consider an object $X \in \C$ and form the Connes fusion $Q = X \boxtimes X^\vee$ of the object with its dual. Then $Q$ is an algebra, indeed a Frobenius algebra. This is the correct generalization of matrix algebras to this context. For instance, taking $\C = \text{Rep}\big(LSU(2)_2\big)$ and $X=V_1$, then $X^\vee \cong V_1$ too and~\eqref{eqn:Clebsch-Gordan++} yields $Q = V_0 \oplus V_2$.\footnote{Another way to understand this example is as follows. The subcategory of $\C$ spanned by $V_0$ and~$V_2$ is equivalent to $\mathbb{Z}_{/2}$-graded vector spaces as a monoidal category. Inside there, we have the Clifford algebra $\langle \, e \mid e^2 = 1 , e \text{ is odd} \, \rangle.$} But $X \boxtimes X^\vee$ is Morita equivalent to the unit object (see Section \ref{s:defects}), so this is still not a very interesting example. However, it leads us to the next example.

Let $\C_k = \text{Rep}\big(LSU(2)_k\big)$ for arbitrary $k$. In $\C_k$, we have
\begin{equation}\label{eqn:Frob alg example}
	V_0 \boxtimes V_0 \cong V_0 \ , \quad V_0 \boxtimes V_k \cong V_k \ , \quad V_k \boxtimes V_0 \cong V_k \ , \quad V_k \boxtimes V_k \cong V_0 \ .
\end{equation}
These relations show that the full subcategory of $\mathcal C_k$ consisting of objects isomorphic to sums of $V_0$ and $V_k$ is again a monoidal category.
The monoidal structure is not fully determined by the relations \ref{eqn:Frob alg example}; rather, up to equivalence of
monoidal categories, there are two different monoidal categories that satisfy \ref{eqn:Frob alg example}.
One is the category of $\mathbb Z_{/2}$-graded vector spaces, and the other is a version of it where the associator is
twisted by a cocycle $c$ representing a non-trivial cohomology class $[c]\in H^3(\mathbb{Z}_{/2},S^1)\cong \mathbb{Z}_{/2}$.
It turns out that the subcategory of $\mathcal C_k$ is equivalent to the category of $\mathbb Z_{/2}$-graded vector spaces only if $k$ is even,
and $Q = V_0  \oplus V_k$ has a Frobenius algebra structure if and only if $[c] = 0$, if and only if $k$ is even.


\subsubsection{Classification}

There is a beautiful classification of all simple Frobenius algebra objects in $\C_k = \text{Rep}\big(LSU(2)_k\big)$ due to Ostrik~\cite{Ostrik2003} (inspired by the CIZ classification of modular invariants for $\hat{\mathfrak{sl}}(2)$ \textsc{cft}s~\cite{CIZ1987}) that goes as follows.

Up to Morita equivalence, Frobenius algebra objects in $\C_k$ fall in two infinite families corresponding to the following Dynkin diagrams:
\begin{align*}
	& \text{type $A_n$:} \qquad Q = V_0 \ , \quad k = n-1 \ , \quad \quad(n\ge 1) \\
	& \text{type $D_n$:} \qquad Q = V_0 \oplus V_k \ , \quad k = 2n-4 \ , \quad \quad(n\ge 4).\\
\intertext{In addition, there are three exceptional cases:}
	& \text{type $E_6$:} \qquad Q = V_0 \oplus V_6 \ , \quad k = 10 \ , \\
	& \text{type $E_7$:} \qquad Q = V_0 \oplus V_8 \oplus V_{16} \ , \quad k = 16 \ , \\
	& \text{type $E_8$:} \qquad Q = V_0 \oplus V_{10} \oplus V_{18} \oplus V_{28} \ , \quad k =28 \ .
\end{align*}
For each type we have listed a representative $Q$ of the Morita equivalence class.

\subsection{The FRS construction}\label{s:FRS}

In Section~\ref{s:loop group nets} we have mentioned that there is a construction that takes a chiral \textsc{cft} as input, along with a Frobenius algebra object in the category $\C$ provided by the $\chi$\textsc{cft}, and produces a full \textsc{cft} as output.

In the realm of algebraic quantum field theory, this result is due to Longo and Rehren~\cite{LR2004, KL2004a}. They start with a conformal net and a Frobenius algebra object, and construct a net of von Neumann algebras on $\mathbb{R}^2$ with its Minkowski signature. Such a net assign von Neumann algebras to open subsets of $\mathbb{R}^2$ in such a way that the algebras commute if the opens are causally separated.

Instead of elaborating on this construction we will discuss another approach, which is due to Fuchs, Runkel and Schweigert~\cite{FRS2002,FRS2004,FRS2004a,FRS2005,FRS2006}. This is a big body of work, and we will only outline some of its aspects.

\subsubsection{The partition function}

Let us at least describe how to take a $\chi$\textsc{cft} and a Frobenius algebra object and assign a number $Z(\Sigma)\in\mathbb{C}$ to a closed Riemann surface $\Sigma$ (c.f.~the discussion in Section~\ref{s:Segal CFT}).

A $\chi$\textsc{cft} assigns to $\Sigma$ a functor $\C_\text{in} \longrightarrow \C_\text{out}$. Recall that a non-extended $\chi$\textsc{cft} assigns to a closed surface $\Sigma$ a linear map $\mathbb{C}\longrightarrow\mathbb{C}$, which is completely determined by an element of $\mathbb{C}$. In the present context, something similar happens. The category $\mathsf{Vect}$ is the unit object of the target category $\mathsf{LinCat}$ of $\mathbb C$-linear categories. Indeed, $\mathsf{LinCat}$ is equipped with a tensor product operation, say $\otimes$, such that, for any linear category $\C$, $\mathsf{Vect} \otimes \C = \C$. Now a linear functor $f\colon \mathsf{Vect}\longrightarrow\mathsf{Vect}$ is completely determined by the image $V \coloneqq f(\mathbb{C})$, so for any $X\in\mathsf{Vect}$ we have $f(X)=X\otimes V$. The vector space $V$ associated to the functor $\mathcal C_{\text{in}}\longrightarrow \mathcal C_{\text{out}}$ is called the \textit{space of conformal blocks} associated to $\Sigma$ by the $\chi$\textsc{cft}. There is also a canonical element $\omega\in V$ provided by the structure of the $\chi$\textsc{cft}: it is the image
\begin{align*}
	\mathbb{C} = H_\lambda \longrightarrow H_{f(\lambda)} = V \ , \quad 	1  \longmapsto \ \omega \ ,
\end{align*}
for $\lambda=\mathbb{C} \in \mathsf{Vect}=\C_\text{in}$.

\begin{figure}[h]
	\begin{center}
	\includegraphics[scale=.11]{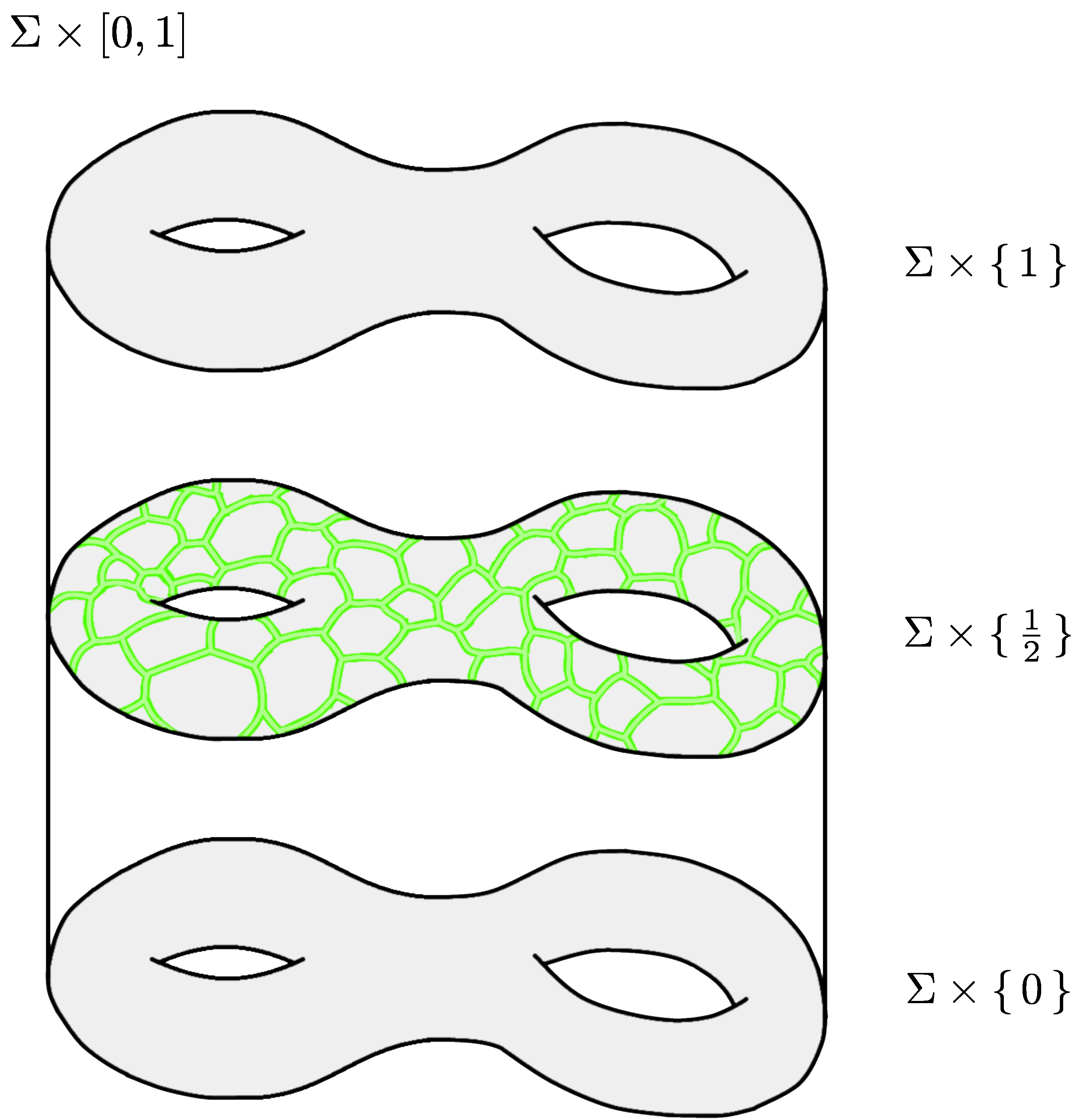}
	\end{center}\caption{The product of a closed Riemann surface $\Sigma$ with the unit interval. The middle slice is decorated by a ribbon graph with trivalent vertices.
Actually, the middle slice $\Sigma\times \{\frac12\}$ should not be there in the picture; it is only needed to explain where the ribbon graph sits.
	}\label{fig:FRS}
\end{figure}

To see where the Frobenius algebra object comes in, consider the 3-manifold $\Sigma \times [0,1]$
which is obtained by crossing $\Sigma$ with a unit interval.
Decorate the middle slice $\Sigma\times\{\frac 12\}$ with a \textit{ribbon graph}, whose edges are `thickened' to little two-dimensional ribbons, as shown in Figure~\ref{fig:FRS}. 
We only allow for trivalent vertices. Now give the ribbons an orientation, so as to get a directed ribbon graph.
This can always be done in such a way that each vertex has at least one incoming ribbon and at least one outgoing ribbon. This allows us to further color the graph with the Frobenius algebra object~$Q$ and its multiplication~$m$ and comultiplication~$\Delta$ according to the rules shown in Figure~\ref{fig:colored ribbon graph}.

\begin{figure}[h]
	\begin{center}
	\includegraphics[scale=.11]{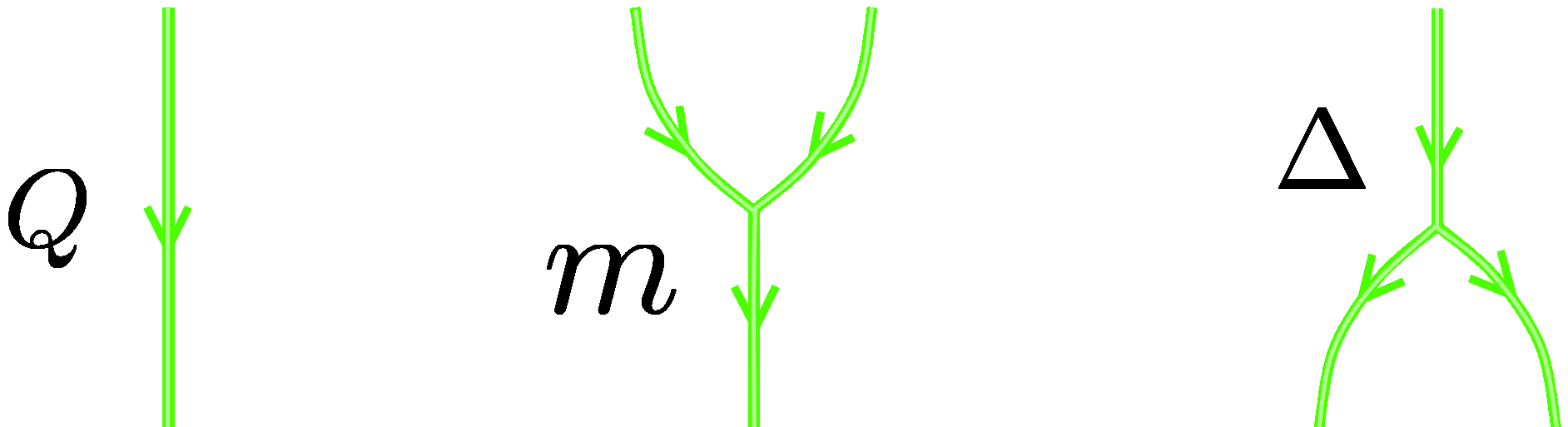}
	\end{center}\caption{A directed ribbon graph with trivalent vertices can always be colored by these rules (cf.~the string diagrams for the (co)multiplication of Frobenius algebra objects in Figure~\ref{fig:Frobenius alg obj}).}\label{fig:colored ribbon graph}
\end{figure}

In order to assign a number to $\Sigma$, FRS invoke the existence of a three-dimensional topological quantum field theory (\textsc{tqft}).\footnote{This 
not just a plain \textsc{tqft} that associates operators to 3-bordisms,
but these 3-bordisms can come equipped with suitably colored ribbon graphs.
Moreover, the ends of those ribbon graphs on the boundary of a bordism $\Sigma$ can give colored marked points on $\partial \Sigma$.}
The latter assigns to the three-manifold $\Sigma\times[0,1]$ with colored ribbon graph an element $c \in V\otimes\bar V$ of the conformal blocks of $\partial\big(\Sigma\times[0,1]\big)=\Sigma\sqcup\bar{\Sigma}$.
Here, $\bar\Sigma$ denotes the manifold $\Sigma$ with the reversed orientation.
The partition function is then given by
	\[ Z(\Sigma) =\langle \, c \, , \, \omega\otimes\bar \omega \, \rangle_{V\otimes\bar V} \in \mathbb{C} \ . \]
Using the axioms of a (special symmetric) Frobenius algebra object, one can then show that
this number does depend neither on the choice of ribbon graph nor on the orientation of the ribbons.

\subsubsection{The state space}\label{s:state space}

The next question concerns the state space of the \textsc{cft}: what is the Hilbert space associated to a circle? For that, one considers the same kind of picture as above, but with a hole, as shown in Figure~\ref{fig:FRS hole}.

\begin{figure}[h]
	\begin{center}
	\includegraphics[scale=.11]{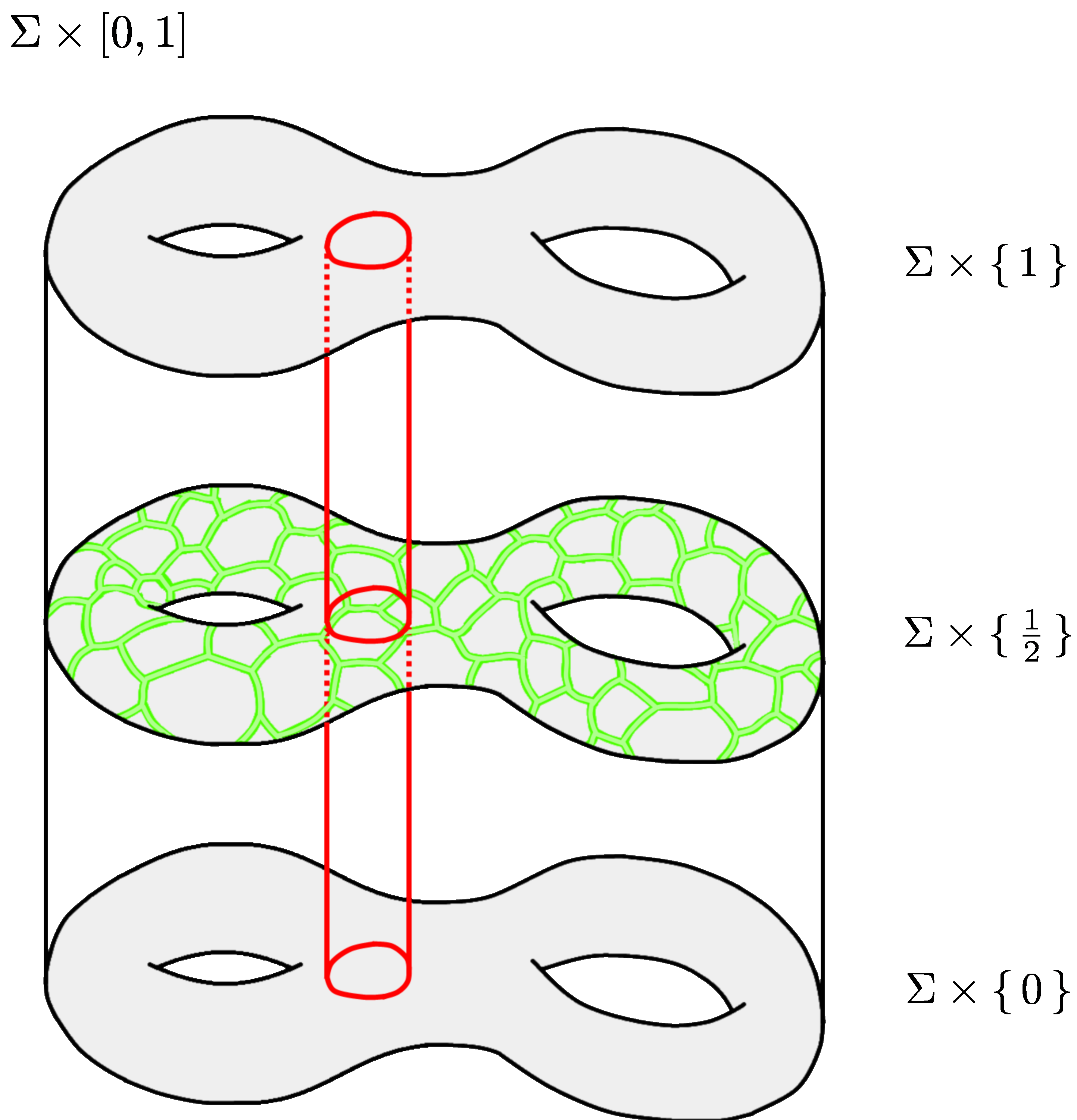}
	\end{center}\caption{The product of a closed Riemann surface $\Sigma$ with the unit interval, with a cylinder taken out.}\label{fig:FRS hole}
\end{figure}

We would like to add a single ribbon going down the middle of the hole, in such a way that it is compatible with the ribbon graph on the surface $\Sigma\times\{\frac 12\}$. Imagine a single strand coming from above, along with another one coming from below. The question is: what do we label these by? Since there are two strands (one from above, and one from below), we are looking for two objects of~$\C$.
More accurately, we are looking for one object in $\C$ and one in $\bar{\C}$, where the bar now stands for complex conjugation (the category $\bar \C$ has the same objects as $\C$, but $\text{Hom}_{\bar \C}(\lambda,\mu)$ is the complex conjugate of $\text{Hom}_{\C}(\lambda,\mu)$. Given an object $\lambda\in\C$, we shall denote by $\bar\lambda$ the corresponding object of $\bar\C$. The Hilbert space $H_{\bar\lambda}$ associated to $\bar\lambda\in\bar\C$ is then the complex conjugate of the Hilbert space $H_\lambda$ associated to $\lambda\in\C$).
Equivalently, we are looking for a single object of $\C \times \bar{\C}$. It turns out that this is not quite general enough. What we are really after is an object of $\C\otimes \bar{\C}$, that is, a formal direct sum of objects of $\C \times \bar{\C}$.

To see what the compatibility condition is, consider a juncture of the ribbon in the cylinder with a ribbon from $\Sigma\times\{\frac 12\}$, as depicted in Figure~\ref{fig:ribbon junction}. The ribbon graph can always be arranged so that there is such a junction.

\begin{figure}[h]
	\begin{center}
	\includegraphics[scale=.15]{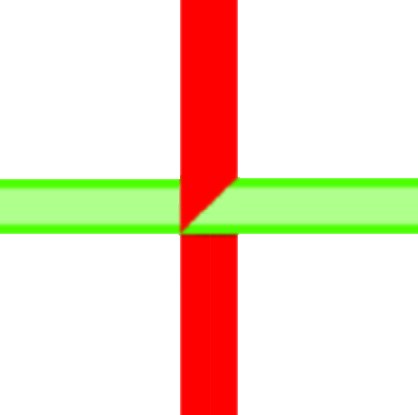}
	\end{center}\caption{A junction between the ribbon that is dangling down in the cylinder and a ribbon from the graph on $\Sigma \times \{ \, \frac 12 \, \}$.}\label{fig:ribbon junction}
\end{figure}

The value (object of $\C \otimes \bar{\C}$) that is assigned to the ribbon on the cylinder has to satisfy the compatibility requirements shown in Figure~\ref{fig:ribbon junction axioms}. 

\begin{figure}[h]
	\begin{center}
	\includegraphics[scale=.15]{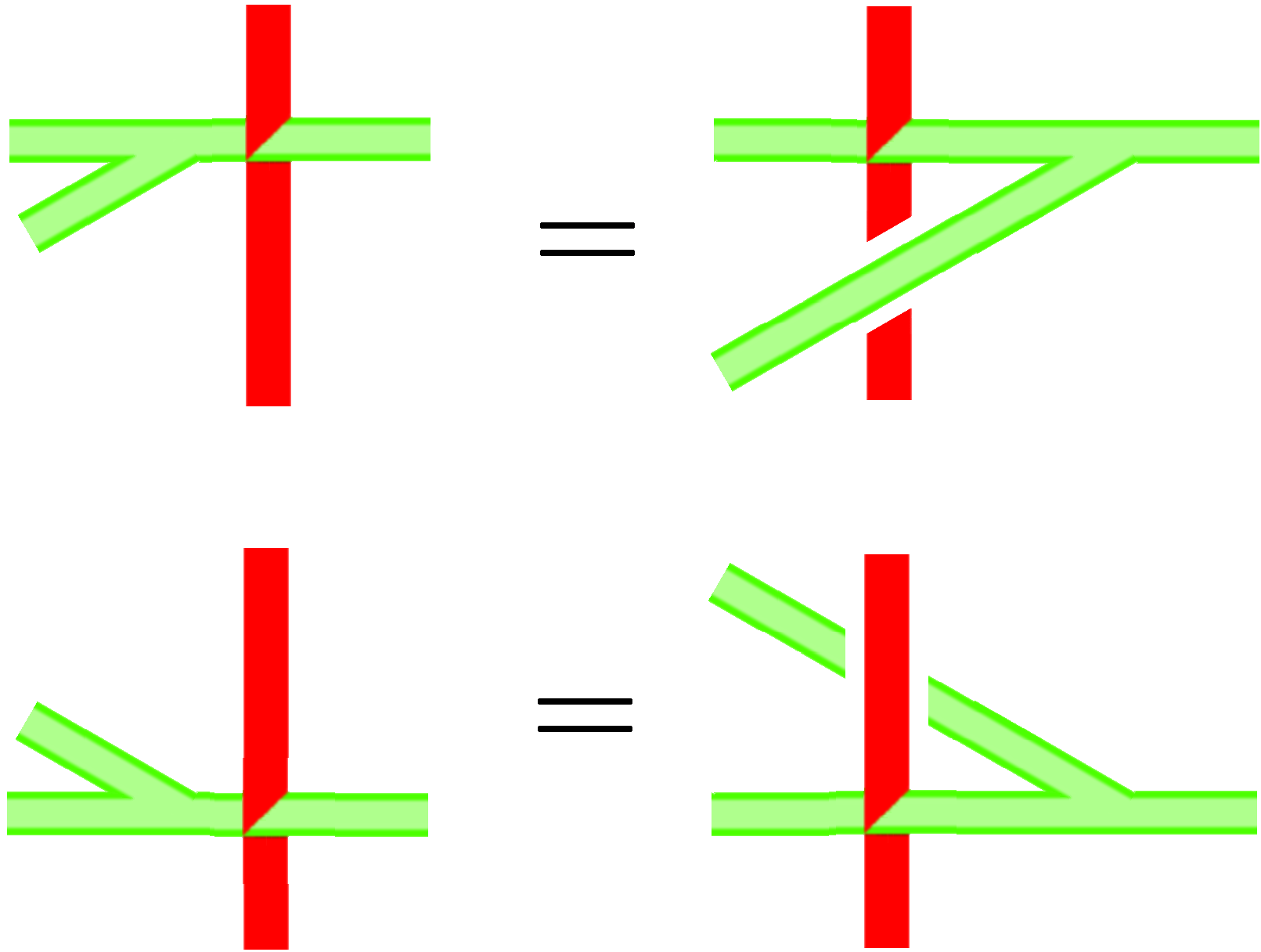}
	\end{center}\caption{Compatibility requirements between the vertical ribbon in the cylinder (cf.~Figure~\ref{fig:FRS hole}) and the ribbons of the graph on $\Sigma\times  \{ \, \frac 12 \, \}$.}\label{fig:ribbon junction axioms}
\end{figure}

It turns out that there is an object that is universal with respect to these properties: the \textit{full centre of $Q$}, given by\footnote{The `$Z$' in \eqref{eqn:full centre} stands for centre, and should not be mistaken for the partition function.}
\begin{equation}\label{eqn:full centre}
	Z_\text{full}(Q) =\bigoplus_{\mu,\lambda} \text{Hom}_{Q,Q}\big(\lambda \boxtimes^+ Q \boxtimes^- \mu^\vee,Q\big) \otimes \lambda \otimes \bar{\mu} \ \in \C \otimes \bar{\C}\ .
\end{equation}
This equation require some explanations.
First of all, $\mu^\vee$ is the dual of $\mu$, characterized by the existence of a non-zero map $1\to \mu\boxtimes \mu^\vee$.
The symbols $\boxtimes^\pm$ denote the tensor product from above and below (as opposed to from right or from the left), which can be defined because the monoidal category $\C$ is braided\footnote{For an object $X$ in a braided category, the space of possible ``multiplications by $X$'' is a circle, in which $X\boxtimes -$ and $-\boxtimes X$ are only two points.}. 
So $\lambda \boxtimes^+ Q \boxtimes^- \mu^\vee$ could more accurately be drawn as
$\tikzmath{\def\h{.3}
\node at (0,0){$\scriptstyle Q$};\node at (0,\h){$\scriptstyle \boxtimes$};\node at (0,-\h){$\scriptstyle \boxtimes$};\node at (0,2*\h){$\scriptstyle \lambda$};\node at (.05,-2*\h){$\scriptstyle \mu^\vee$};}
$.\\
That object is a $Q$-$Q$-bimodule: it comes with maps
$Q\boxtimes(\lambda \boxtimes^+ Q \boxtimes^- \mu^\vee)\to \lambda \boxtimes^+ Q \boxtimes^- \mu^\vee$ and 
$(\lambda \boxtimes^+ Q \boxtimes^- \mu^\vee)\boxtimes Q\to \lambda \boxtimes^+ Q \boxtimes^- \mu^\vee$ induced from the left and right actions of $Q$ on itself.
(The reader who finds $\boxtimes^\pm$ unpleasant can take $\lambda \boxtimes^+ Q \boxtimes^- \mu^\vee$ to simply mean $\lambda \boxtimes Q \boxtimes \mu^\vee$;
the braiding is then used to endow this object with the structure of a $Q$-$Q$-bimodule.)
Finally, $\text{Hom}_{Q,Q}$ in \eqref{eqn:full centre} refers to the space of bimodule homomorphisms.
Since this is a vector space, while $\lambda\in\C$ and $\bar\mu\in\bar \C$, the full centre of $Q$ is an object of $\C \otimes \bar{\C}$. 

The state space of the full \textsc{cft} associated to the $\chi$\textsc{cft} together with the Frobenius algebra object~$Q$ is then given by
\begin{equation}\label{eqn:H_full}
	H_\text{full} \coloneqq \bigoplus_{\mu,\lambda} \text{Hom}_{Q,Q}\big(\lambda \boxtimes^+ Q \boxtimes^- \mu^\vee,Q\big) \otimes H_\lambda \otimes \overline{H_\mu} \ .
\end{equation}
In the `Cardy case', where $Q=1$ is the unit object, this expression reduces to $H_\text{full} = \bigoplus_\lambda H_\lambda \otimes \overline{H_\lambda}$.

Equation~\eqref{eqn:H_full} is the result of FRS that we were after. The discussion in Section~\ref{s:FRS} mainly serves to provide some motivation for this result, as it is very important for the remainder. Indeed, below we will reproduce this result in the context of extended \textsc{cft}. We will define what an extended \textsc{cft} assigns to points and to intervals. Then we will take the two halves of a circle, and fuse them over the algebra associated to their boundary. Comparing the resulting Hilbert space with~\eqref{eqn:H_full} will provide a check of our formalism.

\subsubsection{Defects}\label{s:defects}

We mention one more feature of the FRS construction. Recall that two algebras $A$ and $B$ are said to be \textit{Morita equivalent} if there exist bimodules $_A X_B$ and $_B Y_A$ such that there are isomorphisms
	\[ {}_A X \otimes_B Y_A\cong {}_A A_A \quad\text{and}\quad {}_B Y \otimes_A X_B\cong {}_B B_B \ . \]

Now if we have two Frobenius algebra objects $Q$ and $Q'$ that are Morita equivalent (with the definition interpreted internally to the category~$\C$), the resulting full \textsc{cft} does not change. In particular, we get the same state space~\eqref{eqn:H_full}.

Kapustin and Saulina~\cite{KS2011} have a nice way of reinterpreting this fact. Recall the special property, axiom~(v) from Section~\ref{s:Frobenius}, of our Frobenius algebra. Figure~\ref{fig:ribbon special property} shows the special property in  terms of the directed ribbon graph on $\Sigma \times \{ \frac 12 \}$.

\begin{figure}[h]
	\begin{center}
	\includegraphics[scale=.11]{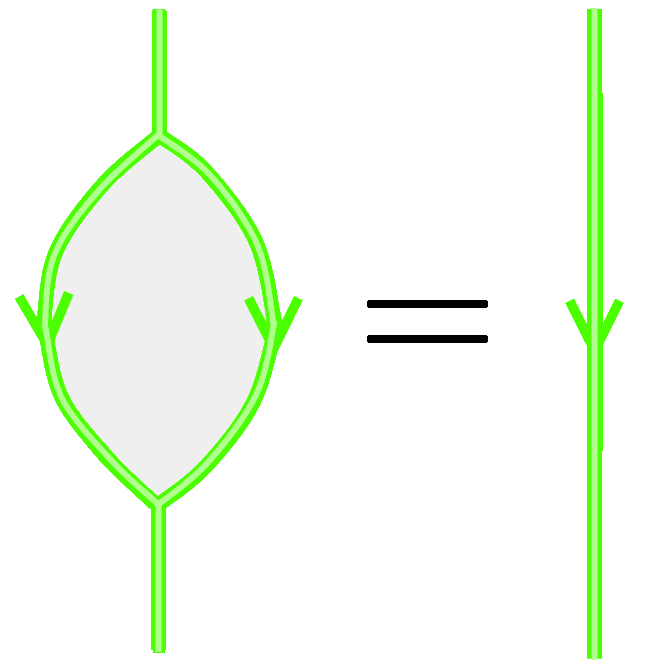}
	\end{center}\caption{The special property of Frobenius algebra objects allows us to fill in the holes in a ribbon graph.}\label{fig:ribbon special property}
\end{figure}

In other words: we can fill in the holes in the graph. If we do this everywhere, we get a three-manifold with an embedded surface, and the result looks like in Figure~\ref{fig:FRS filled}.

\begin{figure}[h]
	\begin{center}
	\includegraphics[scale=.11]{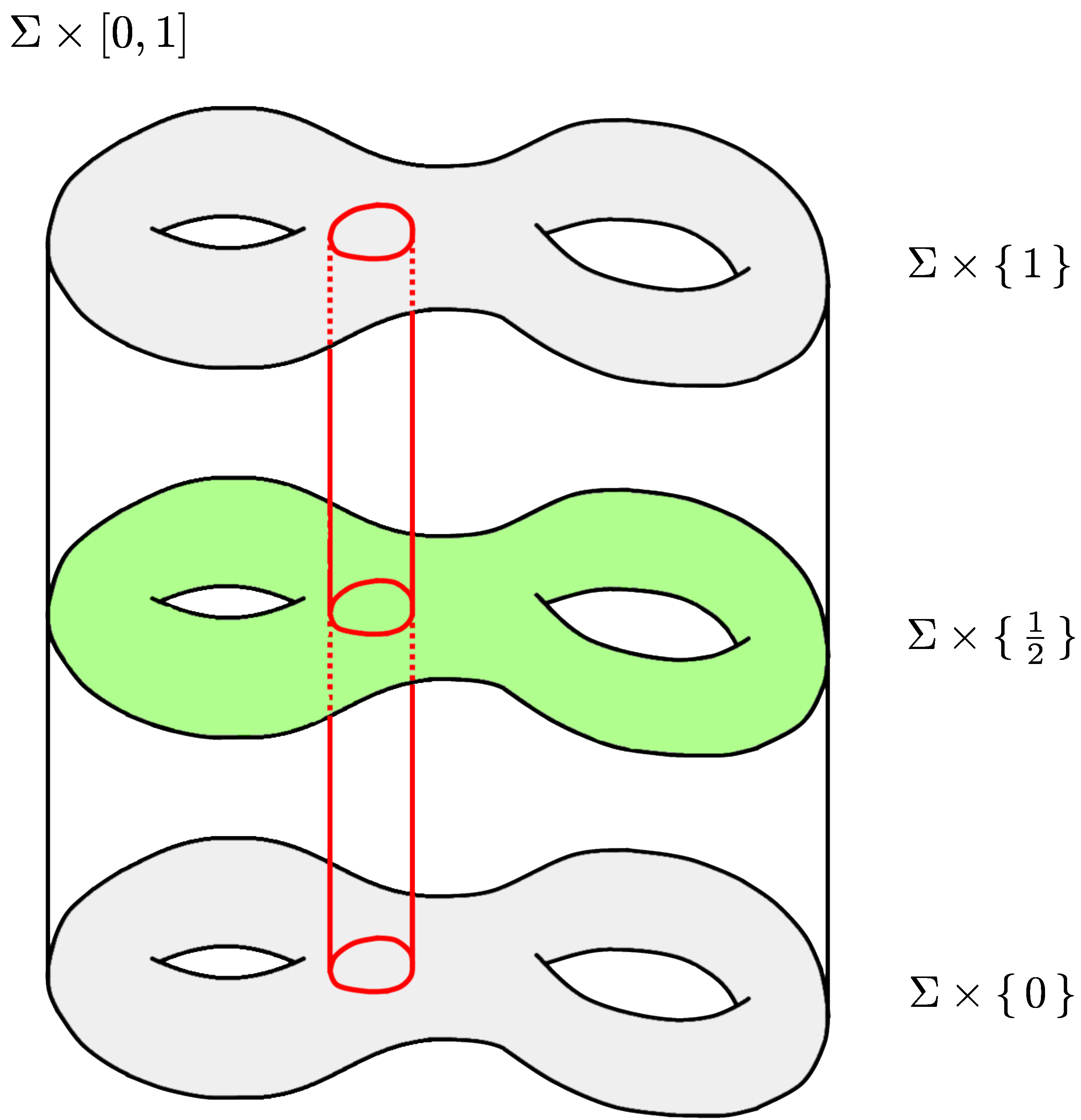}
	\end{center}\caption{The product of a closed Riemann surface $\Sigma$ with the unit interval. The holes in the ribbon graph on $\Sigma \times \{ \frac 12 \}$ have been filled using the special property of the Frobenius algebra.}\label{fig:FRS filled}
\end{figure}

The three-manifold $\Sigma \times [0,1]$ now is decorated by a codimension-one \textit{defect} (`surface operator'). According to \cite{KS2011} this defect only contains the information of the Morita equivalence class of~$Q$, and not of $Q$ itself. Moreover, one can go back to the ribbons and reinterpret them as actual embedded surfaces whose one-dimensional boundaries are labelled by $Q$ and whose two-dimensional interior corresponds to the defect. Upon filling in the holes in the ribbon graph we get rid of the boundary lines, we no longer see $Q$, but only its Morita equivalence class, in the form of a defect.

The partition function of $\Sigma$ is then obtained by evaluating the three-dimensional \textsc{tqft} on this three-manifold with defect.

\subsubsection{Defects between conformal nets}

The last ingredient we need in order to make sense of the three-manifold $\Sigma \times [0,1]$ with embedded surface within the formalism of conformal nets is the notion of a defect, leading to defects between conformal nets~\cite{BDH2009}.
For the purpose of the previous discussion, we would only need defects from a conformal net to itself. But in general, defects behave like bimodules: given two conformal nets $\A$ and $\B$, there is a notion of an $\A$-$\B$-defect ${}_\A D_\B$.

\begin{definitions}
A \textit{bicolored interval} is a contractible one-manifold~$I$ equipped with a decomposition $I = I' \cup I''$ that looks like one of
\smallskip

\begin{center}
\includegraphics[scale=.16]{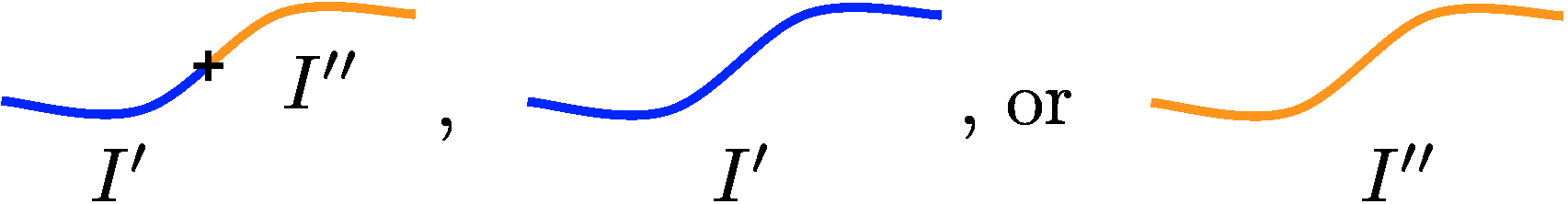}
\end{center}
\smallskip

\noindent along with a local coordinate at the color-changing point.

A \textit{defect} between conformal nets $\A$ and $\B$ is a functor from the category
	\[ \begin{cases} \ \text{objects: bicolored intervals;} \\ \ \text{morphisms: color preserving embeddings that respect the local coordinate;} \end{cases} \]
to the category
	\[ \begin{cases} \ \text{objects: von Neumann algebras;} \\ \ \text{morphisms: homomorphisms and antihomomorphisms;} \end{cases} \]
sending an embedding $I \hookrightarrow J$ to a homomorphism $D(I) \to D(J)$ if it preserves orientation, and to an antihomomorphism $D(I) \to D(J)^\text{op}$ if it reverses orientation. We have that $D(I) = \A(I)$ if $I''$ is empty, and $D(I) = \B(I)$ if $I'$ is empty. Moreover, $D$ satisfies axioms similar to those of conformal nets.
\end{definitions}

\section{Constructing extended conformal field theories}\label{s:construction}

Until this point we have mostly discussed the work of others. It is time to come back to extended \textsc{cft}. In this section we will partially construct an extended \textsc{cft} starting from a $\chi$\textsc{cft} that is given to us in the form of a conformal net~$\A$, and a Frobenius algebra object $Q\in\text{Rep}\,(\A)$.

Recall from Section~\ref{s:loop group nets} that a representation of $\A$ consists of a Hilbert space~$H$ equipped with compatible actions of $\A(I)$ for every $I \subsetneq S^1$. In Section~\ref{s:loop group example} we have seen how the monoidal structure on $\text{Rep}\,(\A)$ is defined: we identify the left half and the right half of $S^1$ with $[0,1]$ and set $A \coloneqq \A\big([0,1]\big)$. This provides a fully faithful embedding of $\text{Rep}\,(\A)$ into the category of $A$-$A$-bimodules, and the tensor product on $\text{Rep}\,(\A)$ is inherited from the monoidal structure on $A$-$A$-bimodules:
	\[ (H,K) \longmapsto H \boxtimes_A K \,. \]
We can therefore view the Hilbert space $Q$ as an $A$-$A$-bimodule.

\subsection{The algebra associated to a point}

We start with dimension zero. The algebra that is associated to a point can be defined in the world of $A$-$A$-bimodules:
\begin{equation}\label{eqn:alg assoc to point}
	B \coloneqq \text{Hom}\big( L^2 A_A,Q_A \big)
\end{equation}
This is the set of bounded linear maps that commute with the right action of~$A$.
The algebra \eqref{eqn:alg assoc to point} also appears in the work of Longo and Rehren~\cite{LR2004}; here we present a different construction of it.
The reason that this works is the following surprising fact.
\begin{lemma}\label{lem:alg assoc to point}
The vector space $B$ is an algebra, and indeed a von Neumann algebra. Moreover, there is an algebra homomorphism $A \longrightarrow B$.
\end{lemma}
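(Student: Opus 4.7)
The plan is to construct an algebra structure on $B$ using the multiplication $m\colon Q\boxtimes_A Q \to Q$ of the Frobenius algebra $Q$. For $f\in B$, define the ``left multiplication by $f$'' operator $\hat f\colon Q\to Q$ as the composite
\[
Q \;\xrightarrow{\sim}\; L^2 A\boxtimes_A Q \;\xrightarrow{\,f\boxtimes_A \text{id}_Q\,}\; Q\boxtimes_A Q \;\xrightarrow{\,m\,}\; Q\,,
\]
where the first arrow is the inverse of the canonical unit isomorphism $L^2 A\boxtimes_A Q\cong Q$ and the middle arrow $f\boxtimes_A\text{id}_Q$ is well-defined because $f$ is right-$A$-equivariant (cf.\ the discussion of Connes fusion in Section~\ref{s:target cat}). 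A routine check shows $\hat f$ is right-$A$-equivariant. I then set $f\cdot g:=\hat f\circ g$, which lies in $B$ since both $\hat f$ and $g$ are. Associativity $\widehat{f\cdot g}=\hat f\circ\hat g$ follows from associativity of $m$, and the unit axiom $m\circ(\eta\boxtimes_A\text{id}_Q)=\text{id}_Q$ (interpreted through the unit isomorphism) yields $\hat\eta=\text{id}_Q$, so the Frobenius unit $\eta$ becomes the two-sided unit for $\cdot$.

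For the homomorphism $A\to B$, send $a\in A$ to $\phi_a:=\eta\circ L_a$, where $L_a\colon L^2 A\to L^2 A$ is left multiplication by $a$. Since $L_a$ commutes with the right action of $A$ and $\eta$ is an $A$-$A$-bimodule map, $\phi_a\in B$. Multiplicativity $\phi_a\cdot\phi_b=\phi_{ab}$ reduces to the key identity $\widehat{\phi_a}=L_a^Q$ (left multiplication by $a$ on $Q$): one writes $\phi_a\boxtimes_A\text{id}_Q=(\eta\boxtimes_A\text{id}_Q)\circ(L_a\boxtimes_A\text{id}_Q)$, observes that $L_a\boxtimes_A\text{id}_Q$ corresponds under $L^2 A\boxtimes_A Q\cong Q$ to $L_a^Q$ (because the iso is an $A$-$A$-bimodule map), and applies the unit axiom to the $\eta$-factor.

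To promote $B$ to a von Neumann algebra I would equip it with an antilinear involution $\dagger$ coming from the dagger structure of $\text{Rep}(\A)$ and the Frobenius pairing $\varepsilon\circ m\colon Q\boxtimes_A Q\to L^2 A$ built from the counit $\varepsilon$. This pairing exhibits $Q$ as self-dual (Frobenius axiom~iii), which in turn canonically identifies $B=\text{Hom}_A(L^2 A,Q)\cong \text{Hom}_A(Q,L^2 A)$; composing with the Hilbert-space adjoint gives the desired $\dagger\colon B\to B$. Finally, $f\mapsto \hat f$ realises $B$ as a $*$-subalgebra of $\text{End}_A(Q)$; since $\text{End}_A(Q)$ is a von Neumann algebra (it is the commutant of the right $A$-action on $Q$) and the image of $B$ under $f\mapsto\hat f$ can be described as those endomorphisms arising as ``left multiplication by vectors in $Q$'', one argues that this image is ultraweakly closed, so $B$ is a von Neumann algebra by the bicommutant theorem.

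The main obstacle will be verifying the compatibility of $\dagger$ with $\cdot$ (the antihomomorphism property $(f\cdot g)^\dagger=g^\dagger\cdot f^\dagger$ and the involutive property $(f^\dagger)^\dagger=f$), since this uses in an essential way the symmetric axiom~(iv) and the special axiom~(v) of $Q$, together with the precise interaction of the modular conjugation $J\colon L^2 A\to L^2 A$ with the Frobenius counit. A secondary difficulty is showing that the image of the embedding $B\hookrightarrow \text{End}_A(Q)$ is ultraweakly closed; this should follow from suitable boundedness estimates linking the operator norm of $\hat f$ to the norm of $f$, but care is required to ensure everything is controlled uniformly.
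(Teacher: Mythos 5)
Your construction matches the paper's proof essentially step for step: the product $f\cdot g = m\circ(f\boxtimes 1)\circ g$ (i.e.\ $\hat f\circ g$), the unit $\eta$, the homomorphism $a\mapsto \eta\circ L_a$, and the representation $f\mapsto\hat f$ of $B$ on the underlying Hilbert space of $Q$ are all exactly what the paper does, and the involution via the Frobenius self-duality of $Q$ is the paper's $f^\star=(f^*\boxtimes 1)\circ\Delta\circ\eta$ in only slightly different clothing. The one place you diverge is the final step: instead of trying to prove ultraweak closedness of the image of $B$ in $\operatorname{End}_A(Q)$ directly (with the boundedness estimates you flag as a difficulty), the paper uses the self-duality of $Q$ to build a commuting \emph{right} $B$-action on $Q$, shows the left and right actions are mutual commutants, and concludes by the bicommutant theorem --- a cleaner route you may want to adopt.
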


\begin{proof}[Proof of Lemma~\ref{lem:alg assoc to point}.]
Let us sketch the proof. For convenience we abbreviate $1 \coloneqq L^2 A$ and write $\boxtimes$ instead of $\boxtimes_A$. Recall that the Frobenius algebra object $Q$ comes equipped with a multiplication $m\colon Q \boxtimes Q \longrightarrow Q$, a unit $\eta\colon 1 \longrightarrow A$ and comultiplication $\Delta = m^*$ and $\varepsilon = \eta^*$. We have to define a product, unit, and an involution on $B$, and show that it is a von Neumann algebra.

Let $f$ and $g$ be elements of $B$. The \textit{product} of $f$ and $g$ is defined as the composition
	\[ f\cdot g \colon 1 \xrightarrow{\ g \ } Q \cong 1 \boxtimes Q \xrightarrow{\ f\times 1\ } Q \boxtimes Q \xrightarrow{\ m\ } Q \ . \]
Figure~\ref{fig:B product} shows how this rule can be represented graphically. Using the diagrams it is clear that the product is associative.

\begin{figure}[h]
	\begin{center}
	\includegraphics[scale=.15]{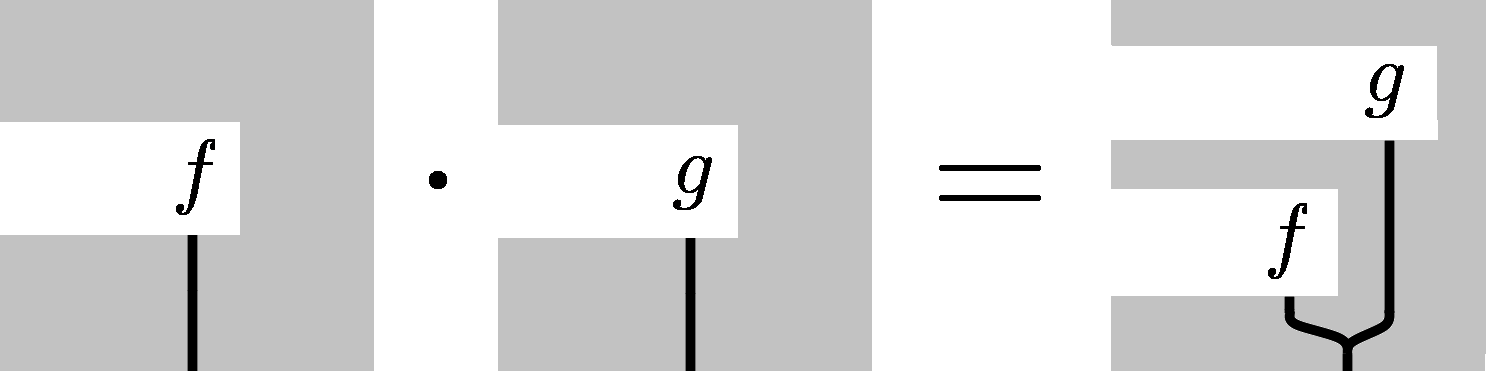}
	\end{center}\caption{Diagrammatic representation of the product of $f,g\in B$. It should be understood as follows. Consider the diagram on the left, representing~$f$. As with the string diagrams in Section~\ref{s:Frobenius} we start at the top, which is empty, corresponding to the unit. Then we apply $f$, which is only linear with respect to the right action of~$A$, so that the left-hand side is `blocked'. In this way, the diagram exactly shows which operations are allowed algebraically. Finally, the line going to the bottom represents a copy of~$Q$.}\label{fig:B product}
\end{figure}

The \textit{unit} of $B$ is just the unit map $\eta \colon 1 \longrightarrow Q$ as shown in Figure~\ref{fig:B unit}. Together with the above product this determines the algebra structure on $B$.

\begin{figure}[h]
	\begin{center}
	\includegraphics[scale=.15]{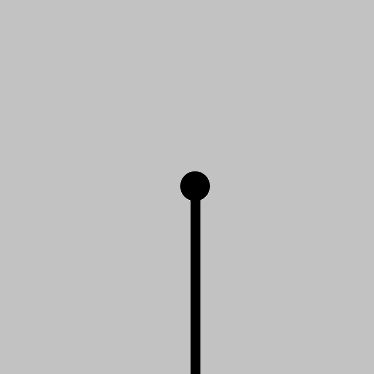}
	\end{center}\caption{Diagrammatic representation of the unit map on~$B$ is the same as in the string diagrams for Frobenius algebra objects (cf.~Figure~\ref{fig:Frobenius alg obj}).}\label{fig:B unit}
\end{figure}

Next, the \textit{involution} is denoted by $^\star$ and is defined as the following composition
	\[ f^\star \colon 1 \xrightarrow{\ \eta\ } Q \xrightarrow{\ \Delta\ } Q \boxtimes Q \xrightarrow{\ f^* \times 1\ } 1 \boxtimes Q \cong Q \ . \]
See Figure~\ref{fig:B involution} for the corresponding diagram.

\begin{figure}[h]
	\begin{center}
	\includegraphics[scale=.15]{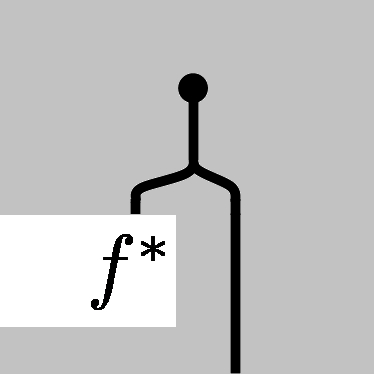}
	\end{center}\caption{Diagrammatic representation of the involution of $f \in B$.}\label{fig:B involution}
\end{figure}

There is also a map from $A$ to $B$, sending an element $a \in A$ to the composition of left multiplication by $a$ with the unit:
	\[ a \colon 1 \xrightarrow{\ a\cdot\ } 1 \xrightarrow{\ \eta\ } Q \ . \]
This can be represented as shown in Figure~\ref{fig:B contains A}.

\begin{figure}[h]
	\begin{center}
	\includegraphics[scale=.15]{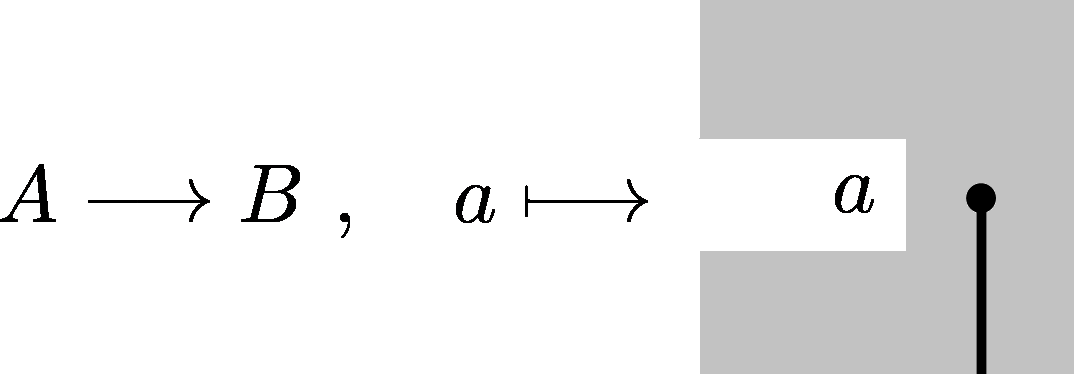}
	\end{center}\caption{Diagrammatic representation of the map $A \longrightarrow B$.}\label{fig:B contains A}
\end{figure}

Let $B(H)$ denote the set of bounded operators on the underlying Hilbert space $H$ of~$Q$. The algebra $B$ acts on $H$ via
\begin{align*}
	B & \longrightarrow B(H) \ , \\
	f \mspace{1mu} & \longmapsto \big[ \ Q \cong 1 \boxtimes Q \xrightarrow{\ f \times 1 \ } Q \boxtimes Q \xrightarrow{\ m \ } Q \ \big] \ .
\end{align*}
The image of $f\in B$ is shown in Figure~\ref{fig:left action on Q}.

\begin{figure}[h]
	\begin{center}
	\includegraphics[scale=.15]{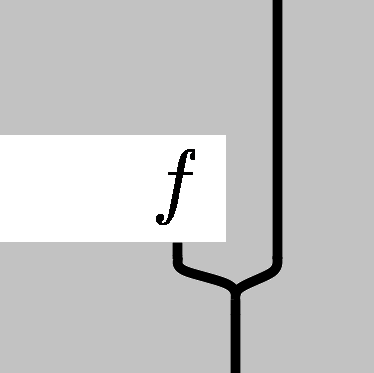}
	\end{center}\caption{Diagrammatic representation of the left action of $B$ on the Frobenius algebra object~$Q$.}\label{fig:left action on Q}
\end{figure}

Actually, $Q$ is a $B$-$B$-bimodule. The right $B$-action is shown in Figure~\ref{fig:right action on Q}. It uses the fact that $Q$ is its own dual (the pairing $\varepsilon\circ m$ is nondegenerate) and that for von Neumann bimodules there is a canonical identification between the dual and the complex conjugate. Therefore we can take the complex conjugate $\bar f$ of $f$ to get a left $A$-linear map.

\begin{figure}[h]
	\begin{center}
	\includegraphics[scale=.15]{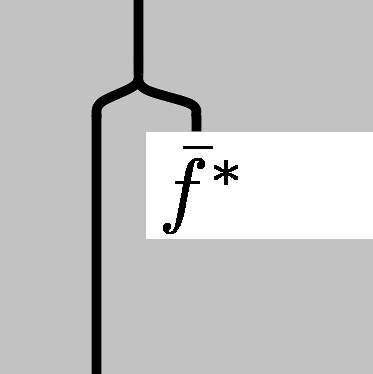}
	\end{center}\caption{Diagrammatic representation of the left action of $B$ on the Frobenius algebra object~$Q$.}\label{fig:right action on Q}
\end{figure}

Finally, one can show that the commutant of the left action of $B$ on $Q$ is the right action of $B$ on $Q$, and vice versa.
The algebra $B$ is its own bicommutant, and therefore a von Neumann algebra.
\end{proof}

One can also check that the Hilbert space $Q$ is canonically isomorphic to $L^2 B$ as a $B$-$B$-bimodule.
In order to show that, one has to construct a positive cone $P\subset Q$ (which corresponds to $L_+^2B$), and
define the modular conjugation $J:Q\longrightarrow Q$.
Those should then satisfy the axioms listed in \cite{Ha1975}.
The cone is defined as $P:=\{b\,\xi\, b^*\,|\,b\in B,\xi\in L^2_+A\}$.
To construct the modular conjugation, one uses the identification ${}_AQ_A\cong {}_AQ^\vee_A$
coming from the pairing $\tikzmath[scale=.8]{\fill[gray!45] (-.5,.65) rectangle (.5,-.1); \fill (0,.1) circle (.055); \draw[line width = 1.1, rounded corners = 1] (0,.1) -- (0,.3) -- (.2,.4) -- (.2,.65) (0,.28) -- (0,.3) -- (-.2,.4) -- (-.2,.65);}$\,,
along with the fact that the dual of a bimodule is always its complex conjugate.
We can then define $J$ to be the composite isomorphism $Q\,\cong\, Q^\vee \,\cong\, \overline Q$.

Recall that the zero-manifolds in the source bicategory of our three-tier \textsc{cft} are generated by two local models: a point with a sign. If $B_+$ is the von Neumann algebra~\eqref{eqn:alg assoc to point} associated to the point with positive orientation, and $B_-$ the von Neumann algebra associated to the point with negative orientation, then $B_+$ is canonically isomorphic to $B_-^\text{op}$.

One can reinterpret the above construction as that of a defect from $\A$ to $\A$. Namely, there exists a defect~$D$, constructed from the Frobenius algebra object~$Q$, such that $D\big([0,1]\big)=B$.
\begin{figure}[h]
	\begin{center}
	\includegraphics[scale=.11]{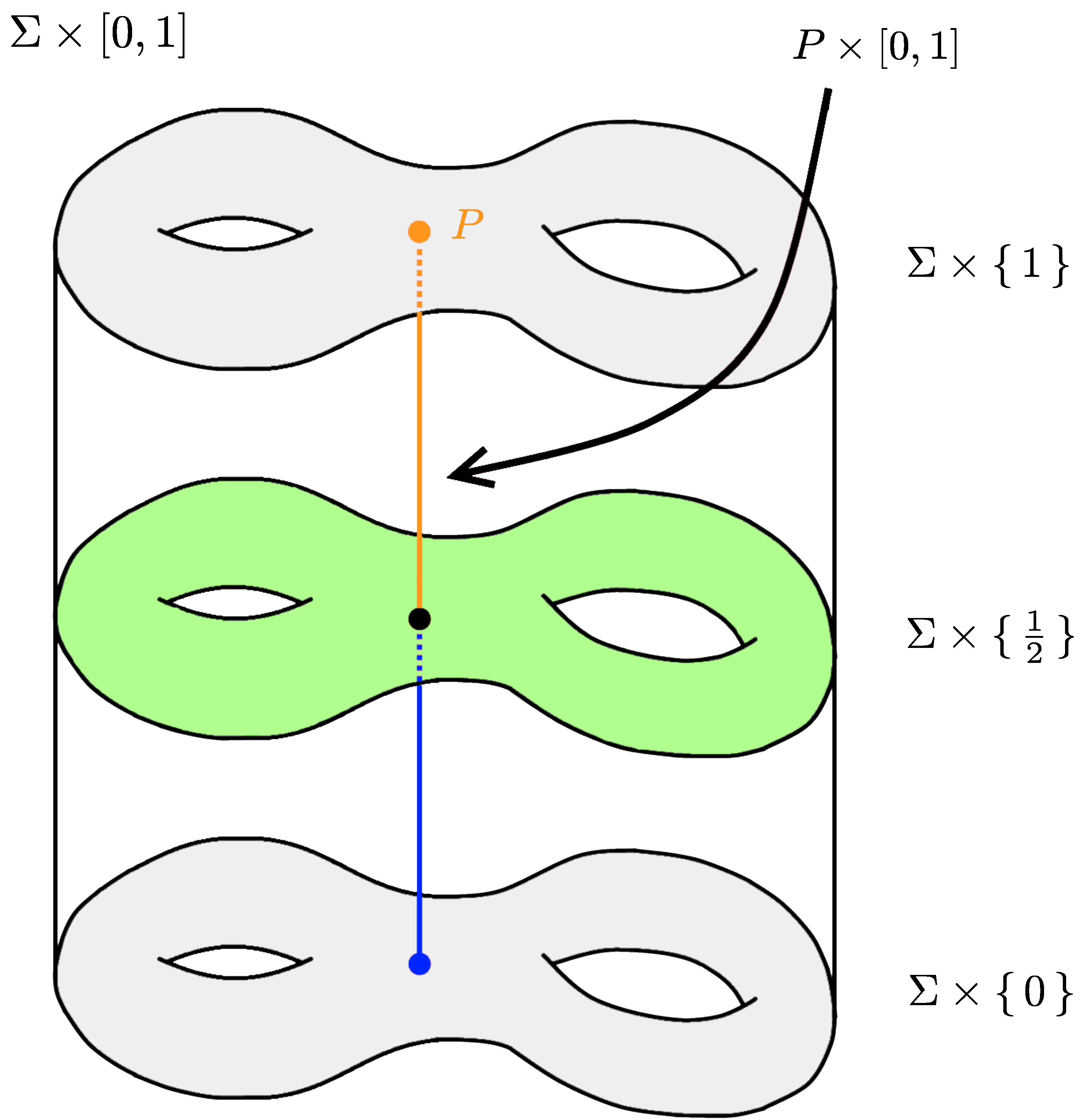}
	\end{center}\caption{The construction of the algebra $B$ associated to a point $P\in\Sigma$ in the context of the FRS/Kapustin-Saulina construction.}\label{fig:algebra associated to point}
\end{figure}
Figure~\ref{fig:algebra associated to point} shows the corresponding defect in the FRS construction. This is a rather special kind of defect, where the precise location in $[0,1]$ where the colors change is actually not important: the only thing that matters is that the interval $[0,1]$ is genuinely bicolored. Such defects are called \textit{topological defects}. The defect that appeared in Section~\ref{s:defects} is also a topological defect: what the \textsc{tqft} assigns to a manifold does not change at all when the location of the defect is moved a bit upwards or downwards.

\subsection{The bimodule associated to an interval}

Points do not have any geometry, and indeed the discussion above was very algebraic. Next we have to decide what to associate to an interval; this will involve some geometry.

We have already seen that, in order to evaluate our extended \textsc{cft} on a point~$P$, we have to form the product $P\times[0,1]$, and evaluate our defect~$D$ on the resulting one-manifold. In the present case we start with an interval $I$. We are again supposed to cross with $[0,1]$ and do something involving the defect, or, equivalently, with the Frobenius algebra object~$Q$ --- see Figure~\ref{fig:algebra associated to interval}.

\begin{figure}[h]
	\begin{center}
	\includegraphics[scale=.11]{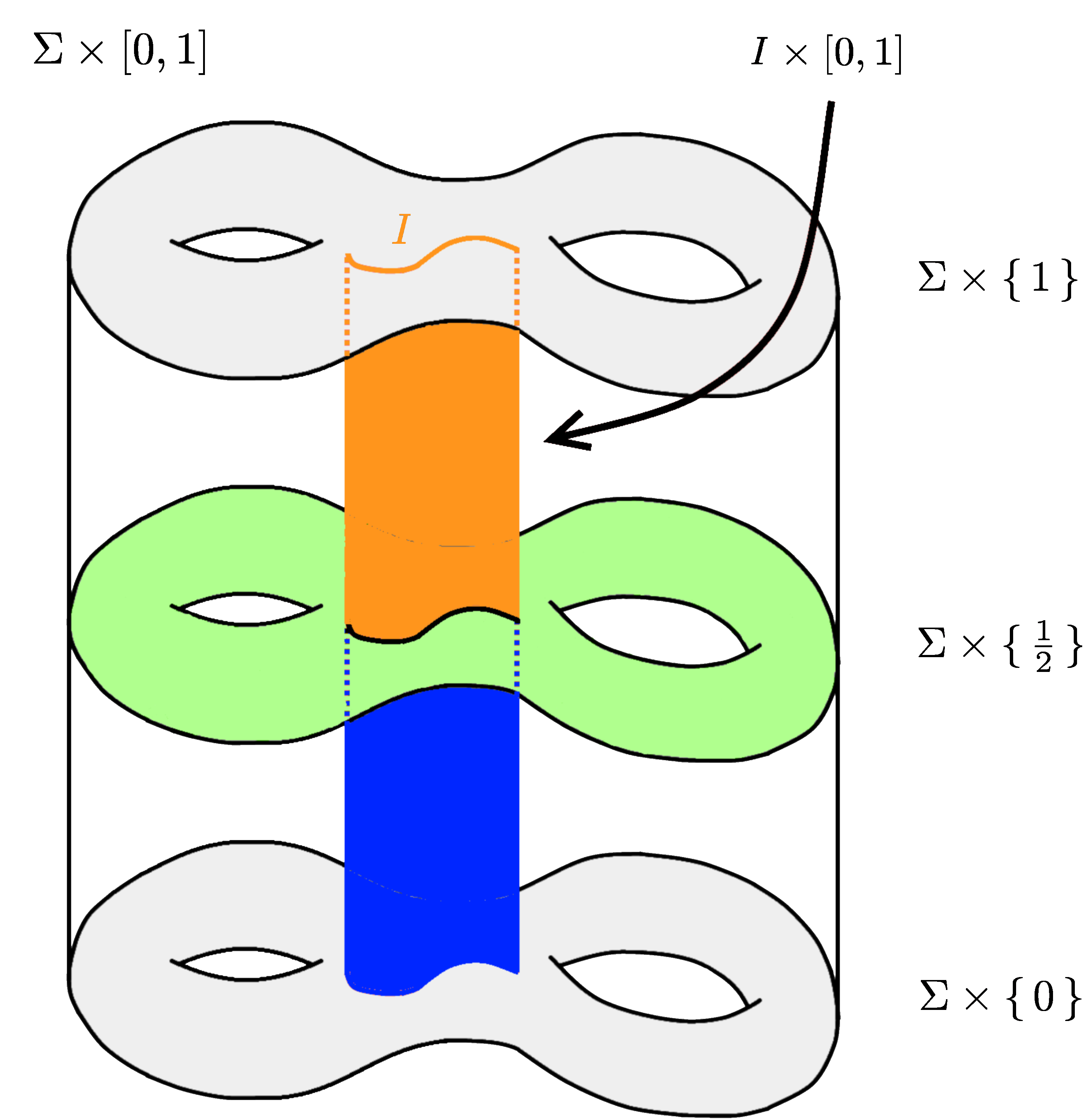}
	\end{center}\caption{The construction of the Hilbert space $Q(I)$ associated to an interval $I\subset\Sigma$ in the context of the FRS/Kapustin-Saulina construction.}\label{fig:algebra associated to interval}
\end{figure}

Since we have collars at the ends of our interval, we can smooth out the rectangle $\partial\big(I\times[0,1]\big)$ to a circle. We will see that the extended \textsc{cft} assigns to $I$ a version of~$Q$ modelled on the boundary $\partial\big(I\times[0,1]\big)$: this is a Hilbert space that looks like $Q$, but which has actions of $\A(J)$ for every $J \subsetneq \partial\big(I\times[0,1]\big)$, as opposed to $J \subsetneq S^1$.

\subsubsection{Intermezzo: representations of conformal nets revisited}\label{s:representations}

Before we proceed it is useful to look at a coordinate-independent approach to the representation theory of conformal nets. Consider a circle~$S$: a manifold that is diffeomorphic to~$S^1$, but \textsl{without} a choice of such a diffeomorphism. Let $ \text{Rep}_S(\A) $ denote the category whose objects are Hilbert spaces equipped with compatible actions of $\A(J)$ for every $J \subsetneq S$. $\text{Rep}\,(\A)$ is the special case in which $S$ is the unit circle.

Clearly, $\text{Rep}_S(\A)$ is equivalent to $\text{Rep}\,(\A)$, but there is no canonical way of picking such an equivalence. Also, unlike with $\text{Rep}\,(\A)$, there is no canonical monoidal structure on $\text{Rep}_S(\A)$. Instead we have an `external product'. Given three circles $S_1$, $S_2$ and $S_3$ with compatible smooth structures\footnote{This is a technical definition that we will not explain here, see (1.29) in \cite{BDH2013}.} as in Figure~\ref{fig:external product}, there is a canonical functor
	\[ \text{Rep}_{S_1}(\A) \times \text{Rep}_{S_2}(\A) \longrightarrow \text{Rep}_{S_3}(\A) \ . \]

\begin{figure}[h]
	\begin{center}
	\includegraphics[scale=.15]{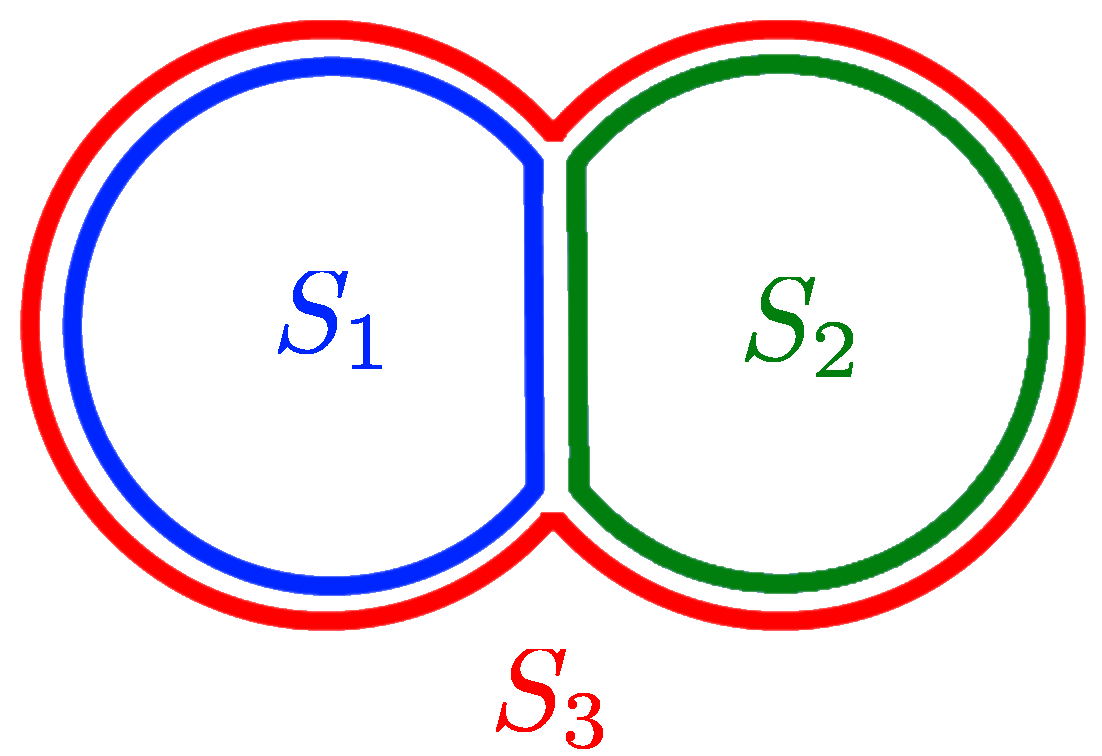}
	\end{center}\caption{The external product in $\text{Rep}_S(\A)$ is defined for each triple of circles with compatible smooth structures.}\label{fig:external product}
\end{figure}

Now, although there is no canonical equivalence between $\text{Rep}_S(\A)$ and $\text{Rep}\,(\A)$, we can nevertheless attempt to construct a functor
	\[ \text{Rep}\,(\A) \longrightarrow\ \text{Rep}_S(\A) \ , \] 
and see where we fail. Of course, we could just pick a diffeomorphism $S \longrightarrow S^1$, but that is clearly non-canonical. Let us try the following
\begin{equation}\label{eqn:attempt}
	H \,\longmapsto\, H \underset{\text{Diff}\,(S^1)}{\times} \text{Diff}\,(S, \, S^1) \ ,
\end{equation}
where $\text{Diff}\,(S, \, S^1)$ is the set of all diffeomorphisms from $S$ to $S^1$, equipped with its natural left action of $\text{Diff}\,(S^1)$.
The reason why there is an action of $\text{Diff}\,(S^1)$ on $H$ is that whenever a diffeomorphism is supported in a small interval, the corresponding automorphism of $\A(J)$ is \textsl{inner}. Thus, there is an element of that algebra associated to the diffeomorphism, which, in turn, acts on~$H$. Those local diffeomorphisms generate $\text{Diff}\,(S^1)$, and so we get an action of $\text{Diff}\,(S^1)$.

The reason that \eqref{eqn:attempt} does not quite work is that the choice of algebra elements implementing the given inner automorphism is not unique. Indeed, it is only defined up to phase, and therefore the action of $\text{Diff}\,(S^1)$ on $H$ is only a projective action.

\subsubsection{Back to business}

We would like to say that the value of the extended full \textsc{cft} on $I$ is the image of $Q \in \text{Rep}\,(\A)$ under the functor
	\[ \text{Rep}\,(\A) \longrightarrow \text{Rep}_{\,\partial(I\times [0,1])}\,(\A) \ . \]
But, as we have seen, at least at first sight, that does not seem to work. The reason that this nevertheless \textsl{does} work is that $\partial\big(I \times[0,1]\big)$ has more structure than an arbitrary circle~$S$: it has an involution $(x,t)\longmapsto(x,-t)$. Therefore it makes sense to talk about \textit{symmetric diffeomorphisms}, i.e., those diffeomorphisms that commute with the involution.

Thus, for $S = \partial\big(I \times[0,1]\big)$, we can replace \eqref{eqn:attempt} by 
\begin{align*}
	\text{Rep}\,(\A) \longrightarrow\ \text{Rep}_S(\A) \ , \quad
	H \longmapsto H \underset{\text{Diff}_\text{sym}(S^1)}{\times} \text{Diff}_\text{sym}(S, \, S^1) \ .
\end{align*}
Now something very nice happens: the universal central extension of $\text{Diff}\,(S^1)$ splits over $\text{Diff}_\text{sym}(S^1)$, and so that group now \textsl{does} act on $H$, and the formula makes sense. Therefore we can define
\begin{equation}\label{eqn:Q(I)}
	\text{Rep}\,(\A) \longrightarrow \text{Rep}_{\,\partial(I\times [0,1])}\,(\A) \ , \quad Q \longmapsto Q(I) \ .
\end{equation}

\begin{figure}[h]
	\begin{center}
	\includegraphics[scale=.15]{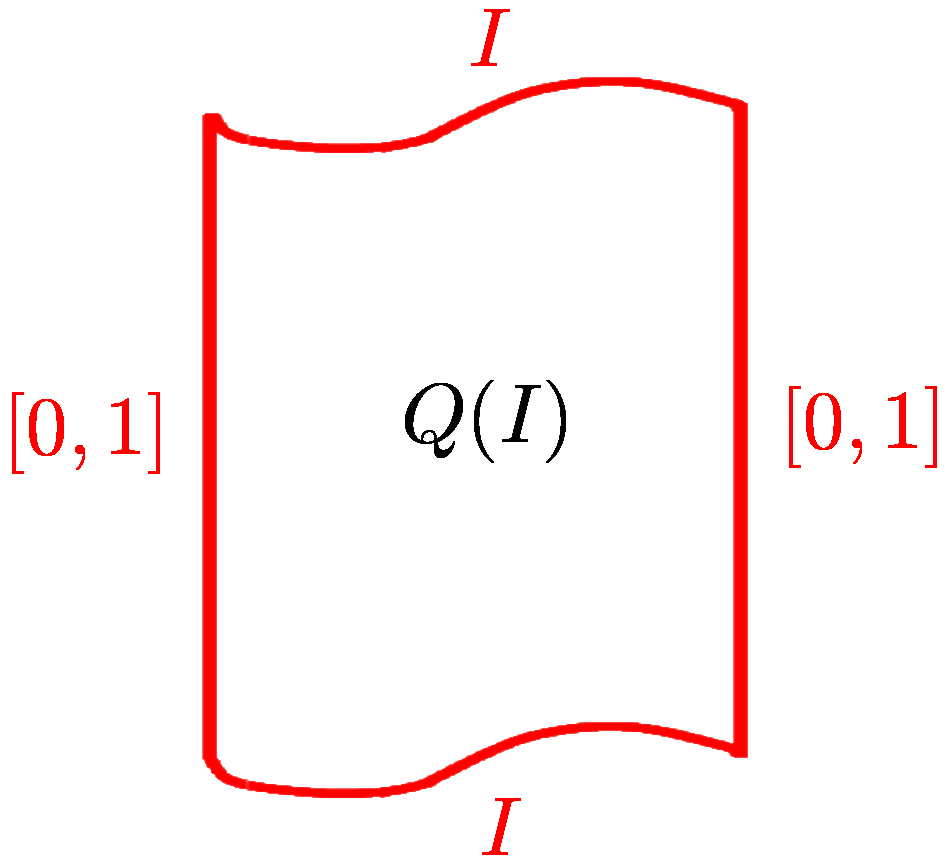}
	\end{center}\caption{The Hilbert space $Q(I)$ has actions of the copies of $B$ associated to the unit intervals $[0,1]$ on the left and on the right.}\label{fig:Q(I)}
\end{figure}

\noindent 
To see that $Q(I)$ is indeed a $B$-$B$-bimodule, notice that $A=\A\big([0,1]\big)$ has two actions on $Q(I)$, corresponding to the two copies of $[0,1]$ in the boundary of $I \times[0,1]$ (cf.~Figure~\ref{fig:Q(I)}). 
If we identify the boundary $\partial\big(I \times[0,1]\big)$ with the unit circle via some symmetric diffeomorphism that sends the corners on the left to the `north' and `south pole' of the circle as illustrated in Figure~\ref{fig:Q(I) left action}, this identifies the left action of~$A$ of $Q(I)$ with the standard left action of $A$ on $Q$. Now recall from the proof of Lemma~\ref{lem:alg assoc to point} that the we have an inclusion $A\subset B$ and that the left action of $A$ on $Q$  extends to an action of~$B$ on $Q$ in a canonical way.
Therefore, the left action of~$A$ of $Q(I)$ extends to an action of $B$.

\begin{figure}[h]
	\begin{center}
	\includegraphics[scale=.10]{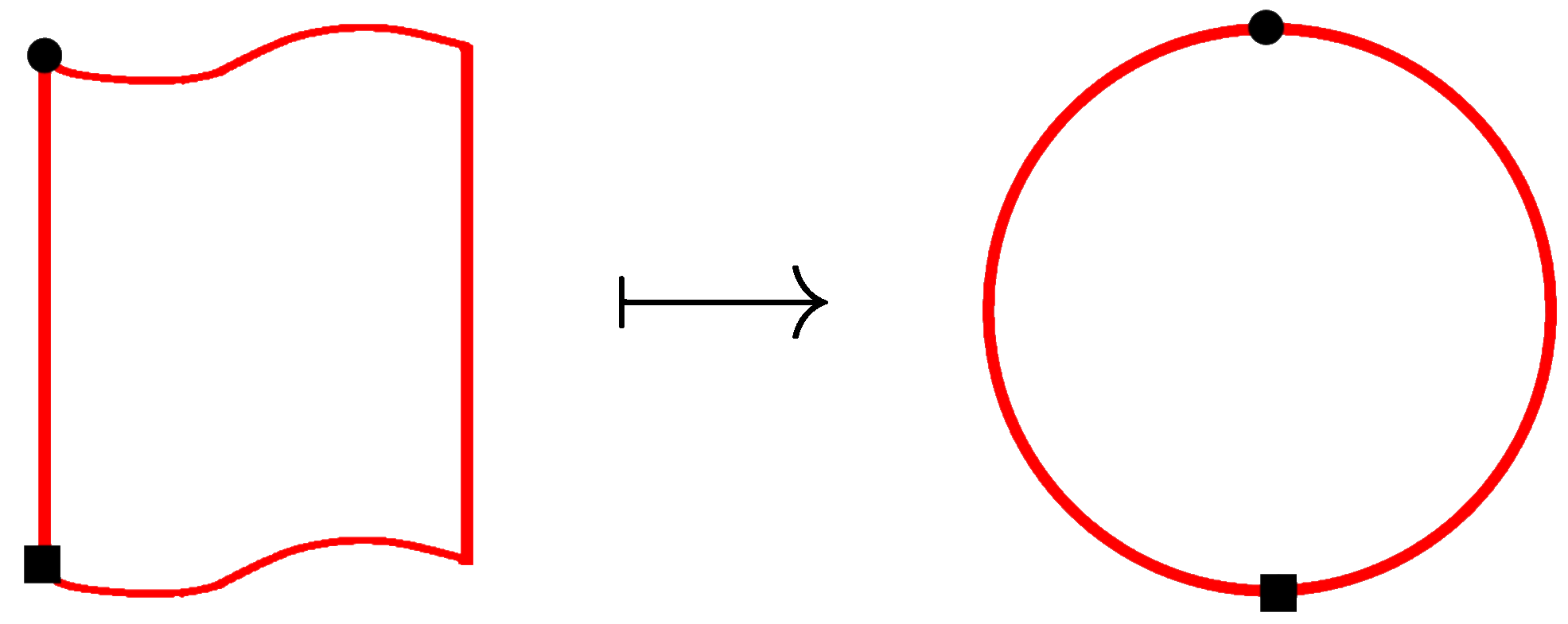}
	\end{center}\caption{An identification of $\partial\big(I \times[0,1]\big)$ and a standard circle via a symmetric diffeomorphism mapping the corners on the left to the north and south poles of the circle.}\label{fig:Q(I) left action}
\end{figure}

\noindent Similarly, with the use of a symmetric diffeomorphism as indicated in Figure~\ref{fig:Q(I) right action}, we can identify the right action of~$\A\big([0,1]\big)$ on~$Q(I)$ 
with the standard right action of $A$ on $Q$, which likewise extends to an action of~$B$.

\begin{figure}[h]
	\begin{center}
	\includegraphics[scale=.10]{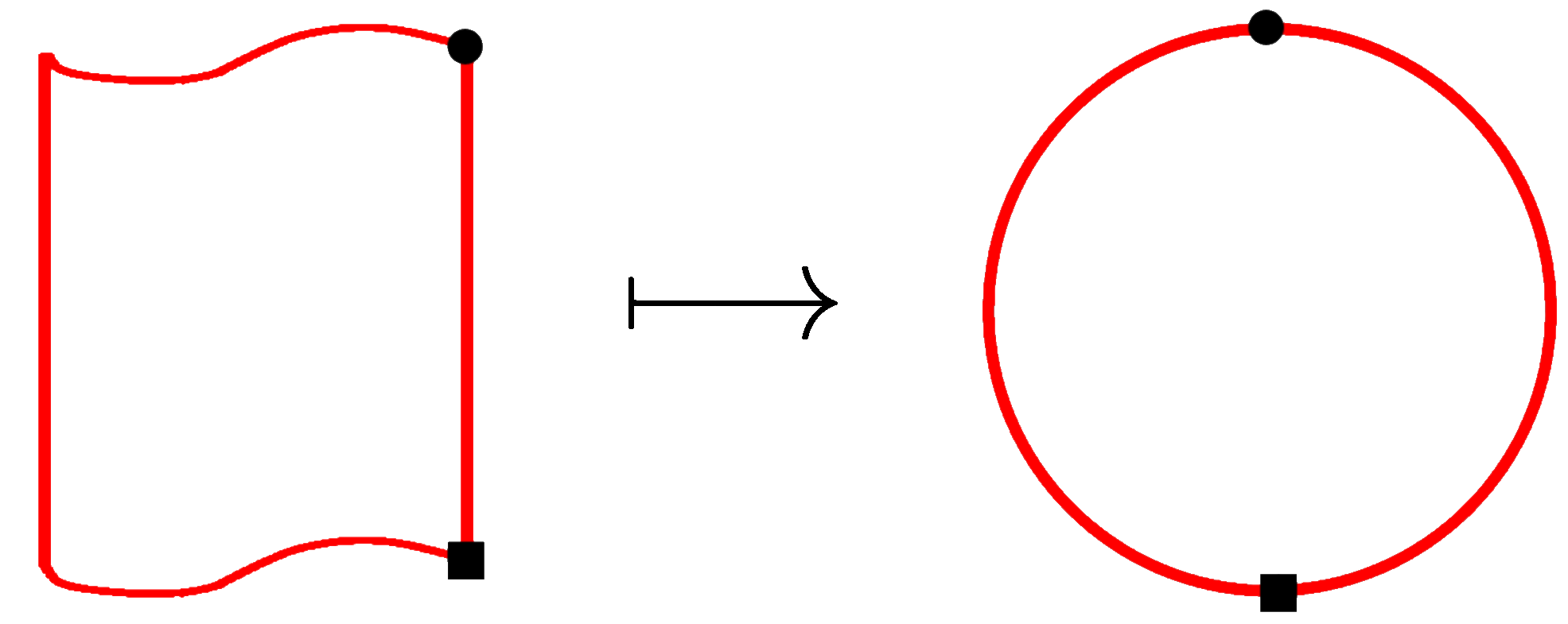}
	\end{center}\caption{Another identification of $\partial\big(I \times[0,1]\big)$ and a standard circle, now via a symmetric diffeomorphism sending the corners on the right to the poles of the circle.}\label{fig:Q(I) right action}
\end{figure}

At this point it is not too difficult to see, using the fact that $Q \cong L^2 B$, that the assignment~\eqref{eqn:Q(I)} is compatible with glueing:
	\[ {}_B Q(I_1) \, \boxtimes_B Q(I_2)_B = {}_B Q(I_1 \cup I_2)_B \ . \]
This is of course necessary for our construction to make sense, but it is not very impressive. Let us turn to something more surprising.

\subsection{Recovering the state space from the FRS construction}

Recall from Section~\ref{s:state space} that the state space~\eqref{eqn:H_full} of the full \textsc{cft} from the FRS construction is given by
	\[ H_\text{full} \coloneqq \bigoplus_{\mu,\lambda} \text{Hom}_{Q,Q}\big(\lambda \boxtimes^+ Q \boxtimes^- \mu^\vee,Q\big) \otimes H_\lambda \otimes \overline{H_\mu} \ . \]

In this section we will show, or at least sketch, how this result can be reproduced with our construction. The idea is to take the unit circle, cut it in half, and fuse the corresponding algebras over $B \, \bar\otimes \, B^\textup{op}$. More precisely, we have the following

\begin{theorem}\label{thm:H_full}
Decompose the unit circle as $S^1 = I\cup J$ such that the intersection $I\cap J$ consists of two points only. Then the fusion of $Q(I)$ with $Q(J)$ over $B \, \bar\otimes \, B^\textup{op}$ (see Figure~\ref{fig:fusion thm}) is canonically isomorphic to $H_\textup{full}$ as a module over the chiral and antichiral algebras.
\end{theorem}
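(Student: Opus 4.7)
The plan is to identify both sides of the asserted isomorphism via the cylinder $S^1\times[0,1]$, obtained geometrically by gluing $I\times[0,1]$ and $J\times[0,1]$ along their common vertical sides $\partial I\times[0,1]=\partial J\times[0,1]$. Forming the Connes fusion $Q(I)\boxtimes_{B\,\bar\otimes\, B^{\textup{op}}} Q(J)$ matches precisely the two $B$-actions on these vertical sides of $Q(I)$ and $Q(J)$, so that the remaining $\A$-actions on the top and bottom horizontal sides combine to give an action of $\A$ on the top circle $S^1\times\{1\}$ and of $\bar\A$ on the bottom circle $S^1\times\{0\}$. This furnishes the required module structure over the chiral and antichiral algebras on the fused Hilbert space.

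For the algebraic identification, I would exploit the isomorphism $Q\cong L^2 B$ as a $B$-$B$-bimodule noted after Lemma~\ref{lem:alg assoc to point}, together with the symmetric-diffeomorphism identifications $Q(I)\cong Q$ and $Q(J)\cong Q$ provided by the construction in~\eqref{eqn:Q(I)}. For each pair $(\lambda,\mu)$ of simple objects of $\text{Rep}(\A)$, a Frobenius-reciprocity argument identifies the multiplicity space of $H_\lambda\otimes\overline{H_\mu}$ in $Q(I)\boxtimes_{B\,\bar\otimes\, B^{\textup{op}}} Q(J)$ with a space of maps $Q\to Q$ that commute with both $B$-actions and in addition intertwine the insertion of $H_\lambda$ on the top arc and $\overline{H_\mu}$ on the bottom arc of the two circles $\partial(I\times[0,1])$, $\partial(J\times[0,1])$.

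The final step is to show that, under the Morita correspondence between $B$ and $Q$ embodied by $Q\cong L^2 B$, these intertwiner spaces translate into the space $\text{Hom}_{Q,Q}(\lambda\boxtimes^+Q\boxtimes^-\mu^\vee,Q)$ of $Q$-$Q$-bimodule maps in $\text{Rep}(\A)$ appearing in~\eqref{eqn:H_full}. The insertion of $\lambda$ on the top contributes the braided tensor product $\boxtimes^+$, while the insertion of $\mu^\vee$ on the bottom contributes $\boxtimes^-$, by the same mechanism (the two inequivalent ``pants products'' in a braided monoidal category) as in the description of the full centre recalled in Section~\ref{s:state space}. Summing over $(\lambda,\mu)$ and comparing with~\eqref{eqn:H_full} then yields the desired canonical isomorphism.

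The hard part will be this last identification: one must verify carefully that the two inequivalent braided tensor products $\boxtimes^+$ and $\boxtimes^-$ emerge on the correct sides. Geometrically, the choice records whether a given representation is inserted above or below the codimension-one defect surface coloured by $Q$ in the Kapustin--Saulina picture of Section~\ref{s:defects}, and pinning it down requires tracking the $\mathrm{Diff}_{\mathrm{sym}}$-construction of $Q(I)$ through the quarter-braiding description of the braiding on $\text{Rep}(\A)$ given in Section~\ref{s:conformal net}. Once this bookkeeping is in place, the stated canonical isomorphism as a module over the chiral and antichiral algebras should follow by naturality of Connes fusion and by comparison with the FRS construction of~\eqref{eqn:H_full}.
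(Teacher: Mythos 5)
Your proposal follows essentially the same route as the paper: the paper's proof also computes the multiplicity space of $H_\lambda\otimes\overline{H_\mu}$ in $Q(I)\boxtimes Q(J)$ as a Hom-space, applies duality (your Frobenius reciprocity), and then ``flattens'' the geometric picture to identify the result with $\text{Hom}_{Q,Q}(\lambda\boxtimes^+Q\boxtimes^-\mu^\vee,Q)$, using as its key input a lemma stating that $B$-modules extending the $A$-action are the same as $Q$-modules internal to $\text{Rep}(\A)$ and that $B$-linear maps are the same as $Q$-linear maps --- precisely the Morita-type correspondence you invoke via $Q\cong L^2B$. The paper's own argument is only a picture-proof, and it too leaves the $\boxtimes^\pm$ bookkeeping you flag as the delicate point implicit in the ``flattening'' step.
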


\begin{figure}[h]
	\begin{center}
	\includegraphics[scale=.14]{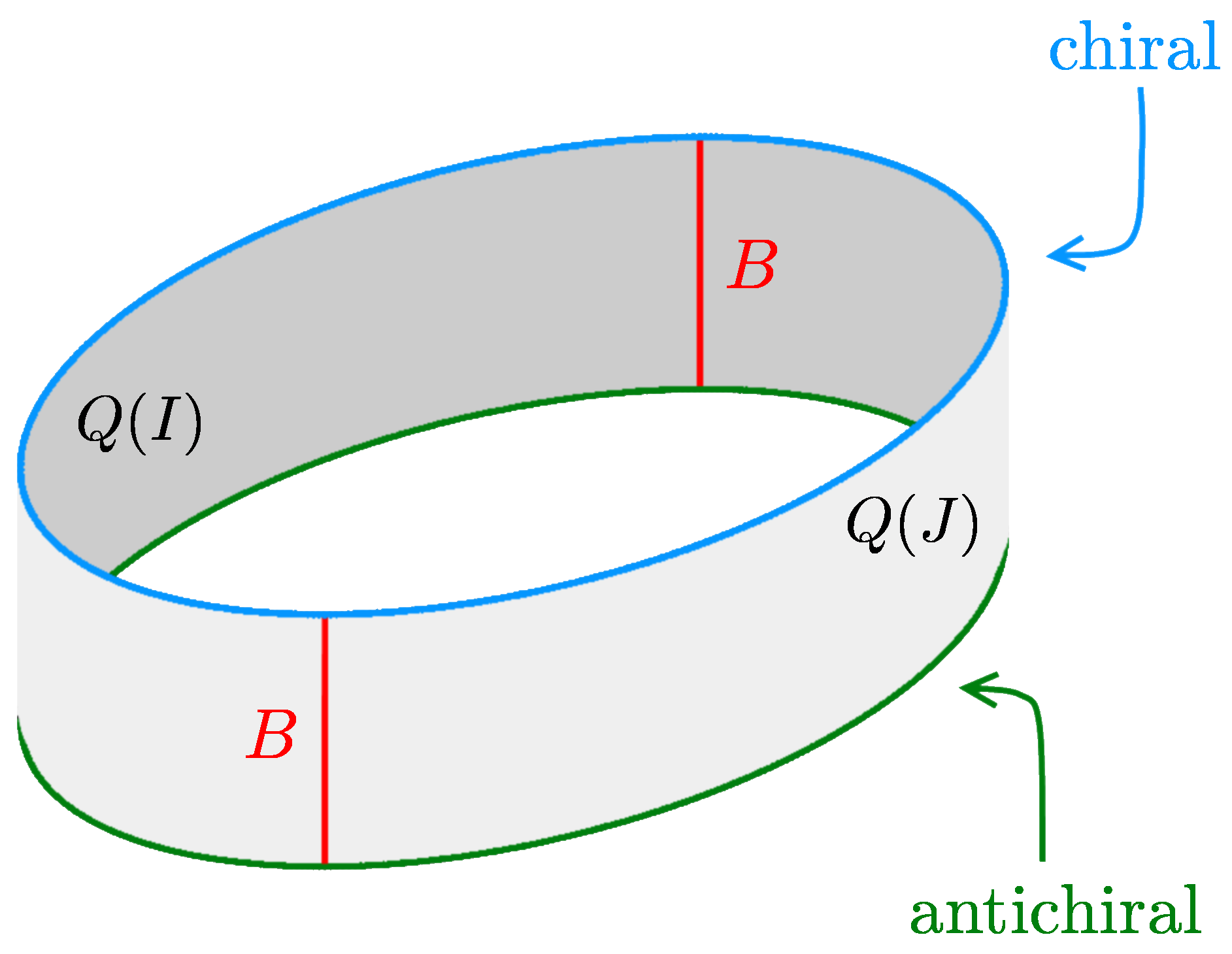}
	\end{center}\caption{The circle on the top corresponds to the representation of the chiral algebra on~$H_\lambda$, whilst the circle on the bottom corresponds to the representation of the antichiral algebra on~$\overline{H_\mu}$. We have used the action of the two copies of the bigger algebra $B$ (recall that $A \subset B$) to fuse $Q(I)$ with $Q(J)$.}\label{fig:fusion thm}
\end{figure}

For the proof of this theorem we need the following lemma:

\begin{lemma}
Let $Q$ and $B$ be as in \eqref{eqn:alg assoc to point}, and let
$H$ be a module over~$A$. Then $H$ is a $Q$-module (i.e.~we have a map $Q\boxtimes_A H \longrightarrow H$ satisfying the obvious axioms), if and only if it is a $B$-module extending the action of~$A$ (i.e.~we have a map $B\otimes H \longrightarrow H$ satisfying the obvious axioms).

Similarly, a homomorphism $H_1\to H_2$ is $Q$-linear if{f} it is $B$-linear.
\end{lemma}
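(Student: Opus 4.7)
My plan is to construct mutually inverse correspondences between $Q$-module and $B$-module structures (extending the $A$-action) on a fixed $A$-module $H$. I will exploit two complementary descriptions: $B = \text{Hom}_{-A}(L^2 A, Q)$ by definition, and $Q \cong L^2 B$ as $B$-$B$-bimodules (hence as $A$-$A$-bimodules via $A \subset B$) as noted right after Lemma~\ref{lem:alg assoc to point}.

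For the forward direction, given a $Q$-action $\rho : Q \boxtimes_A H \to H$, I define
\[
b \cdot h \;:=\; \rho\bigl((b \boxtimes_A 1_H)(\xi_A \otimes h)\bigr),
\]
where $\xi_A \in L^2 A$ is the cyclic vector implementing $L^2 A \boxtimes_A H \cong H$. Unpacking the definition of multiplication on $B$ (Figure~\ref{fig:B product}) yields the identity $(b_1 b_2)(\xi_A) = m\bigl(b_1(\xi_A) \otimes b_2(\xi_A)\bigr)$ in $Q$, and combined with the associativity of $\rho$ this gives $(b_1 b_2)\cdot h = b_1 \cdot (b_2 \cdot h)$. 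The unit of $B$ is $\eta$, so $1_B \cdot h = \rho(\eta(\xi_A) \otimes h) = h$ by unitality of $\rho$. The inclusion $A \hookrightarrow B$ sends $a$ to $\eta\circ(a\cdot -)$, and $A$-linearity plus unitality of $\rho$ immediately give that this recovers the original $A$-action on $H$. The $*$-condition $\langle b\cdot h, h'\rangle = \langle h, b^\star\cdot h'\rangle$ follows from the definition of $b^\star$ via $\Delta = m^*$ (Figure~\ref{fig:B involution}) together with the self-adjointness of the Frobenius pairing $\varepsilon\circ m$.

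For the backward direction, given a $B$-action on $H$ extending the $A$-action, I use $Q \cong L^2 B$ and the fact that $L^2 B$ is the unit for $\boxtimes_B$ to define
\[
\rho : Q \boxtimes_A H \;=\; L^2 B \boxtimes_A H \;\twoheadrightarrow\; L^2 B \boxtimes_B H \;\cong\; H,
\]
the middle arrow being the canonical quotient, which exists precisely because $A \subset B$. To verify the $Q$-module axioms for $\rho$, I must identify, under $Q \cong L^2 B$, the Frobenius multiplication $m$ with the composition $L^2 B \boxtimes_A L^2 B \twoheadrightarrow L^2 B \boxtimes_B L^2 B \cong L^2 B$, and the unit $\eta$ with the canonical map $L^2 A \to L^2 B$ induced by $A\hookrightarrow B$. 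Granted these identifications, associativity and unitality of $\rho$ are formal consequences of the associativity and unitality of $\boxtimes_B$. A direct check on elementary tensors shows that the two constructions are mutually inverse. The second assertion of the lemma then follows by naturality: under either correspondence the operators $b\cdot -$ are assembled out of the $Q$-action via the formula above, so a bounded $A$-linear map $\varphi:H_1\to H_2$ intertwines the $Q$-actions iff it intertwines every $b\cdot -$, i.e.\ iff it is $B$-linear.

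The principal obstacle I anticipate is the identification, in the backward direction, of the Frobenius algebra structure $(m,\eta)$ on $Q$ with the canonical algebra-in-$A$-bimodules structure on $L^2 B$. This identification is implicit in the construction of the isomorphism $Q \cong L^2 B$ through the positive cone and modular conjugation described just after Lemma~\ref{lem:alg assoc to point}, and it relies crucially on the specialness axiom (v) of $Q$, which is exactly what ensures that the Frobenius multiplication on $L^2 B$ factors through $\boxtimes_B$. A secondary technical point is that the $*$-condition in the forward direction needs compatibility of Connes fusion with adjoints of bimodule maps, which must be invoked rather than computed diagrammatically.
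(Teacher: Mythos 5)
The paper does not actually prove this lemma: it is stated just before Theorem~\ref{thm:H_full} as an ingredient for the computation in Figure~\ref{fig:fusion proof}, and no argument is given. So there is no proof in the text to compare yours against; what follows is an assessment of your proposal on its own terms. Your overall strategy --- pass from a $Q$-action to a $B$-action by precomposing with the unitor $L^2A\boxtimes_A H\cong H$ and the maps $b\colon L^2A\to Q$, and pass back using $Q\cong L^2B$ and the coisometry $L^2B\boxtimes_A H\to L^2B\boxtimes_B H\cong H$ --- is the standard correspondence between modules over a Q-system and modules over the associated extension $A\subset B$, and the forward direction (multiplicativity from associativity of $\rho$ together with the definition of the product on $B$ in Figure~\ref{fig:B product}, unitality from $1_B=\eta$, recovery of the $A$-action from the map of Figure~\ref{fig:B contains A}) is correct as written, modulo the usual caveat that ``evaluation at $\xi_A$'' should be read as the unitor of Connes fusion rather than as an algebraic substitution.

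The genuine gap is the one you flag yourself, and it is not a minor technicality but the actual content of the lemma: you must show that the $B$-$B$-bimodule isomorphism $Q\cong L^2B$ carries the Frobenius multiplication $m$ to the canonical map $L^2B\boxtimes_A L^2B\to L^2B\boxtimes_B L^2B\cong L^2B$ and carries $\eta$ to the canonical vector $L^2A\to L^2B$. The isomorphism constructed after Lemma~\ref{lem:alg assoc to point} is only asserted at the level of bimodules (via the positive cone and modular conjugation), so this compatibility has to be argued --- for instance by observing that a $B$-$B$-bimodule map out of $Q\boxtimes_A Q$ is determined by its value on the cyclic vector $\eta(\xi)\boxtimes\eta(\xi)$, and that the special axiom $m\circ\Delta=\mathrm{id}$ forces $m$ to be a coisometry, matching the canonical projection. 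Until that is done, the backward direction and the claim that the two constructions are mutually inverse are assumed rather than proved. Two smaller points also deserve attention: (a) to get a $B$-\emph{module} in the von Neumann sense you need the representation $B\to B(H)$ to be normal, which your construction does not address (it does follow here because the representation is spatial, but that should be said); and (b) the $*$-condition $\langle b\cdot h,h'\rangle=\langle h,b^\star\cdot h'\rangle$ does not follow from associativity and unitality of $\rho$ alone --- it requires the module structure to be compatible with adjoints (e.g.\ $\rho$ a coisometry with $\rho^*$ expressible through $\Delta$), which is presumably part of the ``obvious axioms'' but should be made explicit, since otherwise the forward direction produces only an algebra representation, not a $*$-representation.
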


With the help of this lemma, Theorem~\ref{thm:H_full} can be proved as sketched --- literally --- in Figure~\ref{fig:fusion proof}.

\begin{figure}[h]
	\begin{center}
	\includegraphics[scale=.15]{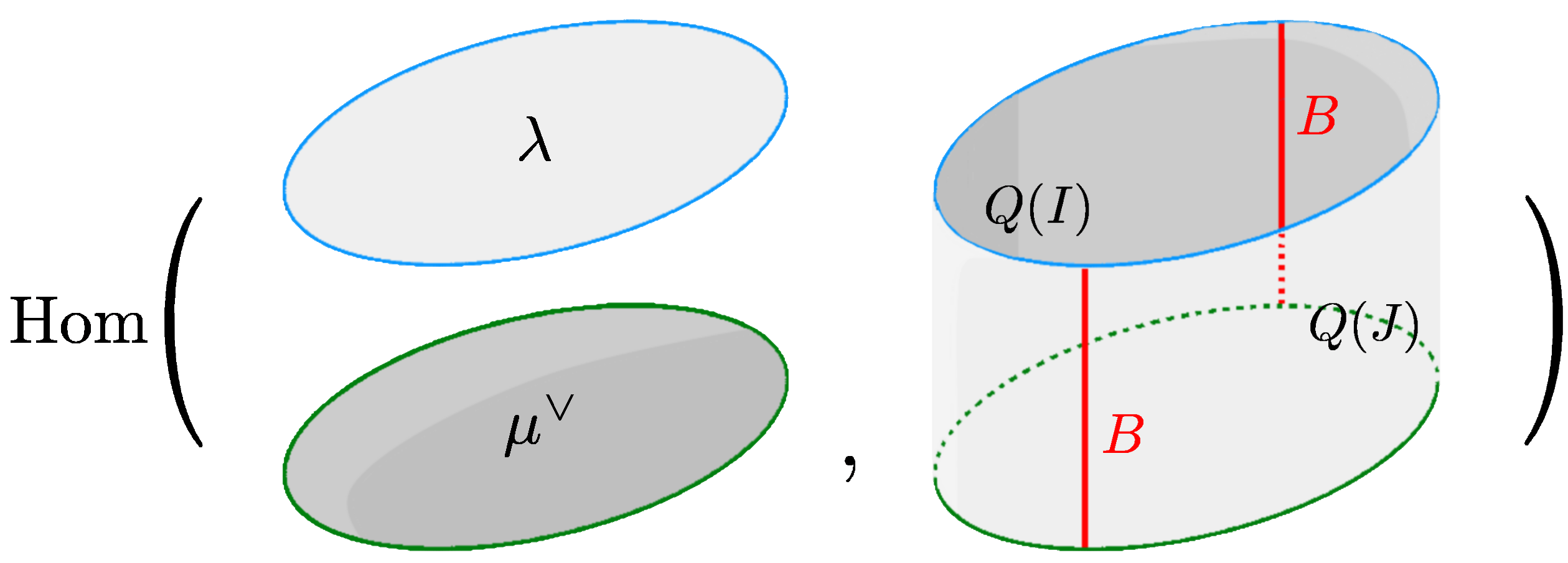}\\
	\includegraphics[scale=.15]{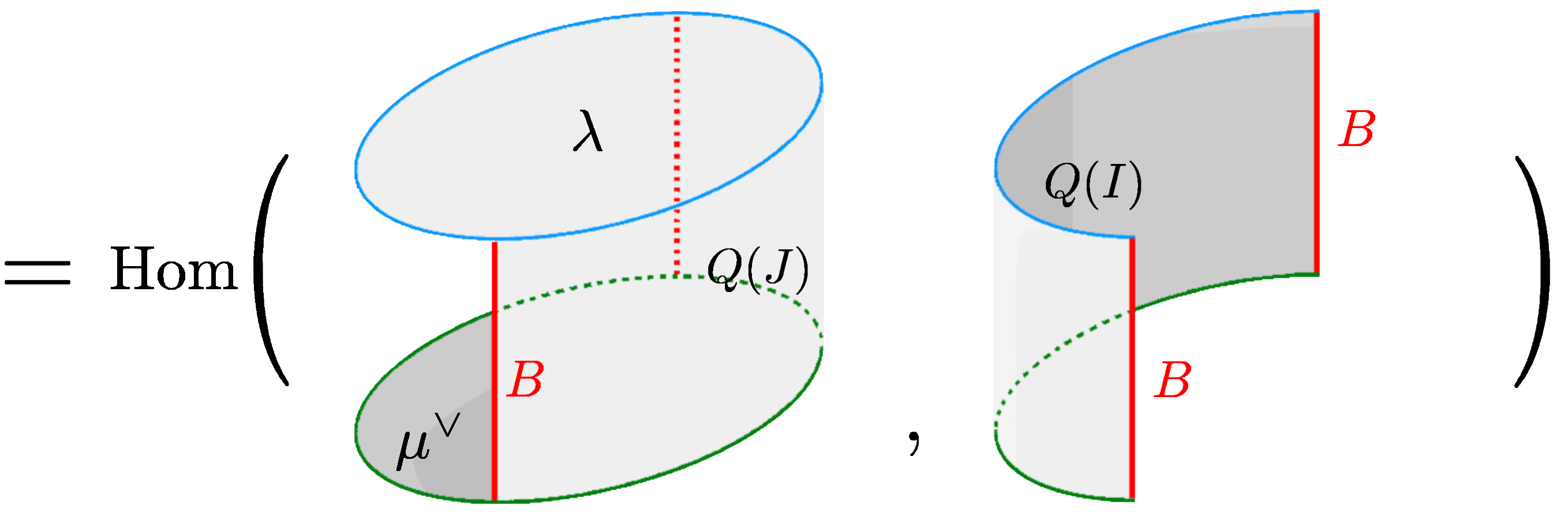}\\
	\includegraphics[scale=.15]{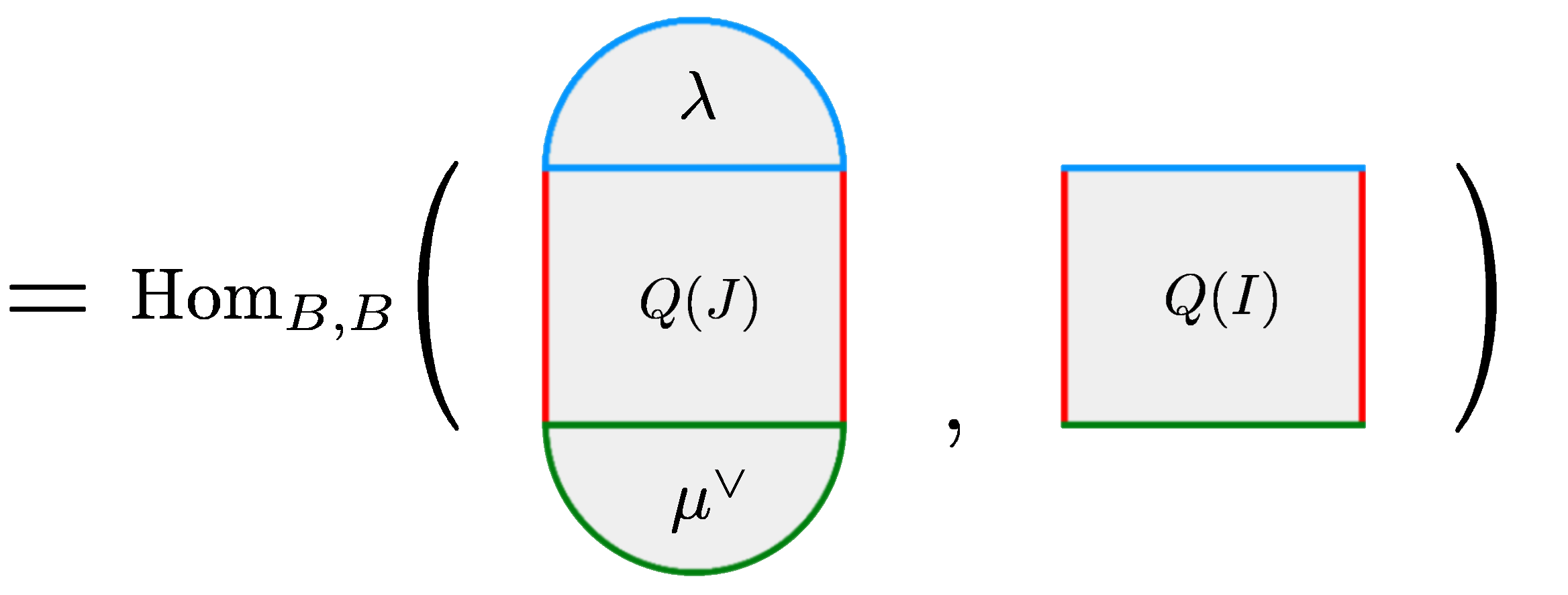}
	\end{center}
	\caption{\textit{Proof of Theorem~\ref{thm:H_full}.} Depict $\text{Hom}_{Q,Q} \big(H_\lambda \otimes \overline{H_\mu} \, , \, Q(I) \boxtimes Q(J)\big)$ as shown on the top. By duality, this is equal to the second line. Now we can flatten the shapes to get to the third line, which corresponds precisely to $\text{Hom}_{Q,Q}\big(\lambda \boxtimes^+ Q \boxtimes^- \mu^\vee,Q\big)$. \hfill $\square$ }\label{fig:fusion proof}
\end{figure}

\subsection{The maps associated to surfaces}

Starting from a $\chi$\textsc{cft} and a Frobenius algebra object we have constructed the extended \textsc{cft} corresponding to zero- and one-dimensional manifolds in the source bicategory. To conclude, we mention what happens to two-dimensional surfaces, and show what the open problem is that has to be solved in order to complete our construction.

\subsubsection{Discs and surfaces with cusps}

It is not too hard to see which bimodule map is associated to a disc with conformal structure.
We can view the disc as a cobordism from the empty one-manifold to the bounding circle. 
Thus, we have to construct a map from $\mathbb{C}$ to the Hilbert space $H_\text{full}$ associated to that circle.
This is the same as a choice of vector in that Hilbert space.
Moreover, this vector should be invariant under the group $PSL_2(\mathbb{R})$ of M\"obius transformations of the circle.
The \textit{vacuum vector} in $H_\text{full}$ is given by $\Omega\otimes\Omega$ in the direct summand $H_0 \otimes H_0$ of $H_\text{full}$. Here, $\Omega$ is (also) called the vacuum vector in $H_0$, and $H_0$ is the \textit{vacuum module} of the conformal net (the unit object in the category $\text{Rep}\,(\A)$). In both cases --- i.e.~in the case $\Omega\otimes \Omega \in H_\text{full}$, and also in the case $\Omega\in H_0$ --- the vacuum vector is the unique $PSL_2(\mathbb{R})$-fixed point up to scalars.

We also have a construction for the bimodule map associated to a surface with two cusps. After a choice of parametrization of the ingoing and outgoing boundaries by the unit interval~$[0,1]$, the semigroup $\text{Bigons}\big([0,1]\big)$ of bigons (as in \eqref{eqn:local model for Sigma}) of the unit interval can be identified with the complexification of the group of those diffeomorphisms of~$[0,1]$ that leave a neighbourhood of the endpoints fixed.

By extending the action of $\text{Diff}\big([0,1]\big)$ in a $\mathbb{C}$-linear fashion to the copy of $\text{Bigons}\big([0,1]\big)$ in the chiral sector, and $\mathbb{C}$-antilinearly to the copy of $\text{Bigons}\big([0,1]\big)$ in the antichiral sector, we get the desired actions of $\text{Bigons}\big([0,1]\big)$.

\subsubsection{Open problem: ninja stars}

The main open problem is the construction of the bimodule map associated to the `ninja star' depicted in  Figure~\vref{fig:ninja star}. We also have to prove a few basic properties of this map, together with one important relation that is shown in Figure~\ref{fig:important relation}.

\begin{figure}[h]
	\begin{center}
	\includegraphics[scale=.20]{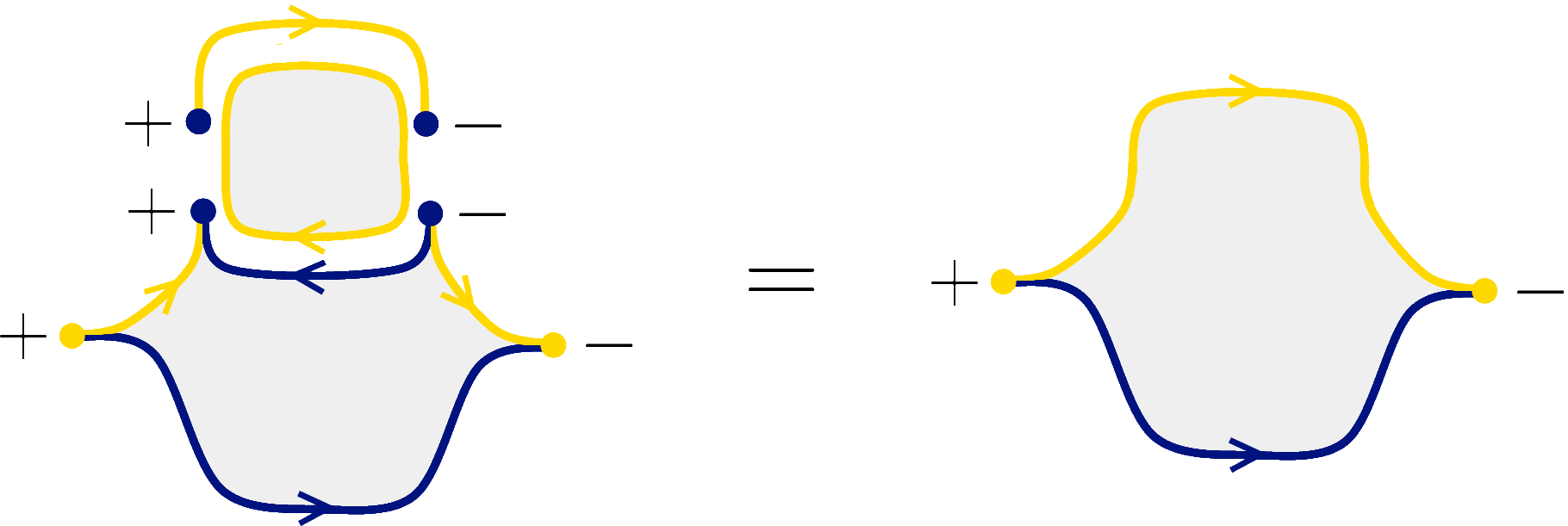}
	\end{center}\caption{An important relation.}\label{fig:important relation}
\end{figure}

This relation ensures the compatibility between the bimodule map associated to the ninja star (that we want to construct) and the parts of the extended \textsc{cft} that we have already constructed. More precisely, it means the following.

Given the 2-morphism from Figure~\ref{fig:important relation star} we can form the horizontal composition with the identity 2-morphism on the 1-morphism in Figure~\ref{fig:important relation cap} to get the result in Figure~\ref{fig:important relation horizontal composition}. 

\begin{figure}[h]
	\begin{center}
	\begin{subfigure}[b]{0.3\textwidth}
		\centering\includegraphics[scale=.15]{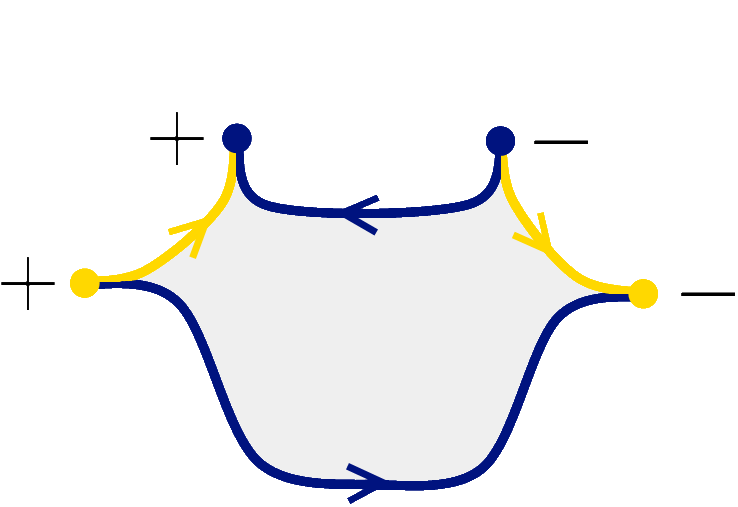}
		\caption{A ninja star.}\label{fig:important relation star}
	\end{subfigure} \quad 
	\begin{subfigure}[b]{0.3\textwidth}
		\centering\includegraphics[scale=.15]{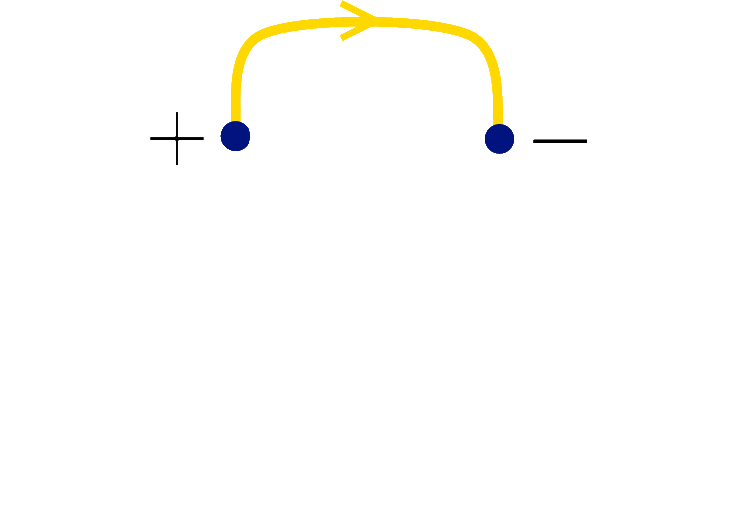}
		\caption{A one-morphism.}\label{fig:important relation cap}
	\end{subfigure} \quad
	\begin{subfigure}[b]{0.3\textwidth}
		\centering\includegraphics[scale=.15]{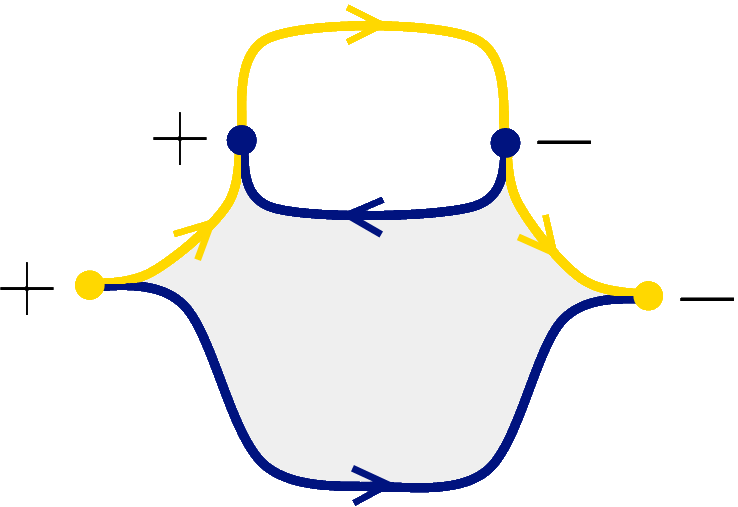}
		\caption{Another 2-morphism.}\label{fig:important relation horizontal composition}
	\end{subfigure}
	\end{center}
	\caption{The 2-morphism on the right is the result of the horizontal compositon of the 2-morphism shown on the left with the identity 2-morphism on the cap in the middle.}
\end{figure}

\noindent The relation drawn in Figure \ref{fig:important relation star} describes what should happen if we fill in the hole by vertical composition with the disc, viewed as a 2-morphism as indicated in Figure~\ref{fig:important relation disc}.
The disc that we fill in corresponds to the lower two-morphism in Figure~\ref{fig:important relation diagram}.

\begin{figure}[h]
	\begin{center}
	\includegraphics[scale=.15]{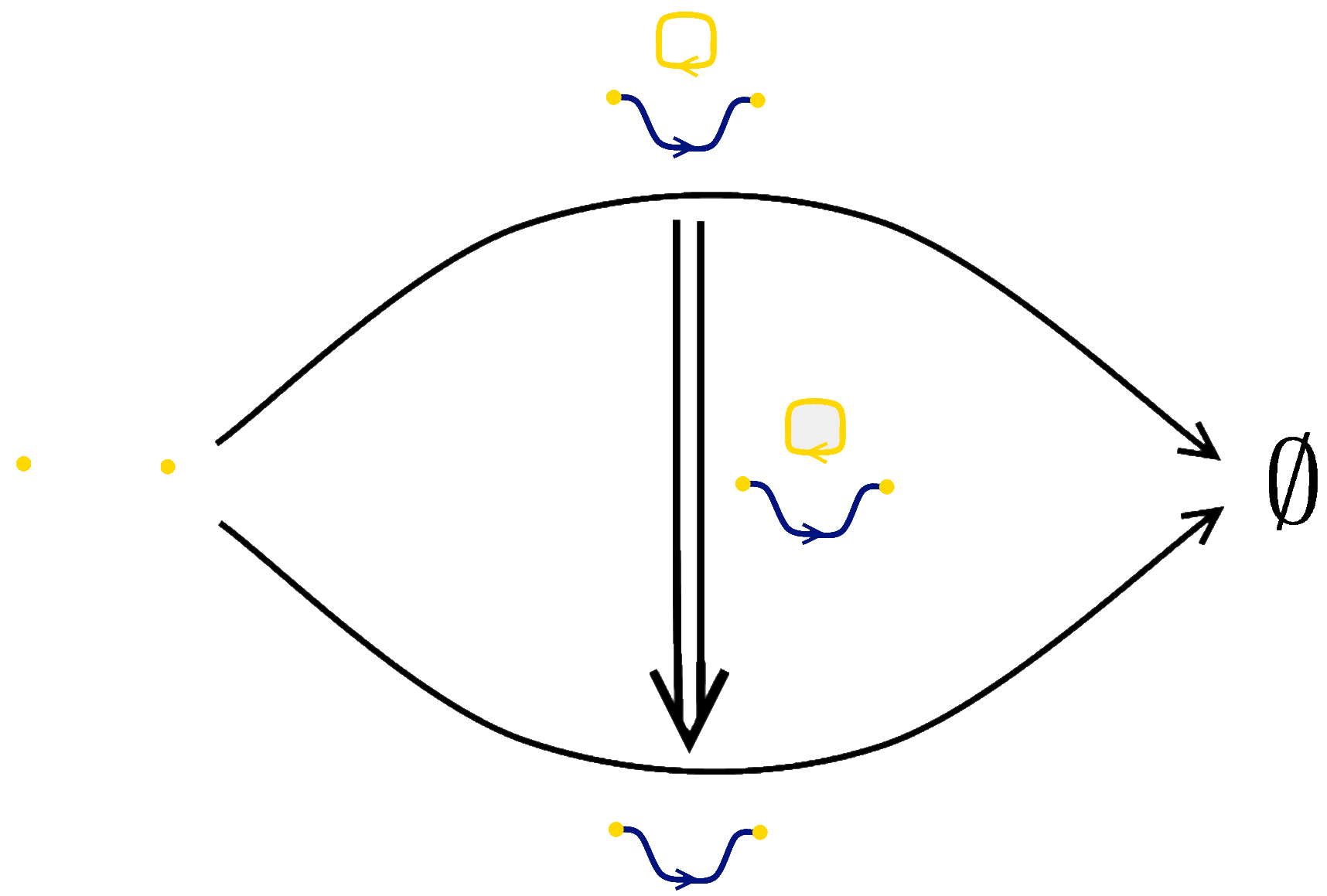}
	\end{center} \caption{The relation from Figure~\ref{fig:important relation} involves a horizontal composition with the tensor product of the 2-morphism corresponding to the disc and the identity 2-morphism on the (blue) 1-morphism.} \label{fig:important relation disc}
\end{figure}

\begin{figure}[h]
	\begin{center}
	\includegraphics[scale=.15]{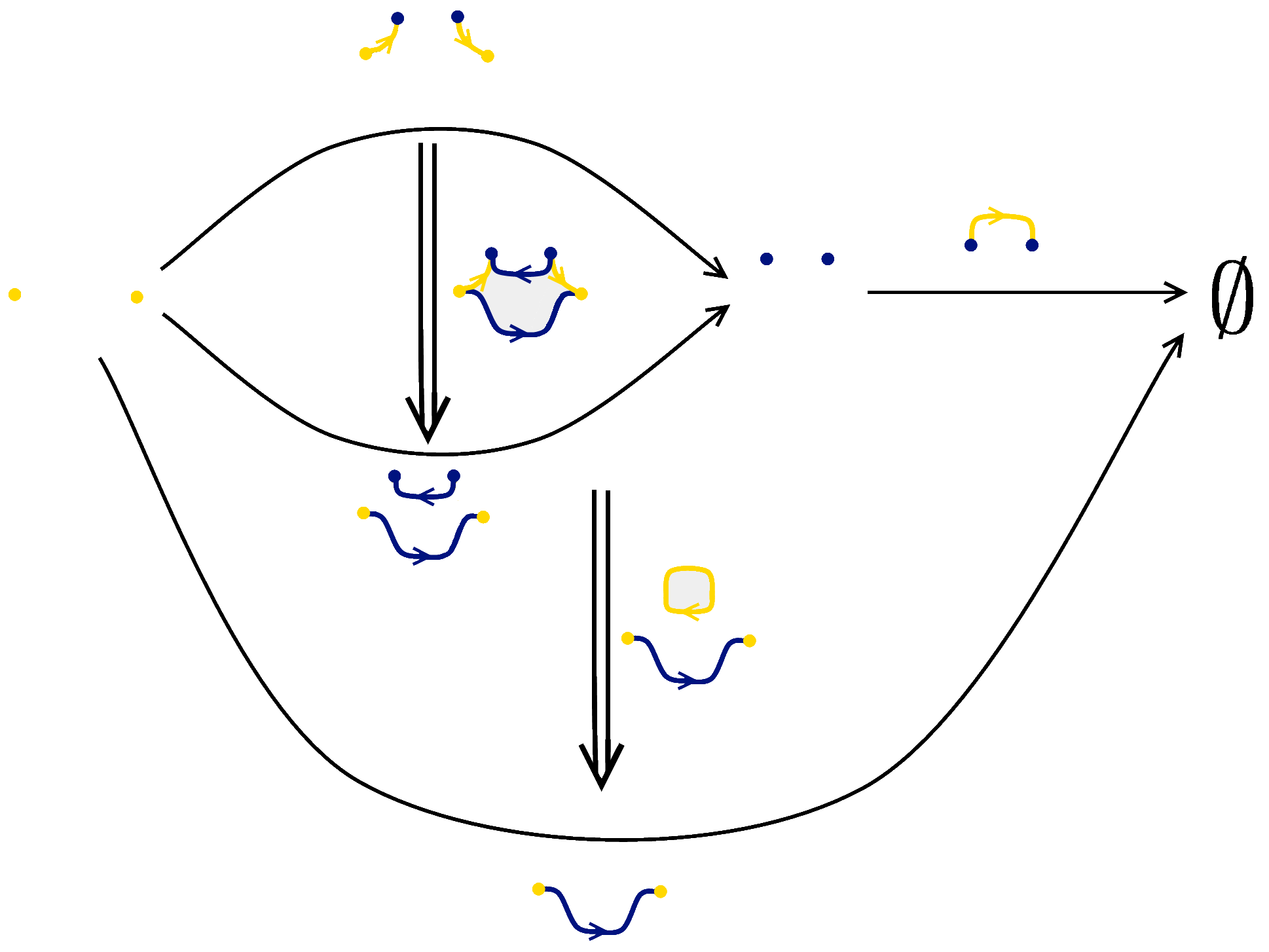}
	\end{center}\caption{The diagram showing the objects, 1-morphisms and 2-morphism featuring in the important relation from Figure~\ref{fig:important relation}.} \label{fig:important relation diagram}
\end{figure}

Now, any surface can be decomposed into discs and ninja-stars via a simple algorithm: draw closed curves with transverse intersections on the surface, and then replace those intersections by ninja stars (see Figure~\ref{fig:surface decomposition}). Given the bimodule map associated to the ninja star and the relation from Figure \ref{fig:important relation}, this decomposition should allow one to construct the full extended \textsc{cft} from it.

\begin{figure}[h]
	\begin{center}
	\includegraphics[scale=.20]{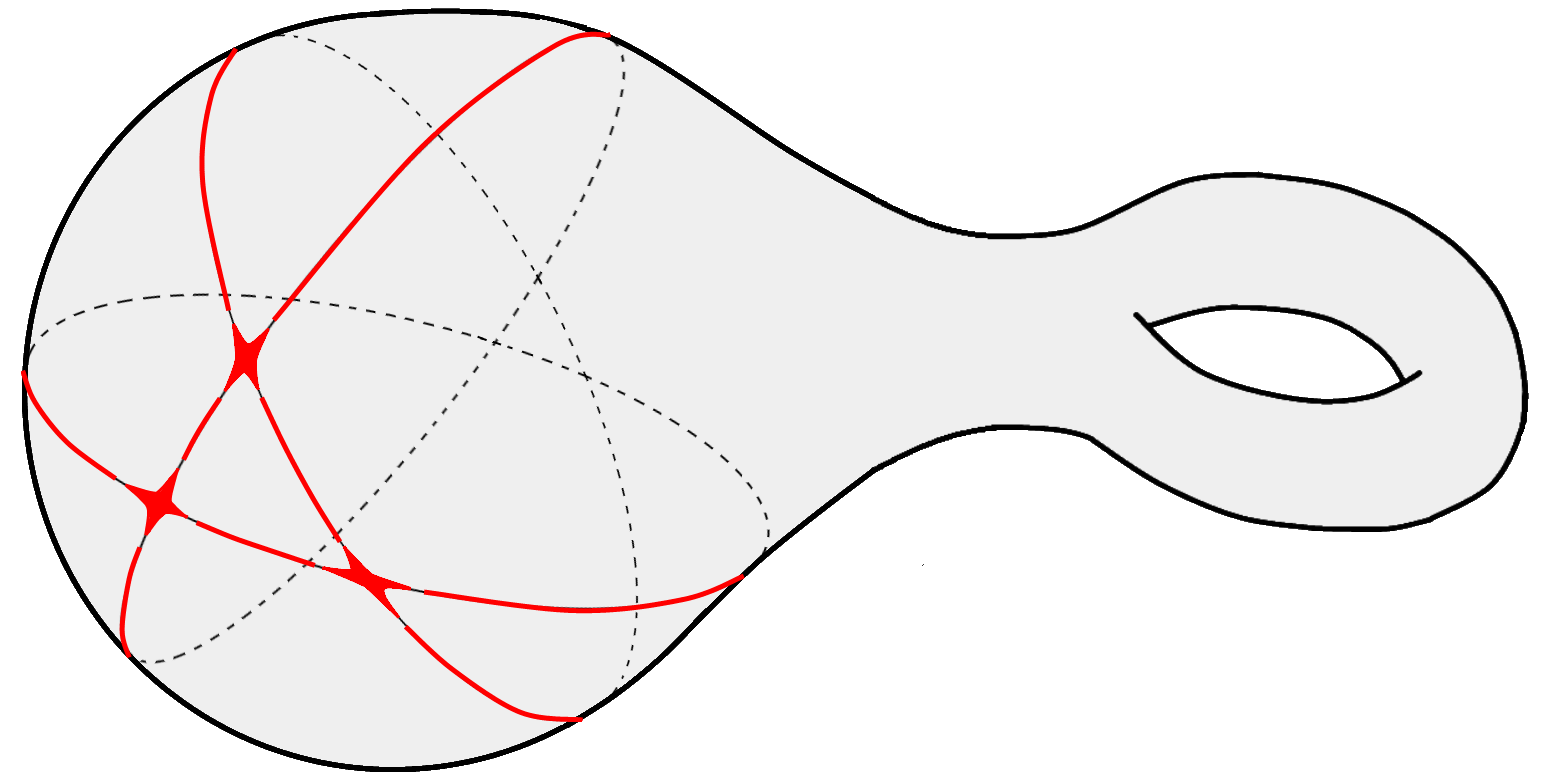}
	\end{center}\caption{By drawing closed curves on a 2-surface and replacing their junctions by ninja stars, every surface can be decomposed into discs, ninja stars, and intervals.}\label{fig:surface decomposition}
\end{figure}

\section*{Acknowledgements}

I am greatly indebted to Stephan Stolz for inviting me to give these lectures,
and thus providing the opportunity for this material to get written.
I am also very grateful to my student Jules Lamers for compiling a first draft of these notes,
and for drawing all the pictures.

	\bibliography{references}

\providecommand{\bysame}{\leavevmode\hbox to3em{\hrulefill}\thinspace}
\providecommand{\MR}{\relax\ifhmode\unskip\space\fi MR }
\providecommand{\MRhref}[2]{%
  \href{http://www.ams.org/mathscinet-getitem?mr=#1}{#2}
}
\providecommand{\href}[2]{#2}
\begin{thebibliography}{FRS04b}

\bibitem[Ati89]{Atiyah1989}
M.~Atiyah, \emph{{Topological quantum field theories}}, Inst.~Hautes \'Etudes
  Sci.~Publ.~Math. \textbf{68} (1989), 175--186.

\bibitem[BDH09]{BDH2009}
A.~Bartels, C.~Douglas, and A.~Henriques, \emph{{Conformal nets and local field theory}}, 2009, arXiv:0912.5307.

\bibitem[BDH13]{BDH2013}
A.~Bartels, C.L. Douglas, and A.~Henriques, \emph{Conformal nets {I}:
  coordinate-free nets}, arXiv:1302.2604, 2013.

\bibitem[Bel90]{Bell1990}
S.~Bell, \emph{Mapping problems in complex analysis and the $\overline
  \partial$-problem}, Bull. Amer. Math. Soc. \textbf{22} (1990), 233--259.

\bibitem[Ben67]{Ben67}
B.~Jean, \emph{{Introduction to bicategories}},
Reports of the {M}idwest {C}ategory {S}eminar, Springer Berlin, 1967, pp.~1--77.

\bibitem[CIZ87]{CIZ1987}
A.~Cappelli, C.~Itzykson, and J.-B. Zuber, \emph{Modular invariant partition
  functions in two dimensions}, Nuclear Phys. B \textbf{280} (1987), no.~3,
  445--465.

\bibitem[Fre93]{Freed1993}
D.S. Freed, \emph{{Extended structures in topological quantum field theory}},
  Quantum topology (Dayton), 1993, pp.~162--173.

\bibitem[FRS02]{FRS2002}
J.~Fuchs, I.~Runkel, and C.~Schweigert, \emph{{TFT construction of RCFT
  correlators. I. Partition functions}}, Nuclear Phys. B \textbf{646} (2002),
  no.~3, 353--497.

\bibitem[FRS04a]{FRS2004}
\bysame, \emph{{TFT construction of RCFT correlators. II. Unoriented world
  sheets}}, Nuclear Phys. B \textbf{678} (2004), no.~3, 511--637.

\bibitem[FRS04b]{FRS2004a}
\bysame, \emph{{TFT construction of RCFT correlators. III. Simple currents}},
  Nuclear Phys. B \textbf{694} (2004), no.~3, 277--353.

\bibitem[FRS05]{FRS2005}
\bysame, \emph{{TFT construction of RCFT correlators. IV. Structure constants
  and correlation functions}}, Nuclear Phys. B \textbf{715} (2005), no.~3,
  539--638.

\bibitem[FRS06]{FRS2006}
\bysame, \emph{{TFT construction of RCFT correlators. V. Proof of modular
  invariance and factorisation}}, Theory Appl. Categ. \textbf{16} (2006),
  no.~16, 342--433.

\bibitem[GF93]{GF1993}
F.~Gabbiani and J.~Fr\"ohlich, \emph{{Operator algebras and conformal field
  theory}}, Comm. Math. Phys. \textbf{155} (1993), no.~3, 569--640.

\bibitem[Haa75]{Ha1975}
U.~Haagerup, \emph{The standard form of von {N}eumann algebras}, Math. Scand.
  \textbf{37} (1975), no.~2, 271--283. \MR{0407615 (53 \#11387)}

\bibitem[Hua97]{Huang1997}
Y.-Z. Huang, \emph{{Two-dimensional conformal geometry and vertex operator
  algebras}}, Progress in Mathematics, vol. 148, Birkhauser Boston, Inc., 1997.

\bibitem[KL04a]{KL2004}
Y.~Kawahigashi and R.~Longo, \emph{Classification of local conformal nets.
  {C}ase {$c<1$}}, Ann. of Math. (2) \textbf{160} (2004), no.~2, 493--522.
  \MR{2123931 (2006i:81119)}

\bibitem[KL04b]{KL2004a}
\bysame, \emph{Classification of two-dimensional local conformal nets with
  {$c<1$} and 2-cohomology vanishing for tensor categories}, Comm. Math. Phys.
  \textbf{244} (2004), no.~1, 63--97. \MR{2029950 (2005d:81228)}

\bibitem[KS11]{KS2011}
A.~Kapustin and N.~Saulina, \emph{{Surface operators in 3d topological field
  theory and 2d rational conformal field theory}}, Mathematical foundations of
  quantum field theory and perturbative string theory, Proc. Sympos. Pure
  Math., vol.~83, Amer. Math. Soc., 2011, arXiv:1012.0911v1 [hep-th],
  pp.~175--198.

\bibitem[Lon08]{Longo2008}
R.~Longo, \emph{{Lectures on conformal nets II}}, 2008,
  \url{http://www.mat.uniroma2.it/~longo/Lecture%20Notes.html}.

\bibitem[LR04]{LR2004}
R.~Longo and K.-H. Rehren, \emph{{Local fields in boundary conformal QFT}},
  Rev. Math. Phys. \textbf{16} (2004), no.~7, 909--960.

\bibitem[Lur09]{Lurie(On-the-classification-of-topological-field-theories)}
Jacob Lurie, \emph{{On the classification of topological field theories}},
  Current developments in mathematics, 2008, Int. Press, Somerville, MA, 2009,
  pp.~129--280. \MR{2555928 (2010k:57064)}

\bibitem[Ost03]{Ostrik2003}
V.~Ostrik, \emph{{Module categories, weak Hopf algebras and modular
  invariants}}, Transform. Groups \textbf{8} (2003), no.~2, 177--206.

\bibitem[Pos03]{Posthuma-PhD}
H.~Posthuma, \emph{{Quantization of Hamiltonian loop group actions}}, Ph.D.
  thesis, 2003.

\bibitem[RS06]{RS2006}
D.~Radnell and E.~Schippers, \emph{{Quasisymmetric sewing in rigged
  Teichm\"uller space}}, Commun. Contemp. Math. \textbf{8} (2006), no.~4,
  481--534.

\bibitem[Seg]{Segal}
G.~Segal, \emph{{Sewing Riemann surfaces together}}, (unpublished preprint).

\bibitem[Seg88]{Segal1988}
\bysame, \emph{{The definition of conformal field theory}}, Differential
  geometrical methods in theoretical physics (Como, 1987), NATO Adv. Sci. Inst.
  Ser. C Math. Phys. Sci., vol. 250, Kluwer Acad. Publ., 1988, pp.~165--171.

\bibitem[Seg04]{Segal2004}
\bysame, \emph{Topology, geometry and quantum field theory}, London Math. Soc.
  Lecture Note Ser., vol. 308, ch.~{The definition of conformal field theory},
  pp.~421--577, Cambridge Univ. Press, 2004.

\bibitem[Seg07]{Segal2007}
\bysame, \emph{Elliptic cohomology}, London Math. Soc. Lecture Note Ser., vol.
  342, ch.~{What is an elliptic object?}, pp.~306--317, Cambridge Univ. Press,
  2007.

\bibitem[ST04]{ST2004}
S.~Stolz and P.~Teichner, \emph{Topology, geometry and quantum field theory},
  London Math. Soc. Lecture Note Ser., vol. 308, ch.~{What is an elliptic
  object?}, pp.~247--343, Cambridge Univ. Press, 2004.

\bibitem[TL97]{ToledanoLaredo1997}
V.~Toledano~Laredo, \emph{{Fusion of Positive Energy Representations of
  $LSpin(2n)$}}, Ph.D. thesis, University of Cambridge, 1997,
  \texttt{arXiv:math/0409044} [math.OA].

\bibitem[TUY89]{TUY89}
A.~Tsuchiya, K.~Ueno and Y.~Yamada, \emph{{Conformal field theory on universal family of stable curves with gauge symmetries}}. Integrable systems in quantum field theory and statistical mechanics, 459--566, Adv. Stud. Pure Math., 19, Academic Press, Boston, MA, 1989.

\bibitem[Was98]{Wassermann1998}
A.~Wassermann, \emph{{Operator algebras and conformal field theory. III. Fusion
  of positive energy representations of $LSU(N)$ using bounded operators}},
  Invent. Math \textbf{133} (1998), no.~3, 467--538.

\bibitem[Xu00]{Xu00}
F.~Xu, \emph{Jones-Wassermann subfactors for disconnected intervals}, 
Commun. Contemp. Math. \textbf{2} (2000), no. 3, 307--347.

\bibitem[Zhu96]{Zhu1996}
Y.~Zhu, \emph{{Modular Invariance of Characters of Vertex Operator Algebras}},
  J. Amer. Math. Soc. \textbf{9} (1996), 237--302.

\end{thebibliography}
\bibliographystyle{amsplain}

\end{document}